\documentclass[a4paper,onecolumn,10pt]{quantumarticle}
\pdfoutput=1

\usepackage[utf8]{inputenc}
\usepackage{graphicx}
\usepackage{dcolumn}
\usepackage{bm}
\usepackage{bbm}
\usepackage{dsfont}
\usepackage{amsmath}
\usepackage{amssymb}
\usepackage{physics}
\usepackage{orcidlink}
\usepackage{amsthm}
\usepackage{comment}
\usepackage{tikz}
\usetikzlibrary{arrows.meta, positioning}
\usepackage[capitalise]{cleveref}
\newtheorem{theorem}{Theorem}

\newtheorem{lemma}{Lemma}
\newtheorem{example}{Example}

\newtheorem{proposition}{Proposition}
\newtheorem{definition}{Definition}

\newcommand{\id}{\mathbbm{1}}

\usepackage[capitalise]{cleveref}

\begin{document}

\title{A journey through Flatland: What does the antiflatness of a spectrum teach us?}

\author{Barbara Jasser\orcidlink{0009-0004-5657-2806}}
\email{b.jasser@ssmeridionale.it}
\affiliation{Scuola Superiore Meridionale, Largo S. Marcellino 10, 80138 Napoli, Italy}
\affiliation{Istituto Nazionale di Fisica Nucleare (INFN), Sezione di Napoli, Italy}
\author{Daniele Iannotti\orcidlink{0009-0009-0738-5998}}
\email{d.iannotti@ssmeridionale.it}
\affiliation{Scuola Superiore Meridionale, Largo S. Marcellino 10, 80138 Napoli, Italy}
\affiliation{Istituto Nazionale di Fisica Nucleare (INFN), Sezione di Napoli, Italy}
\author{Alioscia Hamma\orcidlink{0000-0003-0662-719X}}
\email{alioscia.hamma@unina.it }
\affiliation{Scuola Superiore Meridionale, Largo S. Marcellino 10, 80138 Napoli, Italy}
\affiliation{Istituto Nazionale di Fisica Nucleare (INFN), Sezione di Napoli, Italy}
\affiliation{Dipartimento di Fisica `Ettore Pancini', Universit\`a degli Studi di Napoli Federico II, Via Cintia 80126, Napoli, Italy}

\maketitle

\begin{abstract}
We explore the concept of \textit{antiflatness} to characterize the structural fluctuations within the entanglement spectrum of a quantum state (i.e., the spectrum of its reduced density operator). 
As a measure of the interplay between entanglement and magic, two fundamental quantum resources, antiflatness provides second-order information about quantum correlations that standard average measures fail to capture. 
Recognizing that standard majorization theory fundamentally orders states by purity and is structurally blind to spectral fluctuations, we introduce a novel partial ordering known as antiflat majorization, based on the Rényi entropy spread. We define Flatness-Preserving Operations (FPOs), establishing new necessary conditions for state convertibility. 
Furthermore, we unify different measures of antiflatness—such as Capacity of Entanglement, Linear Rényi spread, and Logarithmic antiflatness—using the frameworks of escort distributions and Bregman divergences. 
We prove that the Capacity of Entanglement can be expressed as a second derivative of the Kullback-Leibler divergence along the escort trajectory, connecting it with the Quantum Fisher Information. Finally, we demonstrate that absolute maximal antiflatness is not achieved by a single universal state, but rather by a continuous Pareto frontier of extremal states with jump spectra, and we analyze the typicality of these spectral fluctuations using Haar, Bures-Hall and t-doped Clifford random state ensembles.
\end{abstract}

\section{Introduction}

\begin{figure*}[t]  
    \centering
    \includegraphics[width=\textwidth]{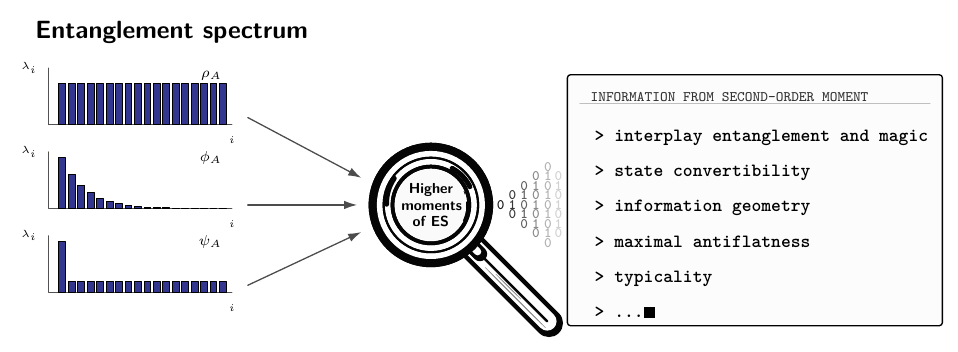}
    \caption{On the left different profiles of the eigenvalues $\{\lambda_i\}_{i=1}^r$ of a given reduced density matrix, with $\lambda_k \geq \lambda_j$ with $k<j$, showing (from top to bottom) the constant uniform, exponentially decaying and maximum antiflat spectrum respectively.
    On the right, a non-exhaustive list of information that can be extracted from the second-order moment of the ES.}
    \label{fig:spectra}
\end{figure*}
Correlations between systems are a cornerstone of quantum mechanics. Entanglement, in particular, provides a fundamental characterization of non-classical correlations among subsystems~\cite{Horodecki_2009, eisert2010colloquium, amico2008entanglement}. A key aspect of bipartite entanglement is encoded in the spectrum of the reduced density matrix (or of the associated modular Hamiltonian), namely the \textit{entanglement spectrum} (ES) \cite{PhysRevLett.101.010504}, whose eigenvalues contain information about the structure of quantum correlations. The ES has proved to be a powerful diagnostic tool: it allows one to identify topological order through edge-state correspondence~\cite{li2008entanglement, fidkowski2010entanglement, pollmann2010entanglement}, extract scaling dimensions in conformal field theories~\cite{calabrese2008entanglement, lauchli2013operator}, and distinguish ergodic thermalization from many-body localization~\cite{yang2015two, geraedts2016many}. The most widely used entanglement quantifiers, such as the von Neumann entropy or the subsystem purity between two parties, provide scalar summaries of the ES.
These correspond to the first moment of the ES and therefore provide a way to measure the average amount of quantum correlations between subsystems~\cite{de2019aspects}. Evaluating only the average of the ES, entanglement is blind to its internal fluctuations, it does not fully characterize the spectrum structure. Distinct spectra may share the same entropy or purity while displaying different distributions of spectral weights.

Although entanglement is a fundamental signature of quantumness and, in particular, a necessary ingredient for the violation of Bell inequalities~\cite{bell1964on, bell1966problem}, it captures only one aspect of quantum complexity. A complementary layer is associated with the failure of a quantum state to admit an efficient description within the stabilizer formalism, or equivalently, with its deviation from Clifford dynamics. This notion is formalized in the resource theory of non-stabilizerness, commonly referred to as magic~\cite{bravyi2007upper, leone2022StabilizerRenyiEntropy, zhu2016CliffordGroupFails, howard2017application, Turkeshi_2025, Magni_2025, Magni_2025_v1}. Stabilizer states and Clifford operations, while capable of generating highly entangled states, remain efficiently classically simulable according to the Gottesman--Knill theorem. Universal quantum computation therefore requires resources outside this framework, supplied for instance by magic states~\cite{howard2014contextuality, howard2017application}. In this sense, magic quantifies a form of quantum computational complexity that is not detected by entanglement alone. Indeed, highly entangled stabilizer states may have zero magic, while conversely, a product state can possess nonzero magic~\cite{gu2024magic, Iannotti_2025_interplay}. Thus, entanglement and non-stabilizerness characterize distinct, though intertwined, aspects of quantum many-body states. Consequently, a characterization based solely on standard entanglement measures is physically incomplete~\cite{cusumano2025nonstabilizerness} whenever the goal is to resolve the computational structure of quantum correlations. To fully capture this interplay, it is therefore useful to introduce diagnostics of the ES that are simultaneously sensitive to both entanglement and non-stabilizerness.

The observation that a stabilizer state has a flat ES~\cite{fattal2004entanglement, HAMMA200522} motivates the introduction of a general notion of \emph{antiflatness}---a measure of how much a spectrum deviates from uniformity. In this context, a \emph{flat spectrum}~\cite{watrous2018theory} corresponds to maximal uniformity: all non-zero eigenvalues are equal, e.g. pure or maximally mixed states. Spectra that are uneven, clustered, or hierarchical indicate that some subspaces dominate over others, leading to reduced effective dimensionality and nontrivial organization of quantum information. Conceptually, antiflatness measures provide second-order information about quantum correlations between subsystems when studying the reduced density matrix spectrum. While entropic quantities characterize the average spread or volume of uncertainty, antiflatness quantifies its fluctuations, the extent to which the spectral weights differ from one another. This makes antiflatness a sensitive probe of structure: it distinguishes states that share the same entropy but differ in how that entropy is distributed among eigenmodes. Specifically, it has been shown that measures of magic can be expressed in terms of the antiflatness of the ES of a pure quantum state~\cite{tirrito2024quantifying,Turkeshi_2023}, i.e. stabilizer states are a subset of flat states~\cite{gu2024magic}.
Moreover, when one restricts attention to the case of genuinely correlated (non-local) magic, namely the irreducible nonstabilizerness that cannot be removed by local basis changes, a tighter connection with the ES probed by antiflatness has been found~\cite{cao2024gravitational}. Such a relation has application in quantum many-body systems~\cite{jasser2025stabilizerSYK, viscardi2025interplayentanglementstructuresstabilizer, IannottiFreeferiomns, collura2026nonlocalnonstabilizernessfreefermion}, high energy physics~\cite{robin2026quantumcomplexitynewdirections, robin2025antiflatnessnonlocalmagictwoparticle, grieninger2026quantumcomplexitystringbreaking}, as well as CFTs, which, if they admit a holographic dual, imply connections with gravitational backreaction in the bulk~\cite{cao2024gravitational, grieninger2026nonlocalmagicholographicschwinger}. 

In this work, we study the properties of the ES through the lens of antiflatness. Our fundamental motivation is to use antiflatness as a bridge to explore the interplay between entanglement and magic~\cite{leone2025noncliffordcostrandomunitaries}. For a bipartition of a pure state, a non-trivial ES is a direct signature of entanglement.  At the same time, stabilizer states are characterized by a flat ES (yielding zero antiflatness)~\cite{fattal2004entanglement}. Deforming this spectrum strictly requires the presence of both entanglement and magic, even though their mere coexistence is not always sufficient~\cite{tirrito2024quantifying}. Antiflatness thus provides a natural framework to identify spectral features that remain invisible to standard entanglement measures but are sensitive to the computational complexity of the underlying quantum state. Beyond the specific connection to entanglement and magic, antiflatness offers a complementary language for describing quantum states through the geometry of their entanglement spectra and the fluctuations encoded therein.

The paper is organized as follows. 
In~\cref{Sec : Ordering}, we introduce a partial ordering of quantum states based on the spread of their Rényi entropies, shifting the focus from probability concentration to spectral inhomogeneity. 
Within the same section, we define Flatness-Preserving Operations (FPOs), which provide information-theoretic constraints on state convertibility~\cite{chitambar2019quantum, gour2024resources, brandao2015second, faist2018fundamental}. 
We then investigate the typicality of this convertibility framework by computing the corresponding accessible volume in illustrative low-dimensional cases, while emphasizing the difficulty of evaluating it in general.
In~\cref{Sec : Quantifiers}, we introduce and analyze several faithful quantifiers of antiflatness: the Capacity of Entanglement, the Linear Rényi Spread, and the Logarithmic Antiflatness. 
We compute their absolute maxima and show that the extremal spectra form a continuous Pareto frontier rather than a single universal maximizer. 
We also prove that these quantifiers are compatible with the antiflatness partial order introduced earlier.
In~\cref{Sec : Escort}, we reformulate these measures in terms of escort distributions. 
Using Bregman divergences between a state and its escort deformation, we show that the Capacity of Entanglement can be expressed as the second derivative of the Kullback--Leibler divergence along the escort trajectory in the limit $\alpha\to1$. 
This strengthens its connection with Quantum Fisher Information, identifying it as a susceptibility of the entanglement spectrum to thermal-like escort deformations.
In~\cref{Sec : Averages}, we analyze the typical behavior of entanglement-spectrum fluctuations under Haar, Bures--Hall, and $t$-doped Clifford random-state ensembles. 
We find that these spectral fluctuations vanish rapidly in the large-dimensional bipartite limit, refining the standard concentration-of-measure picture according to which the reduced state of a sufficiently small subsystem is overwhelmingly close to maximally mixed~\cite{Popescu_2006, Hayden_2007}.
In~\cref{sec:Rate_entanglement}, we derive an upper bound on the rate at which linear entanglement can be generated during dynamical evolution, expressed in terms of the Linear Rényi Spread. 
Conclusions and future directions are discussed in~\cref{Sec : Conclusions}.

\section{Ordering via Rényi spread}
\label{Sec : Ordering}
Let $\mathcal{D}(\mathcal{H})$ be the set of quantum states acting on a finite-dimensional Hilbert space $ \mathcal{H} \cong \mathbb{C}^d$, endowed with a tensor product structure such that $\mathcal{H} \simeq \mathcal{H}_A \otimes \mathcal{H}_B$ with  $d_A=\text{dim}(\mathcal{H}_A)$ and $d_B=\text{dim}(\mathcal{H}_B)$. Given a state $\rho \in \mathcal{D}(\mathcal{H})$, let $\Tr_B(\rho) \equiv \rho_A$ be its reduced density matrix. We are interested in characterizing the structure of $\rho_A$ in terms of the distribution of its eigenvalues.

The set of \emph{flat$_A$ states} is defined as the set of those states whose reduced density matrix is maximally uniform on its support, i.e., proportional to a projector
\begin{equation}
    \text{flat}_A:= \Big\{ \rho \in  \mathcal{D}(\mathcal{H}_A \otimes \mathcal{H}_{B}) \Big| \rho_A^2=\frac{\rho_A}{\text{rank}(\rho_A)} \Big\} \, .
\end{equation}
Equivalently, a state belongs to $\text{flat}_A$ if and only if can be written as $\rho_A = \frac{\Pi_A}{\text{rank}(\Pi_A)}$, where $\Pi_A$ is the projector onto the support of $\rho_A$. In this case, $\rho_A$ has a completely flat spectrum on its support, i.e., all its nonzero eigenvalues are equal. This condition implies that all Rényi entropies of $\rho_A$ coincide, providing an information-theoretic characterization of antiflatness. States that do not satisfy this condition exhibit non-uniform eigenvalue distributions in their reduced density matrix and are therefore referred to as \emph{antiflat$_A$ states}. These states encode nontrivial spectral fluctuations, which will be the main objective of this study.
Physical examples of flat$_A$ states are provided by stabilizer states~\cite{fattal2004entanglement} and, more generally, by quantum states whose reduced density matrix is proportional to a projector on their support. This situation appears in systems with topological order, where the ES may display exact or approximate degeneracies associated with long-range entanglement and edge-state correspondence~\cite{li2008entanglement, pollmann2010entanglement, HAMMA200522,flammia2009topological, oliviero2022stability}. In these cases, spectral flatness reflects the absence of additional structure within the support of $\rho_A$, while deviations from a flat$_A$ spectrum quantify the emergence of nontrivial spectral organization that can be used to diagnose topological order \cite{PhysRevB.88.125117, PhysRevLett.110.210602}.

The first result of this work is the introduction of a physically meaningful ordering of antiflat$_A$ states based on the spread of their Rényi entropies. Unlike standard approaches rooted in majorization theory, which organize states according to their purity, we instead focus on the fluctuations of the ES determined by antiflatness,~\cref{App : Failing}. To motivate this construction, we recall that for any density operator $\rho\in D(\mathcal{H})$, the Rényi entropy 
\begin{equation}
    S_{\alpha}(\rho)= \frac{1}{1-\alpha}\log(\Tr[\rho^\alpha])\,,
\end{equation}
is a monotonically decreasing function with respect to the parameter $\alpha$. Every flat$_A$ state has all Rényi entropies of the reduced density operator that evaluate to the same constant value $\log(\text{rank}(\Pi_A))$. Let the Rényi curve of a state be the continuum set of $S_{\alpha}(\rho)$ values given by $\alpha$, we have that the more antiflat a state is, the steeper its corresponding Rényi curve becomes. As showed in Fig.~\ref{fig: REcurves}, antiflatness is encoded in the steepness of the Rényi curves allowing to introduce an  ordering relation based on the Rényi Entropy Spread~\cite{HaydenWinter2003, vandam2002renyientropicboundsquantumcommunication}, defined as
\begin{equation}
    \Delta_{\alpha\beta}(\rho) \equiv S_{\alpha}(\rho) - S_{\beta}(\rho) ,
\end{equation}
with $\alpha < \beta $. This quantity is faithful with respect to the set of flat$_A$ states; for any reduced state $\rho_A$, one has
\begin{equation}
    \Delta_{\alpha\beta}(\rho_A)=0 \quad \forall\,\alpha<\beta
    \qquad \Longleftrightarrow \qquad
    \rho\in \mathrm{FLAT}_A .
\end{equation}
Indeed, all Rényi entropies coincide if and only if the nonzero eigenvalues of $\rho_A$ are all equal, i.e. if and only if $\rho_A$ is proportional to a projector onto its support.
\begin{definition}[Antiflatness Ordering]
\label{def:Antiflatness Ordering}
    Let $\rho, \sigma \in \mathcal{D}(\mathcal{H})$ be two density operators. We say that $\rho$ is \textit{antiflat majorized} by $\sigma$, denoted by $\rho \prec_{AF} \sigma$, if and only if the Rényi entropy spread of $\rho_A$ is bounded above by that of $\sigma_A$ for all valid parameters. Formally,
    \begin{equation}
        \rho \prec_{AF} \sigma \iff  \Delta_{\alpha\beta}(\rho_A) \le \Delta_{\alpha\beta}(\sigma_A) \quad \forall \,  \alpha < \beta \; \ .
    \end{equation}
\end{definition}

This ordering compares states according to the spread of their entanglement spectra rather than their concentration. In particular, it distinguishes states that are indistinguishable from the perspective of standard entropy measures but differ in higher-order spectral fluctuations, see ~\cref{App : Discrimination via Antiflatness}. Under this ordering, all $\mathrm{flat}_A$ states are minimal elements, as they exhibit zero spread for every pair $(\alpha,\beta)$.

The relation $\prec_{AF}$ establishes a \textit{partial} order rather than a \textit{total} order on the space of density matrices. While the Rényi spread captures the antiflatness hierarchy~\footnote{The ``antiflatness hierarchy'' refers to the partially ordered set (poset) induced on the space of density matrices $\mathcal{D}(\mathcal{H})$ by the relation $\prec_{AF}$. A state $\rho$ occupies a strictly lower hierarchical tier than $\sigma$ if and only if its resource content is globally bounded across all statistical scales, satisfying $\Delta_{\alpha\beta}(\rho) \le \Delta_{\alpha\beta}(\sigma)$ for all $\alpha < \beta$. Because this hierarchy constitutes a partial order rather than a total order, the resource structure inevitably branches.} for a broad class of state transitions, there exist pairs of quantum states that remain strictly incomparable under this framework. To understand why, let us analyze the Rényi entropy as a continuous function of the parameter $\alpha$. For two arbitrary states $\rho$ and $\sigma$, their respective Rényi curves may intersect or exhibit different concavities, see Fig.~\ref{fig: REcurves} and~\cref{App : Discrimination via Antiflatness} for more details. Because the ordering $\rho \prec_{AF} \sigma$ requires the inequality $\Delta_{\alpha\beta}(\rho_A) \le \Delta_{\alpha\beta}(\sigma_A)$ holds strictly for \textit{all} valid pairs $\alpha < \beta$, finding even a single pair where this condition reverses is sufficient to break the comparability. Specifically, there could exist specific parameters $(\alpha_1, \beta_1)$ and $(\alpha_2, \beta_2)$ such that $\Delta_{\alpha_1\beta_1}(\rho_A) < \Delta_{\alpha_1\beta_1}(\sigma_A) \quad \text{but} \quad \Delta_{\alpha_2\beta_2}(\rho_A) > \Delta_{\alpha_2\beta_2}(\sigma_A) \ .$ When this crossing occurs, the Rényi spread of $\rho_A$ is not globally bounded by that of $\sigma_A$, nor vice versa. Therefore, neither $\rho \prec_{AF} \sigma$ nor $\sigma \prec_{AF} \rho$ holds and they are deemed incomparable. It is crucial to distinguish this incomparability, which arises from crossing spreads, from the necessary crossing of absolute entropy curves that occurs during iso-purity transformations, as detailed in~\cref{App : Maj freezed}. The antiflatness relation ($\prec_{AF}$) imposes a family of simultaneous constraints across all Rényi intervals $(\alpha,\beta)$ with $\alpha<\beta$, rather than a single scalar ordering criterion. Therefore, the state space $\mathcal{D}(\mathcal{H})$ does not form a straight line in ordering states according to their antiflatness, but rather a branching structure. This induces a hierarchy in which flat$_A$ states—antiflat-majorized by all others—form the minimal elements, while the maximal elements lie on a Pareto frontier (see Sec.~\ref{Sec : Pareto}).

\begin{figure}[t]
    \centering
    \includegraphics[width=0.9\linewidth]{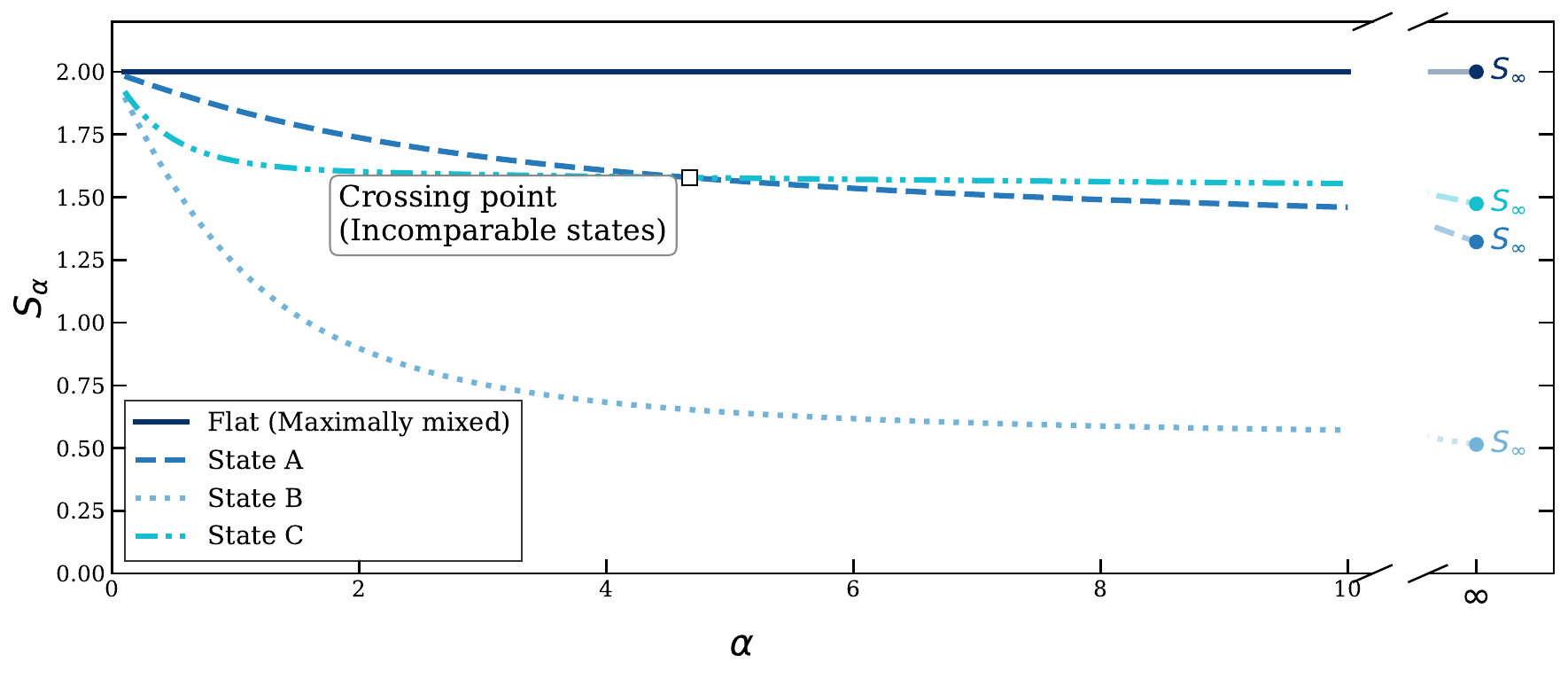}
    \caption{Rényi entropy curves $S_\alpha(\rho_A)$ for representative states in $d_A=4$ with spectra:
    Flat $(0.25,0.25,0.25,0.25)$; State A $(0.4,0.3,0.2,0.1)$; State B $(0.7,0.2,0.08,0.02)$; and State C $(0.36,0.34,0.29,0.01)$. 
    Flat$_A$ states yield constant Rényi curves, corresponding to vanishing Rényi spread. 
    The crossing between States A and C provides a visual indication that pointwise comparison of Rényi entropies is insufficient; their incomparability under $\prec_{AF}$ is confirmed by the spread inequalities, since $\Delta_{1,2}(A)>\Delta_{1,2}(C)$ whereas $\Delta_{0.5,1}(C)>\Delta_{0.5,1}(A)$. This illustrates the partial nature of the antiflat ordering and shows that spectral steepness over a restricted range of $\alpha$ is not, by itself, a criterion for antiflat comparability.
    }
    \label{fig: REcurves}
\end{figure}

\subsection{State Convertibility and Flatness-Preserving Operations}
\label{sec:convertibility}
In several quantum resource theories, state convertibility is characterized by a partial order on spectra. 
An example is bipartite entanglement; where Nielsen's theorem~\cite{nielsen1999conditions} states that a pure bipartite state $|\psi\rangle$ can be deterministically transformed into another pure quantum state $|\phi\rangle$ via Local Operations and Classical Communication (LOCC) if and only if their reduced spectra satisfy $\rho_\psi \prec \rho_\phi$. One can ask whether antiflat majorization $\rho \prec_{AF} \sigma$ similarly dictates state transformations under a specific class of free operations.

We define \emph{Flatness-Preserving Operations} (FPOs) as the maximal set of completely positive trace-preserving (CPTP) maps that do not increase the Rényi entropy spread of the reduced state. More precisely, a CPTP map $\Phi$ belongs to the class of FPOs if
\begin{equation}
    \Delta_{\alpha\beta}\big((\Phi(\rho))_A\big)
    \leq
    \Delta_{\alpha\beta}(\rho_A),
    \qquad
    \forall \rho \in \mathcal{D(H)},\quad \forall \alpha<\beta .
\end{equation}
Thus, FPOs are the operations that cannot transform a state with few ES antiflatness into one with more antiflatness. With this definition, antiflat majorization provides an immediate \textit{necessary condition} for pure-state convertibility under FPO.
Let $|\psi\rangle,|\phi\rangle \in \mathcal{H}_A\otimes\mathcal{H}_B$ be bipartite pure states, and let
$\rho^\psi,\rho^\phi$ denote the associated density matrices. Then
\begin{equation}
|\psi\rangle \xrightarrow{\mathrm{FPO}} |\phi\rangle
\quad \Longrightarrow \quad
\rho^\phi \prec_{AF} \rho^\psi ,
\end{equation}
see~\cref{App: prop necessary condition}. 
Consequently, the relation $\prec_{AF}$ acts as a selection rule for quantum dynamics. If $\rho^\phi \not\prec_{AF} \rho^\psi$, either because the spread strictly increases or because their Rényi spreads intersect, making them incomparable, the transformation $|\psi\rangle \to |\phi\rangle$ is forbidden under free operations. 
Establishing \textit{sufficiency}, proving that $\rho^\phi \prec_{AF} \rho^\psi$ implies the existence of an FPO channel from $|\psi\rangle$ to $|\phi\rangle$, remains a highly non-trivial problem, as pointed out in ~\cref{App: Hardeness of finding FPOs}. As an example, local unitaries provide FPOs as they simply do not change the Schmidt coefficients and hence the ES; since they cannot connect flat$_A$ states with different Schmidt rank. This rank obstruction can be bypassed only by enlarging the system with an additional flat$_A$ auxiliary resource. For instance, tensoring with a maximally entangled EPR pair multiplies the Schmidt rank while preserving spectral flatness, allowing one to match flat spectra of multiplicatively related ranks in an enlarged Hilbert space.

A distinction must be made between FPOs and standard majorization-preserving or unital dynamics. Unital CPTP maps are characterized by standard majorization of spectra: if $\sigma=\Phi(\rho)$ for a unital channel, then $\lambda(\sigma) \prec \lambda(\rho)$. This implies monotonicity of Schur-concave entropies, but it does not imply monotonicity of the Rényi spreads $\Delta_{\alpha\beta}$. Therefore, unital CPTP maps are not, in general, a subclass of FPOs. This already indicates that antiflat convertibility defines a notion of spectral dynamics distinct from standard majorization.

To isolate this difference from trivial changes in the overall mixedness of the state, it is useful to fix the purity of the reduced density matrix. We define the iso-purity manifold
\begin{equation}
    \mathcal{M}_P
    =
    \left\{
    \rho\in\mathcal{D}(\mathcal{H})
    \ \middle|\
    \Tr(\rho_A^2)=P
    \right\},
\end{equation}
with $1/d_A\leq P\leq 1$. For $P>1/d_A$, the full-rank stratum of $\mathcal{M}_P$ is a smooth manifold, while the complete physical set has the structure of a compact stratified manifold; see~\cref{App : Iso-Purity Manifold}. Any two states $\rho,\sigma\in\mathcal{M}_P$ satisfy
\begin{equation}
    S_2(\rho_A)=S_2(\sigma_A).
\end{equation}
Fixing the purity does not fix the rank. The purity and the max-entropy
\begin{equation}
    S_0(\rho_A)=\log\rank(\rho_A)
\end{equation}
impose different constraints, and states with different support sizes may have the same purity. This distinction is crucial: standard majorization becomes frozen on $\mathcal{M}_P$, while antiflat ordering can still impose nontrivial spectral constraints.
\begin{proposition}[Freezing of Standard Majorization]
\label{thm:freezing}
If a state transformation on $\mathcal{M}_P$ obeys standard majorization ($\rho \prec \sigma$), the transformation is trivially restricted to Local Isometries.   
\end{proposition}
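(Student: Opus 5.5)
Our strategy is to show that, on $\mathcal{M}_P$, standard majorization between reduced spectra forces equality of the spectra up to permutation. A transformation that leaves the ES invariant is then, by Schmidt decomposition, implemented by local isometries. Throughout, let $\lambda=\lambda(\rho_A)$ and $\mu=\lambda(\sigma_A)$ be the ordered eigenvalue vectors, padded with zeros to a common length $n=d_A$. Assume $\lambda\prec\mu$, meaning $\sum_{i\le k}\lambda_i^\downarrow\le\sum_{i\le k}\mu_i^\downarrow$ for all $k$, with equality at $k=n$. Assume also $\Tr\lambda^2=\Tr\mu^2=P$.

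\textbf{Step 1.} We use strict Schur-convexity of the purity. The function $f(x)=\sum_i x_i^2$ is strictly convex and symmetric, hence strictly Schur-convex. By the Hardy--Littlewood--P\'olya/Rado theorem, $\lambda\prec\mu$ gives $\lambda=D\mu$ for some doubly stochastic matrix $D$, which is a convex combination of permutation matrices $\sum_j p_j \Pi_j$. Then
\begin{equation}
\textstyle f(\lambda)=f\bigl(\sum_j p_j\Pi_j\mu\bigr)\le \sum_j p_j f(\Pi_j\mu)=f(\mu),
\end{equation}
and strict convexity makes this inequality strict unless all $\Pi_j\mu$ with $p_j>0$ coincide. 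Since $f(\lambda)=f(\mu)=P$, equality holds, so $\lambda=\Pi\mu$ and therefore $\lambda^\downarrow=\mu^\downarrow$. Alternatively, we can use the elementary identity
\begin{equation}
\textstyle \sum_i\mu_i^2-\sum_i\lambda_i^2=\sum_{k}(\mu^\downarrow_k-\mu^\downarrow_{k+1}+\lambda^\downarrow_k-\lambda^\downarrow_{k+1})(M_k-L_k),
\end{equation}
where $M_k$ and $L_k$ are the partial sums. This is a sum of nonnegative terms, so its vanishing forces $M_k=L_k$ for all $k$.

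\textbf{Step 2.} The spectra are now equal, including rank, since zero entries are matched. For pure bipartite states $|\psi\rangle,|\phi\rangle$ with identical Schmidt coefficients, we write the Schmidt decompositions $\sum_i\sqrt{\lambda_i}|a_i\rangle|b_i\rangle$ and $\sum_i\sqrt{\lambda_i}|a'_i\rangle|b'_i\rangle$. Then $V_A\otimes V_B$ with $V_A|a_i\rangle=|a'_i\rangle$ and $V_B|b_i\rangle=|b'_i\rangle$ maps one state to the other. Here $V_A$ and $V_B$ are local isometries, which are unitaries when the dimensions agree. For mixed $\rho$ we read the statement at the level of the reduced state: $\sigma_A=V_A\rho_AV_A^\dagger$. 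Consequently every Rényi entropy, and in particular every $\Delta_{\alpha\beta}$, is unchanged.

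\textbf{Main obstacle.} The main obstacle is the equality case in Step~1, namely making sure that the strictness argument survives degeneracies and differing ranks. The partial-sum identity handles this cleanly, because each coefficient is nonnegative and the telescoping is exact. A secondary subtlety is interpretive: for mixed global states, ``restricted to local isometries'' should be stated as equality of the reduced spectra, and hence unitary equivalence on $A$. We will make this precise in the write-up.
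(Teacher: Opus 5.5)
Your proposal is correct and follows the paper's route. Strict Schur-convexity of the purity, together with $\Tr\rho_A^2=\Tr\sigma_A^2=P$, forces $\lambda(\rho_A)^\downarrow=\lambda(\sigma_A)^\downarrow$, and you additionally make explicit two steps the paper leaves implicit: the equality case (via Birkhoff and strict Jensen) and the Schmidt-decomposition step to local unitaries.

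One small caveat concerns your alternative partial-sum argument. Vanishing of the sum only gives $(\mu^\downarrow_k-\mu^\downarrow_{k+1}+\lambda^\downarrow_k-\lambda^\downarrow_{k+1})(M_k-L_k)=0$ termwise. So on runs where this coefficient vanishes you still need a short extra step: there both spectra are constant, so $M_k-L_k$ is affine in $k$ and vanishes at the run's endpoints. Contrary to your ``main obstacle'' remark, it is the Jensen route, not the partial-sum identity, that handles degeneracies automatically.
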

\begin{proof}
Let $\sigma, \rho \in \mathcal{M}_P$, implying $S_2(\rho_A) = S_2(\sigma_A)$. If $\sigma$ transforms to $\rho$ via LOCC or Unital maps, standard majorization requires $\rho \prec \sigma$. By the strict Schur-convexity of purity, $\rho \prec \sigma$ and $S_2(\rho_A) = S_2(\sigma_A)$ requires that the ordered spectra are identical; $\lambda(\rho_A)^\downarrow=\lambda(\sigma_A)^\downarrow$. Thus, no non-trivial standard majorization flow can occur.
\end{proof}
Proposition~\ref{thm:freezing} shows that LOCC-type or unital majorization-based transformations cannot modify the ES nontrivially once the purity is fixed. Antiflat ordering behaves differently, because it does not compare spectra through entropy monotonicity alone, but through the spread of the whole Rényi curve.
To make this idea even clearer, let $\sigma \xrightarrow{\mathrm{FPO}} \rho$
be a transformation inside $\mathcal{M}_P$. A necessary condition is $\rho_A \prec_{AF} \sigma_A$, namely
\begin{equation}
    \Delta_{\alpha\beta}(\rho_A)
    \leq
    \Delta_{\alpha\beta}(\sigma_A),
    \qquad
    \forall \alpha<\beta .
\end{equation}
Because $S_2(\rho_A)=S_2(\sigma_A)$ on $\mathcal{M}_P$, the point $\alpha=2$ acts as a natural anchor for the Rényi curves. For the pair $(\alpha,\beta)=(1,2)$, one obtains
\begin{align}
    S_1(\rho_A)-S_2(\rho_A)
    \leq
    S_1(\sigma_A)-S_2(\sigma_A)
    \Longrightarrow
    S_1(\rho_A) \leq
    S_1(\sigma_A).
\end{align}
Similarly, for $(\alpha,\beta)=(2,\infty)$,
\begin{align}
    S_2(\rho_A)-S_\infty(\rho_A)
    \leq
    S_2(\sigma_A)-S_\infty(\sigma_A)
    \Longrightarrow
    S_\infty(\rho_A)\geq
    S_\infty(\sigma_A).
\end{align}
Therefore, at fixed purity, decreasing antiflatness does not correspond to an entropy-increasing process. 
The von Neumann entropy of the final state is forced to decrease, while the min-entropy increases. 
Since $S_\infty(\omega_A)=-\log \lambda_{\max}(\omega_A)$, the latter condition is equivalent to
\begin{equation}
    \lambda_{\max}(\rho_A)\leq \lambda_{\max}(\sigma_A).
\end{equation}
Thus, an iso-purity FPO, if it exists, must reshape the spectrum in a correlated way: it cannot just make the state more mixed in the standard majorization sense.
This endpoint analysis is the manifestation of a more general structural simplification. 
Although antiflat majorization is defined through a two-parameter family of inequalities, it can be reformulated as a one-parameter monotonicity condition.

\begin{proposition}[One-parameter reformulation of antiflat majorization]
\label{prop:one_parameter_AF}
Let $\rho_A$ and $\sigma_A$ be two reduced density matrices, and define
\begin{equation}
    G_{\sigma|\rho}(\alpha)
    :=
    S_\alpha(\sigma_A)-S_\alpha(\rho_A),
    \qquad
    \alpha\in(0,\infty).
\end{equation}
Then we have 
\begin{equation}
    \rho_A\prec_{AF}\sigma_A \iff G_{\sigma|\rho}(\alpha)
    \geq
    G_{\sigma|\rho}(\beta),
    \qquad
    \forall \alpha<\beta .
\end{equation}
Equivalently, $\rho_A\prec_{AF}\sigma_A$ if and only if the function $\alpha\mapsto G_{\sigma|\rho}(\alpha)$ is non-increasing on $(0,\infty)$.
\end{proposition}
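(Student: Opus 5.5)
The equivalence follows by rearranging the defining inequality of \cref{def:Antiflatness Ordering}. I will therefore write the Rényi spread difference explicitly in terms of $G_{\sigma|\rho}$ and observe that the two conditions are literally the same family of inequalities, indexed by the same set of pairs $\alpha<\beta$.

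Fix any pair $0<\alpha<\beta<\infty$. Since all quantities involved are finite real numbers for a finite-dimensional density matrix, I can rearrange:
\begin{equation}
    \Delta_{\alpha\beta}(\rho_A)\le\Delta_{\alpha\beta}(\sigma_A)
    \iff S_\alpha(\rho_A)-S_\beta(\rho_A)\le S_\alpha(\sigma_A)-S_\beta(\sigma_A)
    \iff G_{\sigma|\rho}(\beta)\le G_{\sigma|\rho}(\alpha).
\end{equation}
The last step moves $S_\alpha(\rho_A)$ to the right and $S_\beta(\sigma_A)$ to the left. Quantifying over all $\alpha<\beta$ on both sides gives the stated equivalence. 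The pair-wise inequality $G(\alpha)\ge G(\beta)$ for all $\alpha<\beta$ is by definition the statement that $\alpha\mapsto G_{\sigma|\rho}(\alpha)$ is non-increasing on $(0,\infty)$, which yields the second formulation.

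The only point needing care is the range of the parameters. \cref{def:Antiflatness Ordering} says ``all valid parameters'', and the surrounding discussion also uses the endpoints $\alpha=1$ and $\beta=\infty$.
\begin{itemize}
\item At $\alpha=1$, $S_\alpha$ is understood as its continuous extension, the von Neumann entropy, so no issue arises on $(0,\infty)$.
\item If one also wants to include $\alpha=0$ or $\beta=\infty$, I would note that $\alpha\mapsto S_\alpha$ is continuous and has finite limits $S_0=\log\operatorname{rank}$ and $S_\infty=-\log\lambda_{\max}$. The right limit at $0$ may jump to the rank value, but it is defined pointwise. The monotonicity of $G$ on the open interval therefore extends to the closed one by taking limits in the inequalities, or equivalently by applying the same rearrangement directly at those points.
\end{itemize}
I expect no real obstacle beyond stating this convention clearly.
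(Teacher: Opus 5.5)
Your proof is correct and matches the paper's own argument. Both rewrite $\Delta_{\alpha\beta}(\rho_A)\le\Delta_{\alpha\beta}(\sigma_A)$ pairwise as $G_{\sigma|\rho}(\alpha)\ge G_{\sigma|\rho}(\beta)$ and then quantify over all $\alpha<\beta$. Your endpoint remark goes slightly beyond the paper, which works only on $(0,\infty)$, and it is fine because non-strict inequalities survive limits; note only that with $S_0:=\log\operatorname{rank}$ one has $S_\alpha\to S_0$ continuously as $\alpha\to0^+$, so there is no jump to worry about.
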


\begin{proof}
By definition, $\rho_A\prec_{AF}\sigma_A$ means that
$\Delta_{\alpha\beta}(\rho_A) \leq \Delta_{\alpha\beta}(\sigma_A), \forall \alpha<\beta $.
By definition the condition on the Rényi spreads becomes
\begin{equation}
    S_\alpha(\rho_A)-S_\beta(\rho_A)
    \leq
    S_\alpha(\sigma_A)-S_\beta(\sigma_A).
\end{equation}
Rearranging terms gives
\begin{equation}
    S_\alpha(\sigma_A)-S_\alpha(\rho_A)
    \geq
    S_\beta(\sigma_A)-S_\beta(\rho_A).
\end{equation}
Which is
\begin{equation}
    G_{\sigma|\rho}(\alpha)
    \geq
    G_{\sigma|\rho}(\beta),
    \qquad
    \forall \alpha<\beta .
\end{equation}
Hence $G_{\sigma|\rho}$ is non-increasing. 
The converse follows by reversing the same steps.
\end{proof}

Proposition~\ref{prop:one_parameter_AF} shows that antiflat comparability is not determined by a pointwise comparison of the Rényi entropies themselves. 
Rather, it is determined by the monotonicity of the relative Rényi profile $G_{\sigma|\rho}(\alpha)$. 
Equivalently,
\begin{equation}
    \Delta_{\alpha\beta}(\sigma_A)
    -
    \Delta_{\alpha\beta}(\rho_A)
    =
    G_{\sigma|\rho}(\alpha)-G_{\sigma|\rho}(\beta).
\end{equation}
Therefore, the spread inequalities are satisfied for all $\alpha<\beta$ precisely when $G_{\sigma|\rho}$ is non-increasing.

On the iso-purity manifold, this reformulation becomes especially transparent. 
Indeed, for $\rho,\sigma\in\mathcal{M}_P$ one has
\begin{equation}
    G_{\sigma|\rho}(2)
    =
    S_2(\sigma_A)-S_2(\rho_A)
    =
    0.
\end{equation}
If $\rho_A\prec_{AF}\sigma_A$, then $G_{\sigma|\rho}$ is non-increasing. 
Since it vanishes at $\alpha=2$, one obtains $G_{\sigma|\rho}(\alpha)\geq 0$ and $S_\alpha(\sigma_A)\geq S_\alpha(\rho_A)$ when $ \alpha < 2$, otherwise $G_{\sigma|\rho}(\alpha)\leq 0$ and $S_\alpha(\sigma_A)\leq S_\alpha(\rho_A)$.
Thus, on $\mathcal{M}_P$, antiflat ordering enforces an ordered crossing of the two Rényi curves at $\alpha=2$. 
The crossing itself is not a signal of incomparability; incomparability occurs only when the corresponding function $G_{\sigma|\rho}$ fails to be monotone.

A further consequence appears when the support size is fixed. 
If $\rho,\sigma\in\mathcal{M}_P$ have the same rank, then
\begin{equation}
    S_0(\rho_A)=S_0(\sigma_A),
    \qquad
    S_2(\rho_A)=S_2(\sigma_A).
\end{equation}
Hence $G_{\sigma|\rho}(0)=G_{\sigma|\rho}(2)=0$
where the value at $\alpha=0$ is understood as the continuous limit. 
If, in addition, $\rho_A\prec_{AF}\sigma_A$, then $G_{\sigma|\rho}$ is non-increasing. 
A non-increasing function that takes the same value at two points must be constant between them. 
Therefore,
\begin{equation}
    G_{\sigma|\rho}(\alpha)=0,
    \quad 
    S_\alpha(\rho_A)=S_\alpha(\sigma_A),
    \qquad
    0<\alpha<2.
\end{equation}
Since the Rényi entropies on an open interval determine the spectrum of a finite-dimensional density matrix, it follows that
\begin{equation}
    \lambda(\rho_A)^\downarrow
    =
    \lambda(\sigma_A)^\downarrow .
\end{equation}
Consequently, any nontrivial antiflat-ordered pair inside $\mathcal{M}_P$ must necessarily involve a change of rank:
\begin{equation}
    \rho_A \prec_{AF}\sigma_A \quad\land \quad \lambda(\rho_A)^\downarrow\neq \lambda(\sigma_A)^\downarrow
\Longrightarrow \quad
    \rank(\rho_A)\neq\rank(\sigma_A).
\end{equation}

This gives a clear interpretation of FPO dynamics at fixed purity. 
Standard majorization is frozen on $\mathcal{M}_P$, because equality of purity together with majorization forces equality of the ordered spectra. 
By contrast, antiflat ordering is not governed by Schur-concavity alone. 
If a nontrivial FPO transformation exists on $\mathcal{M}_P$, it must act by changing the support structure of the reduced spectrum. 
In this sense, iso-purity FPOs behave as spectral compressors: they reduce the Rényi spread while preserving purity, forcing a nontrivial reshaping of the spectrum rather than a standard entropy-increasing flow. The proof of the rank obstruction and the computational advantage of the one-parameter formulation are discussed in~\cref{app:one_parameter_af_proofs}.

\subsection{Statistical Geometry and the Accessible Antiflatness Volume}
\label{sec: accessible_volume}

Having defined such state convertibility, one can determine the typicality of such a framework. Given a fixed target pure state $\ket{\phi}$, one can ask: what is the probability that a pure state $\ket{\psi}$ drawn uniformly at random from the bipartite Hilbert space $\mathcal{H}_A \otimes \mathcal{H}_B$ can deterministically be converted into $\ket{\phi}$? For standard transformations via LOCC, Nielsen's theorem dictates that this probability is exactly the probability that the Haar-random state $\sigma_A \equiv \Tr_B(\ket{\phi} \bra{\phi})$ majorizes $\rho_A \equiv \Tr_B(\ket{\psi} \bra{\psi})$. In the literature, this is known as the \textit{Accessible Volume}~\cite{sauerwein2015source}. If we sample global pure states according to the Haar measure of the unitary group, the eigenvalues of the reduced state follow the Lloyd-Pagels (Wishart-Laguerre) probability density function~\cite{lloyd_pagels_1988}, $P_{\text{Haar}}(\vec{x})$. The probability is obtained by integrating this measure over the majorization set
\begin{equation}
    P(\sigma \succ \rho) = \int_{M_a(\rho)} P_{\text{Haar}}(\vec{x}) \, d\vec{x} \ .
\end{equation}
By the Birkhoff-von Neumann theorem, the accessible set $M_a(\rho) = \{ \vec{x} \in \mathbf{\Delta_{d_A-1}} \mid \vec{x} \succ \text{spec}(\rho_A) \}$ forms a strictly convex, flat-faced polytope within the probability simplex $\mathbf{\Delta_{d_A-1}}$. Despite this well-defined flat geometry, computing this integral analytically for arbitrary dimensions $d_A$ remains a highly non-trivial open problem in random matrix theory, typically requiring advanced Monte Carlo methods.

We now extend this statistical inquiry to our resource theory. We seek the probability that a Haar-random state $\sigma$ \textit{antiflat-majorizes} a fixed state $\rho$ ($\sigma \succ_{AF} \rho$), thereby satisfying the necessary antiflat condition for an FPO transformation $\sigma \xrightarrow{\text{FPO}} \rho$
\begin{equation}
\label{eq: iso_purity_integral}
    P(\sigma \succ_{AF} \rho) = \int_{\mathcal{A}_{AF}(\rho)} P_{\text{Haar}}\big(\vec{x}\big) \, d\vec{x} \ ,
\end{equation}
where we define the accessible antiflatness set
\begin{equation}
    \mathcal{A}_{AF}(\rho) = \left\{ \vec{x} \in \mathbf{\Delta_{d_A-1}} \;\middle|\; \Delta_{\alpha \beta}(\vec{x}) \ge \Delta_{\alpha\beta}(\rho_A),\, \forall \, \alpha < \beta \right\} \ .
\end{equation}
Unlike standard majorization, the region defined by antiflatness majorization is \textit{not} a polytope. Because the constraints are defined by Rényi entropies involving logarithms evaluated over a continuous and infinite parameter space ($\alpha < \beta$), the set $\mathcal{A}_{AF}(\rho)$ constitutes a highly non-linear, continuous geometric volume. More comments on its non-polytope nature are detailed in~\cref{App: Probability of Antiflatness}. Consequently, integrating the Haar measure directly over this unconstrained, non-convex hyper-volume is analytically intractable.

We now make the accessible antiflatness probability explicit in low dimensions. 
Let the target reduced spectrum be $\vec r=\lambda(\rho_A)$, and let $\vec\lambda=\lambda(\sigma_A)$ denote the spectrum of the reduced density matrix of a Haar-random bipartite pure state, see \ref{app:Haar average}. 
For generic dimensions $d_A\leq d_B$, the induced Lloyd--Pagels measure~\cite{lloyd_pagels_1988} on the eigenvalues is
\begin{equation}
d\mu(\vec\lambda)
\propto
\delta\!\left(1-\sum_{i=1}^{d_A}\lambda_i\right)
\prod_{1\leq i<j\leq d_A}(\lambda_j-\lambda_i)^2 
\prod_{i=1}^{d_A}
\lambda_i^{d_B-d_A}\,
d\lambda_i .
\label{eq:Haar measure}
\end{equation} 
The generic accessible antiflatness probability is therefore
\begin{equation}
\label{eq:generic_AF_probability}
    \mathbb{P}(\sigma\succ_{AF}\rho)
    =
    \int_{\mathcal{A}_{AF}(\vec r)}
    d\mu(\vec\lambda) \; .
\end{equation}
For $d_A=2$, an ordered spectrum can be written as $\vec\lambda=(x,1-x), x\in[1/2,1]$, and the target as $\vec r=(r,1-r), r\in[1/2,1]$.
For $d_A=d_B=2$, the induced density reduces to $p_2(x)=6(2x-1)^2$.
If the target is flat on its support, namely $r=1/2$ or $r=1$, then all its Rényi spreads vanish $\Delta_{\alpha\beta}(\vec r)=0, \; \; \forall \alpha<\beta $.
Since Rényi entropies are non-increasing in $\alpha$, one has
$\Delta_{\alpha\beta}(\vec\lambda)\geq0$ for every spectrum. 
Therefore every Haar-random state antiflat-majorizes a flat target
\begin{equation}
    \mathbb{P}(\sigma_A\succ_{AF}\rho_A)
    =
    \int_{1/2}^{1}6(2x-1)^2\,dx
    =
    1.
\end{equation}
By contrast, if the target is non-flat, $r\in(1/2,1)$, the full antiflat order is rigid in the binary case: the accessible set collapses to the single point $x=r$. 
Since this set has zero measure with respect to the continuous Lloyd--Pagels density,
\begin{equation}
    \mathbb{P}(\sigma_A\succ_{AF}\rho_A)
    =
    \int_{\{r\}}6(2x-1)^2\,dx
    =
    0.
\end{equation}
The same conclusion holds for $d_A=2$ and arbitrary $d_B=K>2$ in~\cref{App: Probability of Antiflatness}. 
The result above shows that, for any non-flat target state, the Haar probability of sampling a state that antiflat-majorizes it is zero. This reflects the strong rigidity of the full antiflatness order: generic pairs of states are typically incomparable under this criterion. In particular, Haar-random reduced states concentrate around the typical spectral-fluctuation regime described in~\cref{app:Haar average}, rather than forming a nested hierarchy under antiflat majorization.

\section{Quantifiers}
\label{Sec : Quantifiers}
Typically, to quantify how resourceful a state is, it is necessary to define a notion of distance from the set of free states. However, the precise way in which this notion is formalized depends on the chosen framework. In the following we will not rely explicitly on a distance-based construction, but rather introduce suitable quantifiers capturing anti-flatness. By establishing an axiomatic approach, we outline the essential properties that any valid quantifier must satisfy. We are aware that the axiomatic and operational approaches to quantum resource theories do not necessarily coincide, as recently outlined in the context of magic~\cite{heimendahl2021axiomatic}. However, because the axiomatic framework rigorously establishes the ultimate information-theoretic bounds permitted by the geometry of the free states, we adopt it as our starting point. Based on these criteria, we identify a range of candidate measures appropriate for analyzing anti-flatness.

A well-known measure of the antiflatness of a spectrum is the so-called Capacity of Entanglement, previously introduced in~\cite{yao2010entanglement, schliemann2011entanglement}, and further developed in~\cite{de2019aspects}.
\begin{definition}
    The capacity of entanglement of $\rho \in \mathcal{D}(\mathcal{H}_A \otimes \mathcal{H}_B) $ is defined as
    \begin{equation}
    \begin{split}
        \mathcal{V}_A(\rho)&:=tr(\rho_A \log^2 \rho_A)-tr^2(\rho_A \log \rho_A)\equiv \text{Var}_{\rho_A} (-\log \rho_A) \, .
    \end{split}
        \label{eq:capacity}
    \end{equation}
\end{definition}
By treating the reduced density matrix as a canonical thermal state with modular Hamiltonian $H_A = -\log \rho_A$, the capacity of entanglement acts as the statistical heat capacity of the ES. Following standard conventions in the literature~\cite{de2019aspects, cao2024gravitational}, it is formally expressed through the analytic continuation of the partition function $Z_\alpha = \text{Tr}(\rho_A^\alpha)$. Specifically, it corresponds to the second derivative of the cumulant generating function evaluated at $\alpha=1$
\begin{equation}
    \mathcal{V}_A(\rho) = \left. \frac{\partial^2}{\partial \alpha^2} \log \text{Tr}(\rho_A^\alpha) \right|_{\alpha=1} \ .
\end{equation}
Because the Rényi entropy is defined as $S_\alpha(\rho_A) = \frac{1}{1-\alpha} \log \text{Tr}(\rho_A^\alpha)$, substituting the identity $\log \text{Tr}(\rho_A^\alpha) = (1-\alpha)S_\alpha(\rho_A)$ into the expression, we have
\begin{equation}
    \mathcal{V}_A(\rho) = \lim_{\alpha \to 1} \frac{\partial^2}{\partial \alpha^2} \Big[ (1-\alpha) S_\alpha(\rho_A) \Big] = -2 \left. \frac{\partial S_\alpha(\rho_A)}{\partial \alpha} \right|_{\alpha=1} \ .
\end{equation}
This differential formulation proves that while the standard Von Neumann entropy (the $\alpha \to 1$ limit) quantifies the mean of the modular energy, the capacity of entanglement extracts the first-order response (the derivative) of the Rényi entropy, thereby probing the spectrum's variance and the structural fluctuations of the quantum correlations.
Such a quantity has a lot of applications spanning from condensed matter to quantum gravity, with a non-exhaustive list given by~\cite{de2019aspects, schliemann2011entanglement, yao2010entanglement, wei2023average, okuyama2021capacity, cao2024gravitational,  Huang_2023, Shrimali_2022, Bhattacharjee_2021, verlinde2020spacetime}.
It has a variety of mathematical properties that can be found summarised in~\cite{boes2022variance} for $\rho \in  \mathcal{D}(\mathcal{H})$, i.e. (Faithfulness) $\mathcal{V}_A(\rho)=0 $ if and only if $\rho$ is proportional to a projector; (Positivity) $\mathcal{V}_A(\rho)\geq 0$; (Additivity) $\mathcal{V}_A(\rho_1 \otimes \rho_2)=\mathcal{V}_A(\rho_1)+\mathcal{V}_A(\rho_2)$; and (Uniform continuity) $|\mathcal{V}_A(\rho) - \mathcal{V}_A(\rho')|^2 \leq K \log^2(d_A) \epsilon $, with $\epsilon:= \frac{1}{2} \norm{\rho_A - \rho'_A}_1$ and $K$ a constant. Notice that, the Capacity of Entanglement was first introduced to distinguish the states with and without topologically protected gapless ES~\cite{yao2010entanglement} and it is defined in the same way as one defines heat capacity for thermal systems. It turns out that this quantity has an holographic interpretation. Via the Ryu–Takayanagi prescription~\cite{Ryu_2006}, entanglement entropy is given by the area of a minimal surface in Anti-de Sitter. Including quantum fluctuations of this surface leads to the capacity of entanglement~\cite{Nakaguchi_2016}, which captures fluctuations of the entropy itself and thus probes quantum gravitational effects. More generally, a backreacting cosmic brane construction relates this to modular entropy, closely connected to Rényi entropies, providing an alternative definition of the capacity of entanglement~\cite{Dong_2016}.

Recently in~\cite{tirrito2024quantifying}, a new measure which quantifies how much a state is far from being flat has been introduced. For analogy to our ordering, we will call it \textit{Linear Rényi spread} (LRS).
\begin{definition}
    The Linear Rényi spread of $\rho$ is defined as
    \begin{equation}
    \begin{split}
        \mathcal{F}_A(\rho)&:=tr(\rho_A^3)-tr^2(\rho_A^2) \equiv \text{Var}_{\rho_A} (\rho_A)= 1 - 2 T_3(\rho_A)- (1-T_2(\rho_A))^2\, \
    \end{split}
    \end{equation}
    with $T_\alpha (\rho)= \frac{tr(\rho^\alpha)-1}{1-\alpha}$ the Tsallis $\alpha$ entropy.
\end{definition}
This quantity, when evaluated on the reduced density operator has an interesting connection with the non-stabilizerness of a state~\cite{tirrito2024quantifying}. 
From the resource theory point of view, also this quantity is a good measure, since it features the following properties for $\rho \in  \mathcal{D}(\mathcal{H})$, (Faithfulness) $\mathcal{F}_A(\rho)=0 $ if and only if $\rho$ is proportional to a projector ; (Positivity) $\mathcal{F}_A(\rho)\geq 0$; (Lipschitz continuity) $|\mathcal{F}_A(\rho) - \mathcal{F}_A(\rho')| \leq 7 \epsilon $, with $\epsilon:= \frac{1}{2} \norm{\rho_A - \rho'_A}_1$ \cite{cao2024gravitational}. 
Additionally, one can show that it is a subadditive measure $\mathcal{F}_A(\rho) \mathcal{F}_A(\psi)\leq \mathcal{F}_A(\rho \otimes \psi) \leq \mathcal{F}_A(\rho) + \mathcal{F}_A(\psi)$; as shown in ~\cref{App: LinearRenyiSpread_properties}. 

As pointed out in~\cite{HaydenWinter2003}, any other difference between Rényi entropies is a measure of antiflatness of the ES. For the sake of concreteness, among all Rényi spreads, we choose a peculiar one
\begin{definition}
    The logarithmic antiflatness of $\rho$ is defined as
    \begin{equation}
        \log(\Lambda_\rho):=\log(\frac{tr(\rho_A^3)}{tr^2(\rho_A^2)})= 2(S_2(\rho_A)-S_3(\rho_A))\, .
    \end{equation}
\end{definition}
Also the logarithmic anti-flatness $\log(\Lambda_\rho)$ is a good quantifier, as shown in ~\cref{App: LogLambda_properties}, namely it satisfies (Faithfulness)  $\log(\Lambda_\rho)=0 $ if and only if $\rho_A$ is proportional to a projector ; (Positivity) $\log(\Lambda_\rho)\geq 0$; (Additivity) $\log(\Lambda_{\rho_1 \otimes \rho_2})=\log(\Lambda_{\rho_1})+\log(\Lambda_{\rho_2})$; (Uniform continuity) $|\log(\Lambda_\rho) - \log(\Lambda_{\rho'})| \leq 2 d (2 + 3 d) \epsilon $, with $\epsilon:= \frac{1}{2} \norm{\rho_A - \rho'_A}_1$.
Similar properties extend to all $\Delta_{\alpha \beta}(\rho_A)$ following from those of Rènyi entropies.
The above properties ensure that all our measures, $\mathcal{V}_A(\rho)$, $\mathcal{F}_A(\rho)$, $\log(\Lambda_\rho)$, and $\Delta_{\alpha \beta}(\rho_A)$ faithfully quantify deviations from flatness. 

The next step is therefore to characterize which spectra maximize these deviations. Since all quantities increase with the spread of the eigenvalues, one expects that the largest possible value should be attained by states whose spectrum is as far as possible from being uniform. 
In the finite-dimensional setting, it reduces to a discrete jump between the largest eigenvalue and the rest of the spectrum, as shown in Fig.~\ref{fig:spectra}, as it was first showed for $\mathcal{V}_A(\rho)$ in~\cite{reeb2015tight}.
\begin{theorem}[Universal Maximal Antiflatness]
\label{thm:Maximal_Antiflatness}
Let $\rho$ be a state on a $d$-dimensional Hilbert space with $d \geq 2$. The maximum deviations from flatness, quantified by the Capacity of Entanglement $\mathcal{V}_A(\rho)$, the Linear Rényi spread $\mathcal{F}_A(\rho)$ and the Logarithmic antiflatness $\log(\Lambda_\rho)$, are universally achieved by states possessing a jump spectrum of the form:
\begin{equation}
    \text{spec}( \rho_A) = \Big( 1- r_{\max}, \frac{r_{\max}}{d_A-1},\ldots,  \frac{r_{\max}}{d_A-1} \Big) \ ,
    \label{Eq : max spec}
\end{equation}
where the exact optimal weight $r_{\max} \in [0, 1/2]$ depends on the chosen quantifier.
Specifically, the measures are globally bounded by $0 \leq \mathcal{F}_A(\rho) \leq N_{\mathcal{F}_A}(d_A) < \frac{27}{256}$, $0 \leq \log(\Lambda_\rho) \leq N_{\Lambda}(d_A)< \log 2$ and $0 \leq \mathcal{V}_A(\rho) \leq N_{\mathcal{V}_A}(d_A)< 1/4 \log_2  (d_A- 1)^2 + 1/ \ln^2(2)$. In the limit of $d_A$, we have $r_{\max}^{\mathcal{F}_A} \approx \frac{1}{4}$ and $r_{\max}^{\log(\Lambda_\rho)} \approx r_{\max}^{\mathcal{V}_A} \approx \frac{1}{2}$.

\end{theorem}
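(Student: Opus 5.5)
The plan is to reduce each maximization to a problem on the probability simplex, show that the optimizer has at most two distinct nonzero eigenvalue levels, and then pin down which two-level spectrum wins. All three quantifiers depend only on $\vec\lambda=\mathrm{spec}(\rho_A)$. Each can be written through power sums $p_k=\sum_i\lambda_i^k$ or moments of $\log\lambda_i$:
\[
\mathcal{F}_A=p_3-p_2^2,\qquad \log\Lambda=\log p_3-2\log p_2,\qquad \mathcal{V}_A=\textstyle\sum_i\lambda_i\log^2\lambda_i-\big(\sum_i\lambda_i\log\lambda_i\big)^2 .
\]
So we maximize a smooth function on the compact simplex $\Delta_{d_A-1}$, and a maximizer exists. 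Faithfulness gives the lower bounds $0$, attained on flat spectra.

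\textbf{Structure of the optimizer.} We write the Lagrange (KKT) conditions on the support of $\vec\lambda$:
\[
\partial_{\lambda_i}Q=\mu \quad\text{for all } \lambda_i>0 .
\]
The derivative has a common form in all three cases. For $\mathcal{F}_A$ it is $3\lambda_i^2-4p_2\lambda_i$. For $\log\Lambda$ it is $3\lambda_i^2/p_3-4\lambda_i/p_2$. For $\mathcal{V}_A$, with $t=-\log\lambda$, it is $t_i^2-2t_i(1+S_1)$ up to constants. In each case the condition reads $g(\lambda_i)=\mu$ with $g$ a quadratic in $\lambda$ (or in $\log\lambda$). A quadratic takes each value at most twice, so every stationary spectrum has at most two distinct nonzero eigenvalue levels, $a$ with multiplicity $k$ and $b$ with multiplicity $m$. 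Boundary points (zero eigenvalues) are handled by the same argument on lower-dimensional faces.

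\textbf{Comparing two-level spectra.} The next step is to show that among two-level spectra the best choice is a single large eigenvalue together with $d_A-1$ equal small ones, i.e.\ the form of Eq.~\eqref{Eq : max spec}. Two facts drive this. First, filling all $d_A$ slots (full rank) is optimal: for $\mathcal{V}_A$ this follows because the variance grows with the log-gap, which is largest when the small mass is spread over the most states. Second, taking $k=1$ for the large level is optimal: we view $Q$ as a function of $(k,m,a)$, extend it to real $k$, and check monotonicity. This is the step I expect to be hardest. The cleanest route is to note that a two-level spectrum is a two-point distribution of the "energy" variable, so each quantifier becomes a variance-type expression $w(1-w)\,\phi(\text{gap})^2$, where $w$ is the weight on the large level. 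Merging the large level into one eigenvalue maximizes the gap at fixed weights. This reproduces Reeb--Wolf~\cite{reeb2015tight} for $\mathcal{V}_A$, and I will adapt their argument to $\mathcal{F}_A$ and $\log\Lambda$.

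\textbf{One-dimensional optimization.} With the form of Eq.~\eqref{Eq : max spec} fixed, each quantifier reduces to a one-variable function of $r$. For $\mathcal{F}_A$:
\[
\mathcal{F}_A(r)=(1-r)^3+\frac{r^3}{(d-1)^2}-\Big[(1-r)^2+\frac{r^2}{d-1}\Big]^2 .
\]
As $d\to\infty$ this tends to $(1-r)^3-(1-r)^4$, which is maximized at $1-r=3/4$. That gives $r_{\max}\to1/4$ and the value $27/256$, and finite-$d$ corrections lower it, so the bound is strict. For $\log\Lambda$ the limit is $\log\big((1-r)^3/(1-r)^4\big)=-\log(1-r)$, pushed toward $r=1/2$ by the constraint $r\in[0,1/2]$; this gives $\log 2$ in the limit, again approached from below. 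For $\mathcal{V}_A$ the spectrum is a two-point distribution with probabilities $1-r$ and $r$. Its variance is
\[
r(1-r)\Big[\log\tfrac{(1-r)(d-1)}{r}\Big]^2 ,
\]
which is maximized near $r=1/2$ for large $d$. Bounding it yields $\tfrac14\log_2^2(d-1)+1/\ln^2 2$.

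Finally, we check the endpoints $r=0$ and $r\to1/2$ boundaries against the interior critical points. This confirms that the interior maximum is global and that the stated asymptotic values of $r_{\max}$ hold.
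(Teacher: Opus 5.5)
Your route is the paper's: Lagrange stationarity is a quadratic in $\lambda_i$ (or $\log\lambda_i$), so there are at most two levels; you then reduce to the jump family, optimize in $r$, and quote Reeb--Wolf for $\mathcal{V}_A$. Your merging step is sound for all three quantifiers, since at fixed weights $w,1-w$ each one increases as the larger level grows and the smaller one shrinks. The gap, which the paper shares, is the restriction $r\in[0,1/2]$. This restriction is not without loss of generality. Permutation invariance only decides which level is the singleton. Your merging step makes the singleton the \emph{largest} eigenvalue, but its weight $w=1-r$ can lie anywhere in $(1/d_A,1]$. So the one-dimensional problem lives on $r\in[0,1-1/d_A)$. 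For $\mathcal{F}_A$ and $\mathcal{V}_A$ the gap can be closed in a line each:
\begin{itemize}
\item $\mathcal{F}_A=w(1-w)(wd_A-1)^2/(d_A-1)^2$ gives $\mathcal{F}_A(w)-\mathcal{F}_A(1-w)\propto d_A(d_A-2)(2w-1)$.
\item $\mathcal{V}_A=w(1-w)\big(\log(d_A-1)+x\big)^2$ with $x=\log\frac{w}{1-w}$ gives $\mathcal{V}_A(w)-\mathcal{V}_A(1-w)=4w(1-w)\,x\log(d_A-1)$.
\end{itemize}
Both differences are $\ge 0$ for $w\ge 1/2$, so for these two quantifiers the optimum does sit at $r\le 1/2$. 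Strictness of $27/256$ is also immediate from $\mathcal{F}_A<w^3(1-w)\le 27/256$ for $1/2\le w<1$.

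For $\log\Lambda_\rho$ the step genuinely fails, and no argument can repair it, because the claimed bound is false. Your own limit $-\log(1-r)$ is strictly increasing in $r$. So the optimum you find at $r=1/2$ is just the artificial constraint binding; the same holds for the paper's boundary value $r^{\Lambda}_{\max}=1/2$ for $d_A\ge 7$.

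Going past $r=1/2$ gives a counterexample. Take $d_A=100$ and $r=9/10$, i.e. spectrum $(\tfrac1{10},\tfrac1{110},\dots,\tfrac1{110})$. Then $\mathrm{tr}\rho^2=\tfrac1{55}$ and $\mathrm{tr}\rho^3=\tfrac{13}{12100}$, hence $\Lambda_\rho=\tfrac{13}{4}>2$.

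The general behaviour follows from two estimates:
\begin{itemize}
\item $\mathrm{tr}\rho^3\le\lambda_{\max}\mathrm{tr}\rho^2$ yields $\Lambda_\rho\le\min(d_A\lambda_{\max},1/\lambda_{\max})\le\sqrt{d_A}$.
\item The jump spectrum with singleton weight $\sqrt{3/d_A}$ gives $\Lambda_\rho\simeq\tfrac{3\sqrt3}{16}\sqrt{d_A}$.
\end{itemize}
Hence $N_\Lambda(d_A)=\tfrac12\log d_A+O(1)$, and $r^{\log\Lambda}_{\max}\to 1$, not $1/2$. What your argument correctly establishes for $\log\Lambda_\rho$ is that the maximizer is a jump spectrum. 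The optimization must then run over the full range of $r$, and the $\log 2$ bound and the $r_{\max}\approx 1/2$ asymptotics in the statement have to be dropped.
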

The proofs for these bounds are deferred to~\cref{App : Maximal_Antiflatness}. They rely on the method of Lagrange multipliers (analogous to the approach in \cite{reeb2015tight}), demonstrating that the required spectral shape is universal, while the specific optimal weights account for the differences in the statistical scales probed by individual measures.

\subsection{Comments on the maxima}
\label{Sec : Pareto}

Since there are infinitely many ways to distort a spectrum away from the uniform distribution, the possibility of finding a distribution as the unique absolute maximum across all measures of antiflatness vanishes. These infinitely different ways of distorting a spectrum are what the parameter $r_{\max}$ in Eq.~\ref{Eq : max spec} controls.  Let $\rho_A(r)$ define the family of states possessing the jump spectrum $(1-r, \frac{r}{d-1}, \dots, \frac{r}{d-1})$, which characterize the extramal states, we can define
\begin{definition}[Pareto Optimal Antiflat$_A$ states]
A quantum state $\rho \in \mathcal{D}(\mathcal{H})$ is called Pareto optimal (or non-dominated) with respect to antiflatness, if there exists no other state $\sigma \in \mathcal{D}(\mathcal{H})$ that strictly antiflat majorizes it. Formally, $\rho$ is Pareto optimal if $ \nexists \; \; \sigma \in \mathcal{D}(\mathcal{H}) \quad \text{such that} \quad \rho \precneqq_{AF} \sigma $.
\end{definition}
The \textit{Pareto Frontier}, denoted as $\mathcal{P}_{AF}$, is the exact mathematical set of all such Pareto optimal states
\begin{equation}
    \mathcal{P}_{AF} = \Big\{ \rho \in \mathcal{D}(\mathcal{H}) \;\Big|\; \nexists \; \; \sigma \text{ s.t. } \rho \precneqq_{AF} \sigma \Big\} \ .
\end{equation}
This frontier is not a discrete set of isolated points, but a \textit{continuous} one-dimensional manifold.
For any two states $\rho_A(r_1), \rho_A(r_2) \in \mathcal{P}_{AF}$, moving from $r_1$ to $r_2$ might increase the spectral spread at the macroscopic scale (e.g., maximizing $\mathcal{F}_A(\rho)$), but it will need a decrease in the spread at a localized microscopic scale (e.g., the interval $\Delta_{2,3}$ captured by $\log(\Lambda_\rho)$).
 
\begin{proposition}[Global Bounding by the Pareto Frontier]
\label{prop:pareto_bounding}
For any arbitrary quantum state $\sigma \in \mathcal{D}(\mathcal{H})$ that is strictly sub-optimal (i.e., $\sigma \notin \mathcal{P}_{AF}$), there exists at least one extremal state $\rho \in \mathcal{P}_{AF}$ that strictly antiflat-majorizes it. Formally:
\begin{equation}
    \forall \, \sigma \notin \mathcal{P}_{AF}, \quad \exists \; \rho \in \mathcal{P}_{AF} \quad \text{such that} \quad \sigma \precneqq_{AF} \rho \ .
\end{equation}
\end{proposition}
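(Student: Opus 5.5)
The plan is to argue by compactness and a strictly monotone scalarization of the order $\prec_{AF}$, rather than to construct the dominating state explicitly. Since $\prec_{AF}$ depends only on the ordered spectrum of the reduced state, I will work on the probability simplex $\mathbf{\Delta_{d_A-1}}$, which is compact. Fix $\sigma\notin\mathcal{P}_{AF}$ and consider its upper set
\begin{equation}
U(\sigma):=\{\vec x\in\mathbf{\Delta_{d_A-1}}\mid \Delta_{\alpha\beta}(\sigma_A)\le\Delta_{\alpha\beta}(\vec x)\ \ \forall\,0<\alpha<\beta\le\infty\}.
\end{equation}
It contains $\lambda(\sigma_A)$, so it is nonempty. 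For fixed $0<\alpha<\beta\le\infty$, the map $\vec x\mapsto\Delta_{\alpha\beta}(\vec x)$ is continuous on the simplex: $\sum_i x_i^\alpha$ is continuous and strictly positive for $\alpha>0$, and the limits $\alpha=1$ and $\alpha=\infty$ are continuous as well. Hence $U(\sigma)$ is an intersection of closed sets, so it is closed and therefore compact.

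Next I introduce the scalar functional
\begin{equation}
\Phi(\vec x):=\int_{0<\alpha<\beta<\infty} w(\alpha,\beta)\,\Delta_{\alpha\beta}(\vec x)\,d\alpha\,d\beta ,
\end{equation}
with a strictly positive, integrable weight $w$, for example $w=e^{-\alpha-\beta}$. Since $0\le\Delta_{\alpha\beta}\le\log d_A$, dominated convergence makes $\Phi$ continuous on the simplex. It therefore attains its maximum on $U(\sigma)$ at some $\vec x^\ast$; let $\rho$ be any state with $\lambda(\rho_A)=\vec x^\ast$.

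I then prove two claims.
\begin{itemize}
\item[(i)] If $\vec x\prec_{AF}\vec y$ and the relation is strict, then $\Phi(\vec x)<\Phi(\vec y)$. By Proposition~\ref{prop:one_parameter_AF}, strictness means that $G_{y|x}$ is non-increasing and not constant. So some pair $(\alpha_0,\beta_0)$ has $\Delta_{\alpha_0\beta_0}(\vec y)>\Delta_{\alpha_0\beta_0}(\vec x)$. Joint continuity of $\Delta_{\alpha\beta}$ in $(\alpha,\beta)$ extends this strict inequality to an open neighbourhood, and $w>0$ there, so the integral gap is positive.
\item[(ii)] $\rho\in\mathcal{P}_{AF}$. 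If some $\tau$ had $\rho\precneqq_{AF}\tau$, then transitivity (immediate from the definition) would put $\lambda(\tau_A)\in U(\sigma)$. By (i) we would get $\Phi(\tau)>\Phi(\rho)$, contradicting maximality.
\end{itemize}
Finally, $\sigma\prec_{AF}\rho$ holds by construction. For strictness, $\sigma\notin\mathcal{P}_{AF}$ gives some $\tau$ with $\sigma\precneqq_{AF}\tau$, so $\Phi(\rho)\ge\Phi(\tau)>\Phi(\sigma)$. This rules out $\rho\prec_{AF}\sigma$: that would force all spreads of $\rho$ and $\sigma$ to coincide, and hence $\Phi(\rho)=\Phi(\sigma)$.

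I expect the main obstacle to be the endpoints of the parameter range, in particular $\alpha\to0$. There $S_0=\log\mathrm{rank}$ is discontinuous on the simplex, which threatens the closedness of $U(\sigma)$ exactly in the rank-changing situations the iso-purity analysis showed to be essential. My first attempt is to read Definition~\ref{def:Antiflatness Ordering} over $\alpha\in(0,\infty]$, as in Proposition~\ref{prop:one_parameter_AF}, where every constraint is continuous. If the endpoint $\alpha=0$ must be kept, I would note that $\Delta_{0\beta}=\sup_{\alpha>0}\Delta_{\alpha\beta}$, by monotonicity of $S_\alpha$ in $\alpha$. The $\alpha=0$ constraint is therefore implied by the $\alpha>0$ constraints, so nothing is lost.

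A secondary point is the meaning of ``strictly''. Antiflat majorization in both directions forces the spectra to be equal up to reordering: the spreads then coincide, so $G$ is constant, and matching $S_\infty$ or normalization pins down the constant. Under that reading, strictness in the statement means exactly $\lambda(\rho_A)^\downarrow\neq\lambda(\sigma_A)^\downarrow$, which is consistent with how (i) is used.
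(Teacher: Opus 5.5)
Your argument is correct (with one caveat below) and takes a genuinely different route from the paper. The paper builds a chain $\sigma\precneqq_{AF}\sigma_1\precneqq_{AF}\sigma_2\precneqq_{AF}\cdots$ and appeals to compactness, boundedness of the monotones, and Zorn's lemma. It never shows that chains have upper bounds with respect to $\prec_{AF}$, which is the hypothesis Zorn's lemma actually needs. Boundedness of a scalar monotone does not supply such a bound, and a countable strictly increasing chain need not terminate. Your proof provides the missing ingredients directly. First, you show the upper set $U(\sigma)$ is closed; here you correctly observe that the $\alpha=0$ constraints are implied by the $\alpha>0$ ones, since $\Delta_{0\beta}=\sup_{0<\alpha<\beta}\Delta_{\alpha\beta}$. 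Second, you use a continuous functional $\Phi$ that is strictly increasing along the strict order, so its maximizer on the compact set $U(\sigma)$ is automatically a maximal element above $\sigma$. This avoids Zorn's lemma altogether and singles out a specific frontier point. The same two ingredients are what would be needed to make the paper's Zorn argument rigorous.

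The caveat concerns your secondary remark, which is false. Mutual antiflat majorization does not force equal spectra, so $\prec_{AF}$ is only a preorder. For instance, $(\tfrac12,\tfrac12,0,\ldots,0)$ and $(1,0,\ldots,0)$ both have identically vanishing spreads. More generally, if $\vec y$ is obtained from $\vec x$ by splitting every eigenvalue into two equal halves, then $S_\alpha(\vec y)=S_\alpha(\vec x)+\log 2$ for every $\alpha$. All spreads therefore coincide although the spectra differ, and neither normalization nor $S_\infty$ fixes the constant.

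Under your stated reading of strictness ($\prec_{AF}$ with different spectra), both of your main steps fail:
\begin{itemize}
\item Step (i) fails because both flat examples have $\Phi=0$.
\item Step (ii) fails because a maximizer $\vec x^\ast$ of rank at most $d_A/2$ lies below its split version, which belongs to $U(\sigma)$ and has the same value of $\Phi$.
\end{itemize}
The fix is to define $\rho\precneqq_{AF}\tau$ as $\rho\prec_{AF}\tau$ and $\tau\not\prec_{AF}\rho$, i.e.\ $G_{\tau|\rho}$ non-increasing and non-constant. This is exactly what your step (i) and your final strictness step actually use, and with it the proof goes through verbatim. The paper's proof makes the same antisymmetry assumption without comment when it speaks of a partially ordered set, so this clarification is needed there too.
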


\begin{proof}
The proof relies on the compactness of the state space and the transitivity of the partial order. Let $\sigma$ be an arbitrary state such that $\sigma \notin \mathcal{P}_{AF}$. By the definition of Pareto optimality, because $\sigma$ is not non-dominated, there must exist at least one state $\sigma_1$ that strictly antiflat-majorizes it: $\sigma \precneqq_{AF} \sigma_1$. 
If $\sigma_1 \in \mathcal{P}_{AF}$, the proposition is immediately satisfied by setting $\rho = \sigma_1$. If $\sigma_1 \notin \mathcal{P}_{AF}$, then by the same logic, there must exist another state $\sigma_2$ such that $\sigma_1 \precneqq_{AF} \sigma_2$. This establishes a strictly increasing chain under the partial order:
\begin{equation}
    \sigma \precneqq_{AF} \sigma_1 \precneqq_{AF} \sigma_2 \precneqq_{AF} \dots
\end{equation}
Because the probability simplex (and any corresponding iso-purity manifold $\mathcal{M}_P$) is a compact topological space, and the spectral fluctuations are strictly bounded from above by the absolute maxima of the continuous antiflatness monotones (e.g., $N(d_A)$ and $N_1(d_A)$), this sequence cannot grow indefinitely without bound. By Zorn's Lemma, every bounded chain in this partially ordered set must contain at least one maximal element.
Let this maximal element be $\rho$. Because $\rho$ is maximal, no other state can strictly majorize it, which dictates by definition that $\rho \in \mathcal{P}_{AF}$. Finally, because the antiflat majorization relation $\prec_{AF}$ is strictly transitive, the chain collapses to yield $\sigma \precneqq_{AF} \rho$.
\end{proof}

\subsection{Antiflatness of the gaps}
One can ask: Is it necessary to resolve the full spectrum, or can antiflatness be captured by the gaps between consecutive eigenvalues? This motivates the introduction of an alternative class of antiflatness measures based explicitly on spectral gaps.
We introduce here a measure of antiflatness defined in terms of the gaps between consecutive values in a probability distribution. Specifically, given a probability vector $\vec{\lambda}(\rho_A)^{\downarrow}$, with element in descending order, one can think at a measure of antiflatness as $\Gamma(\vec{\lambda}) = \frac{1}{d_A-1} \sum_{i = 1}^{d_A - 1} (\lambda_i - \lambda_{i+1})^2 = \frac{1}{d_A-1} ||B \vec{\lambda}||_2^2$. Here $B$ is the $(d_A-1) \times d_A$ forward difference matrix with entries $B_{i,i} = 1$ and $B_{i, i+1} = -1$, then $B \vec{\lambda}$ is the vector of adjacent gaps.
Despite its intuitive appeal, this quantity fails to be a faithful antiflatness measure. $\Gamma(\vec{\lambda})$ is not a valid quantifier since it is not faithful on the set flat$_A$ states (see~\cref{App : antiflatness of the gaps} for all the details). To restore faithfulness, one may consider a weighted version by $\vec{p}$ that probes all spectral gaps
$\widetilde{\Gamma}(\vec{\lambda})= \sum_{i<j} p_i p_j (\lambda_i - \lambda_j)^2 = \sum_i p_i \lambda_i^2
- \left( \sum_i p_i \lambda_i \right)^2 = \operatorname{Var}_{p}(\lambda)$.
While this modified measure is faithful, it coincides with one of the variance-based antiflatness previously introduced when $\vec{p}=\vec{\lambda}$, namely the Linear Rényi spread. This observation leads to the conclusion that any faithful gap-based antiflatness measure necessarily reconstructs a global spectral variance. Consequently, the variance-based and entropic quantifiers introduced earlier already provide a complete characterization of spectral antiflatness. Explicit gap-based measures do not yield genuinely new information, but rather re-express the same geometric content in a different form.

\subsection{Compatibility of the quantifiers with the antiflatness ordering}

The introduction of the antiflatness ordering allows us to reinterpret the relations between our established quantifiers. Specifically, we can analyze how the hierarchy established by $\prec_{AF}$ constrains the geometry of the state at both infinitesimal and macroscopic scales. 
Because the condition $\rho \prec_{AF} \sigma$ demands that the Rényi spread of $\rho$ is bounded by the spread of $\sigma$ across all valid intervals $(\alpha, \beta)$, this single constraint governs the behavior of our specific quantifiers. 

\begin{proposition}[Compatibility with the Antiflatness Ordering]
\label{prop:compatibility}
Let $\rho, \sigma \in \mathcal{D}(\mathcal{H})$ be two quantum states. If $\rho$ is antiflat-majorized by $\sigma$ ($\rho \prec_{AF} \sigma$), then the logarithmic antiflatness and the Capacity of Entanglement are strictly monotonically non-decreasing
\begin{align}
    \log(\Lambda_{\rho}) &\le \log(\Lambda_{\sigma}) \ , \label{eq:comp_log} \\
    \mathcal{V}_A(\rho) &\le \mathcal{V}_A(\sigma) \ . \label{eq:comp_capacity}
\end{align}
Furthermore, the Linear Rényi spread $\mathcal{F}_A(\rho)$ acts as a conditional monotone over iso-purity manifolds. If $\rho \prec_{AF} \sigma$ and the states additionally satisfy the purity constraint $\text{Tr}(\rho_A^2) \le \text{Tr}(\sigma_A^2)$ (equivalently, $S_2(\rho_A) \ge S_2(\sigma_A)$), then
\begin{equation}
    \mathcal{F}_A(\rho) \le \mathcal{F}_A(\sigma) \ . \label{eq:comp_linear}
\end{equation}
\end{proposition}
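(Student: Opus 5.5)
Each of the three claims follows by specializing the defining inequality $\Delta_{\alpha\beta}(\rho_A)\le\Delta_{\alpha\beta}(\sigma_A)$ for all $\alpha<\beta$ to a suitable pair or limit. I will use Proposition~\ref{prop:one_parameter_AF} throughout: $\rho\prec_{AF}\sigma$ holds exactly when $G_{\sigma|\rho}(\alpha)=S_\alpha(\sigma_A)-S_\alpha(\rho_A)$ is non-increasing on $(0,\infty)$.

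For \eqref{eq:comp_log}, take $(\alpha,\beta)=(2,3)$. By definition, $\log(\Lambda_\rho)=2\Delta_{2,3}(\rho_A)$, so the inequality is immediate.

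For \eqref{eq:comp_capacity}, recall the identity $\mathcal V_A(\rho)=-2\,\partial_\alpha S_\alpha(\rho_A)|_{\alpha=1}$ derived above. Write
\begin{equation}
\mathcal V_A(\rho)=\lim_{h\to0^+}\frac{2\,\Delta_{1,1+h}(\rho_A)}{h}.
\end{equation}
Since $\Delta_{1,1+h}(\rho_A)\le\Delta_{1,1+h}(\sigma_A)$ for every $h>0$, dividing by $h>0$ and passing to the limit gives $\mathcal V_A(\rho)\le\mathcal V_A(\sigma)$. The only care needed is differentiability of $\alpha\mapsto S_\alpha$ at $\alpha=1$. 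This holds because, in finite dimension, $S_\alpha$ is real-analytic in $\alpha$: the apparent singularity of $\frac{1}{1-\alpha}\log\Tr\rho^\alpha$ at $\alpha=1$ is removable. Equivalently, $-G_{\sigma|\rho}$ is non-decreasing and differentiable, so $G'_{\sigma|\rho}(1)\le0$, which is the same statement.

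For \eqref{eq:comp_linear}, express $\mathcal F_A$ through Rényi entropies:
\begin{equation}
\mathcal F_A=\Tr\rho_A^3-(\Tr\rho_A^2)^2=P^2\big(\Lambda-1\big),\qquad P=\Tr\rho_A^2=e^{-S_2},\quad \Lambda=e^{2\Delta_{2,3}}.
\end{equation}
This is a product of two factors. The factor $\Lambda-1\ge0$ is monotone under $\prec_{AF}$ by the first step. The factor $P^2$ is non-decreasing from $\rho$ to $\sigma$ exactly under the extra hypothesis $\Tr\rho_A^2\le\Tr\sigma_A^2$. Because both factors are non-negative, I conclude
\begin{equation}
P_\rho^2(\Lambda_\rho-1)\le P_\sigma^2(\Lambda_\rho-1)\le P_\sigma^2(\Lambda_\sigma-1).
\end{equation}

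The main obstacle is not the algebra but justifying the limit in the capacity step, i.e.\ the interchange of the infinitesimal spread with the derivative; analyticity of $S_\alpha$ handles it. It is also worth noting why the purity hypothesis cannot be dropped: $\mathcal F_A$ mixes the scale set by $S_2$ with the spread $\Delta_{2,3}$, and $\prec_{AF}$ controls only spreads, never absolute entropies. On $\mathcal M_P$ the condition holds with equality, which gives the ``conditional monotone'' reading.
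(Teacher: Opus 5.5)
Your proof is correct and follows the paper's argument: the $(2,3)$ spread for $\log\Lambda_\rho$, the identity $\mathcal{V}_A=-2\,\partial_\alpha S_\alpha|_{\alpha=1}$ for the capacity, and the factorization $\mathcal{F}_A=e^{-2S_2}\bigl(e^{2\Delta_{2,3}}-1\bigr)$ into non-negative scale and shape factors. Your difference quotient $\mathcal{V}_A=\lim_{h\to0^+}2\Delta_{1,1+h}/h$, justified by analyticity of $S_\alpha$, simply makes rigorous the paper's informal step that a larger spread forces a steeper slope at $\alpha=1$.
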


The unconditional relations in Eqs.~\eqref{eq:comp_log} and~\eqref{eq:comp_capacity} stem from the direct functional dependence of these measures on the Rényi spread. 
In contrast, the conditional nature of Eq.~\eqref{eq:comp_linear} arises because $\mathcal{F}_A(\rho)$ is not scale-invariant, as it measures antiflatness weighted by the overall purity of the state. Despite this restriction, the values of $\mathcal{F}_A(\rho)$ remain bounded
\begin{equation}
    \log(\Lambda_\rho)
    \;\le\;  d_A^2 \mathcal{F}_A(\rho) \;\le\; \left(\frac{e^{\Delta_{0\infty}(\rho_A)}}{2}\right)^2.
\end{equation}
Thus, the established antiflatness ordering classifies quantum states, seamlessly integrating all three quantifiers into its geometric framework at different probing scales. Detailed proofs and derivations of these compatibility relations are provided in Appendix \ref{app:compatibility_proofs}.

\section{Escort distributions formalism}
\label{Sec : Escort}
In the study of complex systems, generalized statistical mechanics, and quantum information theory, it is often necessary to analyze the structure of a probability distribution or a quantum state beyond its standard macroscopic observables \cite{li2008entanglement, tsallis1988possible, amari2016information}. A mathematical tool for this purpose is the \textit{escort distribution} (or escort state). Originally introduced in the context of multifractal analysis and later foundational to non-extensive statistical mechanics, escort distributions provide a framework to probe the different scales of a probability measure \cite{chhabra1989direct, beck1993thermodynamics, tsallis1998role}.

The escort distribution of order $q \in \mathbb{R}$, denoted as $P^{(q)}$, is defined by the mapping~\cite{chhabra1989direct, beck1993thermodynamics}
\begin{equation}
    P_i^{(q)} = \frac{p_i^q}{\sum_{j=1}^d p_j^q} \equiv \frac{p_i^q}{Z_q} \; \ ,
\end{equation}
where $Z_q = \sum_{j} p_j^q$ ensures normalization and $\{p_i\}_{i=1}^d$ represents the probability inside the simplex $\mathbf{\Delta}$ for a classical system. 
In the quantum regime, for a strictly positive density matrix in a finite-dimension Hilbert space; $\rho \in \mathcal{D}(\mathcal{H})$ (i.e., $\rho > 0$),
the escort state of order $q$ is defined via the functional calculus as~\cite{tsallis2009introduction} 
\begin{equation}
    \rho_{\text{escort}}^{(q)} = \frac{\rho^q}{\text{Tr}(\rho^q)} \; \ .
\end{equation}
The parameter $q$ acts as an analytical ``zoom lens" on the spectrum of $\rho$~\cite{chhabra1989direct}. 
For $q > 1$, the escort mapping amplifies the largest eigenvalues, suppressing the contributions from the tails of the distribution. While, for $q < 1$, it accentuates the smallest eigenvalues, giving weight to rare fluctuations. In the limit $q \to \infty$, the distribution collapses onto the maximum probability subspace, while $q \to 0$ yields the uniform distribution over the support of $\rho$. The mathematical structure of the escort distribution is connected to the foundations of statistical mechanics, specifically the continuous mapping between the microcanonical and canonical ensembles\cite{beck1993thermodynamics, tsallis1998role, abe2003geometry}. 
By defining an effective modular Hamiltonian for the state as $H = -\log \rho$~\cite{jaynes1957information, haag1992local}, the original density matrix can be exactly recast as a thermal Gibbs state at unit temperature
\begin{equation}
    \rho = \frac{e^{-H}}{\text{Tr}(e^{-H})} \; \ .
\end{equation}
When we apply the escort transformation of order $q$ to this state, the resulting density matrix takes the form
\begin{equation}
    \rho_{\text{escort}}^{(q)} = \frac{\rho^q}{\text{Tr}(\rho^q)} = \frac{e^{-qH}}{\text{Tr}(e^{-qH})} \; \ .
\end{equation}
In this physical picture, the escort parameter $q$ plays the exact role of an inverse temperature $\beta$. This reveals that cooling or heating the system is physically isomorphic to varying the escort parameter, recovering the microcanonical limit for $q \to 0$ ($\beta \to 0$). The system explores all accessible microstates with strictly equal probability, yielding a flat spectrum over the support of $\rho$. This flat state is the quantum mechanical definition of the microcanonical ensemble. 
When $q = 1$, we have the canonical state, operating as a standard canonical ensemble at the equilibrium inverse temperature $\beta = 1$. 
For $q \to \infty$, we have the ground state limit where thermal fluctuations vanish entirely, and the distribution collapses onto the ground state of $H$.

In this framework, the flat distribution plays the role of a microcanonical state in probability space, while deviations from flatness are resolved by canonical deformations along the escort parameter $q$, or physically by varying the temperature $\beta$.
\subsection{Relation among different measures}\label{sec:escort_derivations}
All the measures of antiflatness can be reformulated using the framework of escort distributions. 
Our aim is to show a unified interpretation in terms of the escort partition function
\(
\mathbb{Z}_{q} = \sum_i p_i^{q}
\),
which generates a one-parameter family of escort distributions probing the internal structure of a given spectrum.
When $\alpha \neq 1$, then
\begin{equation}
\mathbb{Z}_{\alpha}
:= \sum_{i} p_i^{\alpha}
= \exp\!\big( (1-\alpha)\, S_{\alpha}(\rho) \big) \, ,
\end{equation}
with Rényi entropies $S_{\alpha}(p)
= \frac{1}{1-\alpha}\,\log\!\left( \mathbb{Z}_{\alpha} \right)$.

Applying this identity, as shown in~\cref{App : relation quantifiers}, we can express the relevant quantities as follows
\begin{equation}
\label{eq:unification}
\begin{aligned}
&\mathcal{F}_A(\rho) = 1 - \mathbb{Z}_3 - \left(2 - \mathbb{Z}_2\right)^2= e^{-2\, S_3(\rho_A)} - e^{-2\, S_2(\rho_A)} \, , \\
&\log \Lambda_{\rho} = \log \mathbb{Z}_{3} - 2 \log \mathbb{Z}_{2} = 2\big( S_{2}(\rho_A) - S_{3}(\rho_A) \big) \, , \\
&\mathcal{V}_A(\rho) =  \left. \frac{d^{2}}{dq^{2}}
\Big[ (1-q)\, S_{q}(\rho_A) \Big] \right|_{q=1} \, .
\end{aligned}
\end{equation}  

Clearly, the logarithmic antiflatness of $\rho$ remains unchanged, since it is expressed directly in terms of Rényi entropies. Because all three quantifiers are fundamentally derived from the escort partition function $\mathbb{Z}_q$ 
, they are strictly faithful to the antiflatness geometry: they vanish if and only if $\rho \in \text{flat}_A$. As illustrated in Fig.~\ref{fig:relations} for the case of a reduced system composed of one qubit with Schmidt spectrum $(\lambda, 1-\lambda)$, all measures correctly identify the free flat$_A$ states at $\lambda = 0.5$ and $\lambda \in \{0, 1\}$ with exact zeroes. However, because each measure applies a distinct algebraic weighting to the escort distribution, they exhibit different sensitivities to spectral fluctuations. Taken together, these quantities do not represent independent notions of antiflatness, but rather complementary informations of the same underlying Rényi spread $\Delta_{\alpha\beta}$, each probing the spectrum at a different statistical scale. 
This differential sensitivity becomesapparent when analyzing their absolute maxima as showed in Sec.~\ref{Sec : Pareto} (see Fig.~\ref{fig:relations}).

\begin{figure}[t]
    \centering
    \includegraphics[width=0.6\linewidth]{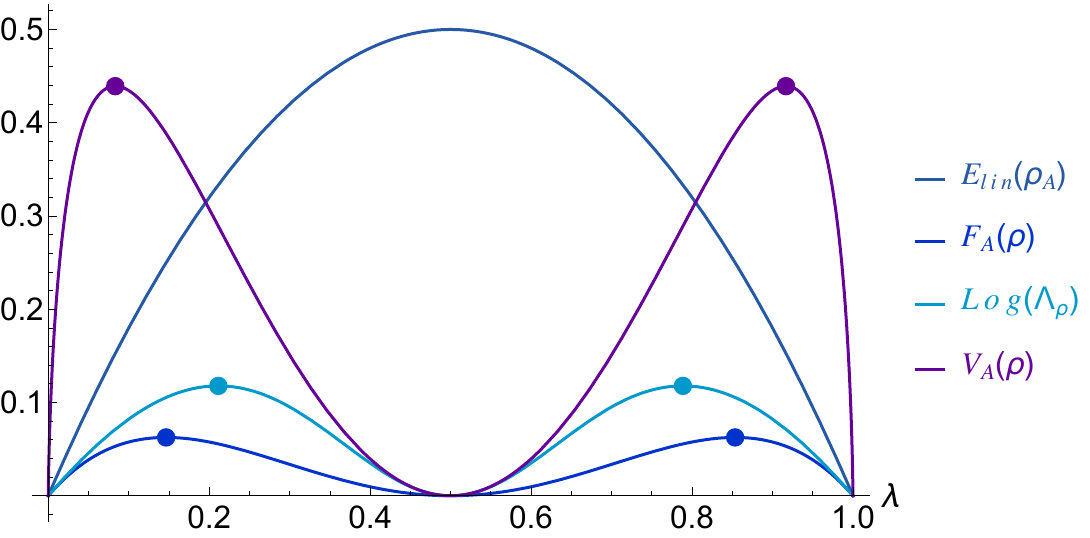}
    \caption{Different measures of antiflatness, $\mathcal{F}_A(\rho)$, $\log(\Lambda_\rho)$, $\mathcal{V}_A(\rho)$ and linear entanglement $E_{\text{lin}}(\rho_A):=1- \Tr(\rho_A^2)$ as a function of the Schimdt coefficient $\lambda$ for the case of pure state in an Hilbert space of dimension $d$ and $d_A=2$. The dots correspond to different maxima for each measure of antiflatness above: $\lambda_{\text{max}}^{\mathcal{F}}=\frac{1}{16} \left(8-4 \sqrt{2}\right)$, $\lambda_{\text{max}}^{\log(\Lambda)}=\frac{1}{6} \left(3-\sqrt{3}\right)$ and $\lambda_{\text{max}}^{\mathcal{V}_A}\approx 0.0832217$ and their reflection trough $\lambda=1/2$.}
    \label{fig:relations}
\end{figure}

\section{Geometric Formulation of Antiflatness via Bregman Divergences}
\label{Sec : Bregman}
Up to this point, our characterization of antiflatness has relied on the statistical properties of an ES, specifically through the lens of Rényi entropy spreads and escort distributions. However, these purely spectral quantities admit a unique geometric interpretation. When one wants to quantify a resource, it is always possible to think at the problem introducing a geometric distance to quantify how far a state is from the set of free states \cite{chitambar2019quantum, gour2024resources}. We can reframe our spectral fluctuations as generalized geometric distances on the probability simplex by employing the framework of Bregman divergences~\cite{bregman1967relaxation, amari2016information}.

Let $F(p)$ be a strictly convex and differentiable generating function defined over the probability simplex. The Bregman divergence between two probability distributions $p$ and $q$ is defined as
\begin{equation}
    D_F(p||q) = F(p) - F(q) - \sum_{i=1}^d \frac{\partial F(q)}{\partial q_i}(p_i - q_i) \; \ .
\end{equation}
While not a true metric (as it generally lacks symmetry and the triangle inequality), it establishes a statistical manifold that acts as a geometric susceptibility, allowing to evaluate the antiflatness of a state $\rho$ relative to a target state $\sigma$.

\subsection{Capacity of Entanglement and Kullback-Leibler Curvature}
Let us first consider the generating function $F(\rho) = \sum_{i} \lambda_{\rho_i} \log \lambda_{\rho_i} = -S_1(\rho)$, which corresponds to the negative Von Neumann entropy. In this case, the Bregman divergence uniquely reduces to the Quantum Relative Entropy, or Kullback-Leibler (KL) divergence
\begin{equation}
    D_{KL}(\rho||\sigma) = \sum_{i=1}^d \lambda_{\rho_i} \log\left(\frac{\lambda_{\rho_i}}{\lambda_{\sigma_i}}\right) \; \ ,
\end{equation}
where $\lambda_{\rho}$ and $\lambda_{\sigma_i}$ indicate the spectra of quantum states $\rho$ and $\sigma$, respectively. 
We evaluate the divergence between the state $\rho_A$, that now for simplicity we will call $p$, and its own generalized escort distribution of order $\alpha = 1+\epsilon$, defined by the eigenvalues $P^{(\alpha)}_i = p_i^\alpha / \sum_j p_j^\alpha$. As derived in ~\cref{App: DKL Derivation}, by expanding the KL divergence to second order under an infinitesimal escort perturbation $\alpha = 1+\epsilon$, the linear terms vanish, leaving a purely quadratic form. Isolating the variance of the surprisal operator, $\mathcal{V}_A(\rho) = \text{Var}_{\rho_A}(-\log \rho_A)$, yields the following identity
\begin{equation}
\begin{split}
\mathcal{V}_A(\rho) &= \lim_{\epsilon \to 0} \frac{2}{\epsilon^2} D_{KL}(\rho_A || \rho_{A_\text{escort}}^{(1+\epsilon)}) = \left. \frac{d^2}{d\alpha^2} D_{KL}(\rho_A || \rho_{A_\text{escort}}^{(\alpha)}) \right|_{\alpha=1} \ .
\end{split}
\label{eq:capDKL}
\end{equation}
Eq.~\eqref{eq:capDKL} reveals that $\mathcal{V}_A(\rho)$ is a differential, local property representing the Bregman curvature along the statistical trajectory defined by escort distributions. The more antiflat a state is, the faster it diverges geometrically when perturbed along the escort direction. If a distribution is perfectly flat, its escort distributions are identical to it ($D_{KL} = 0$), and thus the curvature vanishes. 

\subsubsection{Information-Geometric Interpretation via Quantum Fisher Information}
The identity established in Eq.~\eqref{eq:capDKL} reveals a connection to the framework of quantum information geometry. Quantum generalizations of the Fisher Information Matrix (QFIM) arise naturally as the Hessian matrix in the Taylor expansion of smooth divergences, such as the Rényi or Kullback-Leibler relative entropies, evaluated between infinitesimally close quantum states~\cite{wilde2025quantumfisher}. In our derivation, the continuous family of generalized escort distributions $\rho_{\text{escort}}^{(\alpha)}$ forms a one-dimensional statistical manifold parameterized by the escort index $\alpha$ (where $\alpha = 1+\epsilon$). By evaluating the second derivative of the Kullback-Leibler divergence with respect to this parameter evaluated at the physical state ($\alpha=1$), we compute the scalar Quantum Fisher Information along the escort trajectory. Therefore, we have
\begin{equation}
\label{eq:QFI}
    I_Q = \left. \frac{d^2}{d\alpha^2} D_{KL}(\rho || \rho_{\text{escort}}^{(\alpha)}) \right|_{\alpha=1} = \mathcal{V}_A(\rho) \; \ ,
\end{equation}
and notice that $\mathcal{V}_A(\rho)$ is not just a statistical variance, but a geometric metric, the Fisher distinguishability, that dictates how fast a quantum state separates from its own thermal-like escort deformations at the macroscopic scale. 

In classical statistics, Chentsov's Theorem guarantees that there is only one unique Riemannian metric on the probability simplex: the Fisher Information metric \cite{chentsov1982statistical, amari2016information}. In quantum information geometry, this uniqueness breaks down. According to Petz's Classification Theorem, there are infinitely many distinct quantum generalizations of the Fisher Information metric \cite{petz1996monotone, amari2016information}. Every single one of these metrics collapses back to the standard classical Fisher Information if the density matrices commute ($[\rho, \sigma] = 0$), but they behave differently otherwise. If you take a parameterized family of quantum states $\rho_\theta$ and compute the second derivative of the standard (Umegaki) quantum relative entropy $D(\rho_\theta || \rho_{\theta+d\theta})$ \cite{umegaki1962conditional}, you do not get the ``standard" Quantum Fisher Information that is most commonly used in quantum metrology (which is based on the Symmetric Logarithmic Derivative, or SLD) \cite{braunstein1994statistical}. Instead, the second derivative of the standard quantum relative entropy yields a specific quantum Fisher metric known as the Bogoliubov-Kubo-Mori (BKM) metric (also called the canonical inner product) \cite{petz1996monotone}. In our framework, we are evaluating the divergence between a generic state $\rho \in \mathcal{H_A}$ and its own escort distribution $\rho_{\text{escort}}^{(\alpha)}$. Because the escort distribution is defined entirely by taking powers of the eigenvalues of $\rho$, the state and its escort deformation strictly commute, $[\rho,\rho_{\text{escort}}^{(\alpha)}] = 0$. Consequently, Petz's theorem dictates that all the infinitely many quantum Fisher metrics collapse into a single, unique value.

In standard literature, the relative variance is typically defined with respect to an external reference state $\sigma$, measuring the fluctuations of the relative surprisal $\log \rho - \log \sigma$ \cite{boes2022variance}. Such formulations inherently rely on an \textit{extrinsic} geometry: the curvature is evaluated by perturbing the state $\rho$ against a fixed, external background, e.g. a thermal Gibbs state. While this is highly effective for standard thermodynamics, it is suboptimal for characterizing the internal spectral shape of a state. Conversely, our formulation defines the divergence $D_{KL}(\rho || \rho_{\text{escort}}^{(\alpha)})$ entirely through the generalized escort distributions. Because the escort state $\rho_{\text{escort}}^{(\alpha)}$ is intrinsically generated by the spectrum of $\rho$ itself, this approach relies purely on an \textit{intrinsic} geometry. We are not measuring the distance from $\rho$ to an arbitrary external environment; rather, we are measuring the distance from $\rho$ to its own deformed spectral deformations. Therefore, $\mathcal{V}_A(\rho)$ quantifies the susceptibility of the state. It defines the rate at which the state becomes statistically distinguishable from itself when perturbed along the flow of its escort distributions, bridging the thermodynamic concept of energy fluctuations with the geometric limits of quantum estimation theory.

It is important to stress that the escort parameter $\alpha$ is not, in general, a physical control parameter of the many-body Hamiltonian. 
In our setting, $\alpha$ should rather be understood as a coordinate that selects a direction in the space of entanglement spectra. The capacity of entanglement is therefore not primarily used here as a Cram\'er--Rao bound for estimating $\alpha$, although such an interpretation would be formally possible if the escort family were physically preparable. 
Its main role is geometric: $\mathcal{V}_A(\rho)$ is the Fisher norm of the tangent generated by an infinitesimal modular-temperature deformation of the ES.

This distinction is useful when the state depends on a genuine physical parameter $\lambda$.  In that case, the quantum Fisher information associated with estimating $\lambda$ is determined by the physical tangent $\partial_\lambda\rho_A(\lambda)$, and is generally different from $\mathcal{V}_A(\lambda)$. 
The latter should instead be interpreted as an intrinsic spectral susceptibility: it measures how sensitive the ES at fixed $\lambda$ is to an escort deformation. 
Thus, peaks or sharp variations of $\mathcal{V}_A(\lambda)$, or of its derivatives with respect to $\lambda$, can signal a reorganization of the entanglement spectrum even when standard entropic quantities vary smoothly. Flat spectra make this interpretation immediate. If $\rho_A$ is proportional to a projector on its support, the escort deformation leaves it invariant for all $\alpha$, and the Fisher norm vanishes. 
Conversely, a non-flat spectrum has nonzero modular-energy fluctuations and therefore a nonzero escort susceptibility.

\subsection{The Euclidean Divergence and $\mathcal{F}_A(\rho)$}
To establish the geometric nature of the Linear Rényi spread $\mathcal{F}_A(\rho)$, we select as generating function $F(p) = \mathbb{Z}_2 -1 $, the negative 2-Tsallis entropy, or equivalently $F(p)\equiv \mathbb{Z}_2$, by the invariance under affine transformations of Bregman divergences. The scalar Bregman divergence associated with this choice simplifies to the squared Euclidean distance
\begin{equation}
\begin{split}
    D_F(p||q) &= \sum_{i=1}^d p_i^2 - \sum_{i=1}^d q_i^2 - \sum_{i=1}^d (p_i - q_i)(2q_i)= \sum_{i=1}^d (p_i - q_i)^2 \; \ .
\end{split}    
\end{equation}
Being the spectrum of a state a probability distribution $p$ that takes the value $p_i$ with probability $p_i$, the expected value (the mean) of the spectrum is exactly the purity, i.e. $\mathbb{E}_p[p]= \mathbb{Z}_2$. The measure $\mathcal{F}_A(\rho)$ is the variance of this distribution, which can be expressed as the expected Bregman divergence of the individual eigenvalues from their mean purity
\begin{equation}
\begin{split}
    \mathbb{E}_p \big[ D_F(p_i || \mathbb{Z}_2) \big] &= \sum_{i=1}^{d_A} p_i (p_i - \mathbb{Z}_2)^2 = \sum_{i=1}^{d_A} p_i^3 - 2\mathbb{Z}_2 \sum_{i=1}^{d_A} p_i^2 + \mathbb{Z}_2^2 \sum_{i=1}^{d_A} p_i= \mathbb{Z}_3 - \mathbb{Z}_2^2 = \mathcal{F}_A(\rho) \ .
\end{split}
\end{equation}
$\mathcal{F}_A(\rho)$ equals the average Euclidean divergence of the ES from its own purity. Because this geometric distance is inherently anchored to $\mathbb{Z}_2$, it becomes evident why $\mathcal{F}_A(\rho)$ preserves the antiflatness order $\rho \prec_{AF} \sigma$ only when constrained within an iso-purity manifold, see Prop.~\ref{prop:compatibility}.
Hence $\mathcal{F}_A$ measures the average squared Euclidean separation between two eigenvalues independently sampled from the ES, a pairwise Euclidean dispersion of the spectral weights.

\subsection{Unification of Quantifiers via the Coefficient of Variation}
We can now notice that is possible to unify $\mathcal{F}_A(\rho)$ and $\log(\Lambda_{\rho})$, from the definition of $\Lambda_{\rho} = \frac{\mathbb{Z}_3}{\mathbb{Z}_2^2}$,
\begin{equation}
\label{unification}
    \log(\Lambda_{\rho}) = \log\left( 1 + \frac{\mathcal{F}_A(\rho)}{\mathbb{Z}_2^2} \right) = \log(1 + \frac{Var_p(p)}{\mathbb{E}_p[p]^2}).
\end{equation}
By noticing that $\frac{Var_p(p)}{\mathbb{E}_p[p]^2}$ is the squared coefficient of variation for the eigenvalues of $\rho_A$, in the limit of small fluctuations, we can apply the approximation $\log(1+x) \simeq x$, yielding $\log(\Lambda_{\rho}) \simeq \frac{\mathcal{F}_A(\rho)}{Z_2^2} $.
While $\log(\Lambda_{\rho})$ acts as an intensive, universal quantifier, $\mathcal{F}_A(\rho)$ acts as an extensive, local quantifier modulated by purity. 

The identity in Eq.~\eqref{unification} provides a practical advantage: it makes macroscopic antiflatness experimentally accessible without the need for full quantum state tomography, compared to Von Neumann entropy and the Capacity of Entanglement, which rely on the logarithmic spectrum ($-\log \rho$), a procedure that scales exponentially with system size~\cite{haeffner2005scalable}. In contrast, the Linear Rényi spread $\mathcal{F}_A(\rho) = \mathbb{Z}_3 - \mathbb{Z}_2^2$ depends strictly on the integer moments of the reduced density matrix, which can be efficiently extracted using multi-copy interferometry~\cite{ekert2002direct, daley2012measuring} or randomized measurement protocols like classical shadows~\cite{huang2020predicting, elben2023randomized}.
Consequently, evaluating the universal spectral fluctuations introduces zero additional experimental overhead compared to standard purity measurements.

Having established the maximal bounds for $\mathcal{F}_A(\rho)$ and $\log(\Lambda_\rho)$ in Theorem~\ref{thm:Maximal_Antiflatness}, we can highlight a relationship between $N_{\mathcal{F}_A}(d_A)$ and $N_{\Lambda}(d_A)$ by leveraging the identity derived in Eq.~\eqref{eq:unification}.
Because the purity is globally bounded by $\frac{1}{d_A} \le \mathbb{Z}_2 \le 1$, the inverse squared purity is strictly bounded by $1 \le \frac{1}{\mathbb{Z}_2^2} \le d_A^2$, then
\begin{equation}
    \log(1 + \mathcal{F}_A(\rho)) \le \log(\Lambda_\rho) \le \log(1 + d_A^2 \mathcal{F}_A(\rho)) \ .
\end{equation}
We can now evaluate these inequalities at the maximum configurations. Because $N_\Lambda(d_A)$ is the absolute maximum of $\log(\Lambda_\rho)$ across the entire simplex, it must be strictly greater than or equal to the logarithm evaluated at the state that maximizes $\mathcal{F}_A(\rho)$, yielding to
\begin{equation}
 N_{\mathcal{F}_A}(d_A)  \le \log(1 + N_{\mathcal{F}_A}(d_A)) \leq N_\Lambda(d_A) \leq \log(1 + d_A^2 N_{\mathcal{F}_A}(d_A))\ .
\end{equation}
since $N_{\mathcal{F}_A}(d_A)$ is strictly bounded from above by $27/256 \approx 0.105$.

\section{Ensembles of states}
\label{Sec : Averages}
The previous sections characterized antiflatness at the level of individual spectra. 
We now turn to a complementary question: what is the typical amount of antiflatness generated by natural ensembles of quantum states?
The ensembles considered below should therefore be understood as probes of typical spectral fluctuations associated with different notions of randomness.
For Haar-random pure states, concentration of measure implies that, in large bipartite Hilbert spaces, the reduced state of a sufficiently small subsystem is overwhelmingly close to maximally mixed~\cite{Popescu_2006, Hayden_2007}. 
Equivalently, the ES is typically close to flat. 
One therefore expects antiflatness to be a subleading fluctuation rather than an extensive feature of a typical Haar state. 
The purpose of this section is to make this statement quantitative, and to compare it with the corresponding behavior in the Bures--Hall and Clifford ensembles.

\subsubsection{Haar measure}
We first consider the unitarily invariant ensemble of pure states. Because normalization and global phase are physically irrelevant, the space of pure states on $\mathcal{H}$ is the complex projective space
$\mathbb{C}P^{d-1}$~\cite{nielsen_chuang_2010}. 
On this manifold, of real dimension $2d-2$, there exists a unique, unitarily invariant measure $d\psi$~\cite{zyczkowski2006GeometryQuantumStates,wootters1990random}. This measure is induced by the Haar measure $dU$ on the unitary group $U(d)$, acting on a fixed fiducial state $|\psi_0\rangle\in\mathcal{H}$, namely  
\begin{equation}
\int_{\mathbb{C}P^{d-1}} d\psi\, g(|\psi\rangle)
=\int_{U(d)} dU\, g(U|\psi_0\rangle)\,,
\label{eq:Haar_projective}
\end{equation}
for any integrable function $g$. Equipping the set of pure states with this uniform measure turns any real-valued function $g(|\psi\rangle)$ into a random variable, whose expectation value is computed as 
$\mathbb{E}_{\psi}[g]=\int d\psi\, g(|\psi\rangle)=\int dU\, g(U|\psi_0\rangle)=\mathbb{E}_U[g(\psi)]$~\cite{Mele_2024,collins2006integration,watrous2018theory,Collins_2015}.  
In practice, it is often convenient to express averages over pure states in terms of the eigenvalues of the reduced density matrix of a subsystem. Consider a bipartition of 
$\mathcal{H}=\mathcal{H}_A\otimes \mathcal{H}_B$ with dimensions $\text{dim}(\mathcal{H})=d_A d_B=d$. For a pure state $|\psi\rangle$, let 
$\rho_A=\mathrm{Tr}_B|\psi\rangle\langle\psi|$ denote the reduced density matrix of subsystem $A$. The unitarily invariant measure on $\mathbb{C}P^{d-1}$ induces a measure on the eigenvalues $\lambda_1,\dots,\lambda_{d_A}$ of $\rho_A$~\cite{lloyd_pagels_1988}, as showed in \eqref{eq:Haar measure}.
This formulation allows one to compute expectation values of functions $g(\rho)$ as 
\begin{equation}
\mathbb{E}_\psi[g] = \int d\mu(\lambda_1,\dots,\lambda_{d_A})\, g(\{\lambda_i\})\,.
\end{equation}

Under this measure, the average value of the antiflatness $\mathcal{F}_A(\rho_A)$ for a reduced density matrix $\rho_A$ of a pure state $\rho$ is given
\footnote{Remember that in the case of the same dimensions for the subsystems one has the ensemble induced by the Hilbert-Schmidt measure.} by ~\cref{app:Haar average}
\begin{equation}
\mathbb{E}_U[\mathcal{F}_A(\rho)] 
= \frac{\left(d_A^2-1\right) \left(d_B^2-1\right)}{(d+1) (d+2) (d+3)}
= \Theta(d^{-1})\,,
\end{equation}
which can be computed using the tools developed in~\cite{liu2018entanglement,wei2019exact}.
For the case of two-qubit states $\rho$, one finds $\mathbb{E}[\mathcal{F}_A(\rho)] = 3/70$, and can explicitly compute the full probability density function
\begin{equation}
\begin{split}
    P_{\mathcal{F}_A(\rho)}^{U}(f)&= 3 \sqrt{2} \left( \frac{\sqrt{\left(\sqrt{1-16 f}-1\right) (16 f-1)}}{1-16 f}+\sqrt{\frac{1}{\sqrt{1-16 f}}+\frac{1}{1-16 f}} \right)\,,
\end{split}
\end{equation}
see Fig.~\ref{fig:pdf_2_FA} and ~\cref{app:Haar average} for further details.

It is important to notice that a square root singularity appears in the distribution for two qubits, reminiscent of what happens with the logarithmic singularity in \cite{iannotti2025_IVH} for SREs.

Another interesting regime for the Haar average is when $d_B \gg d_A \gg 1$
\begin{equation}
    \mathds{E}_U[\mathcal{F}_A(\rho)] \simeq\frac{1}{d_A^2}\, 
\end{equation}
By estimating the variance, see ~\cref{app:Haar average}, of $\mathcal{F}_A(\rho)$ one can use the Chebyshev's inequality, i.e.
\begin{equation}
    \Delta^2 \mathcal{F}_A(\rho)= \mathds{E}_U[\mathcal{F}_A^2(\rho)]-\mathds{E}_U[\mathcal{F}_A(\rho)]^2 =\Theta\left(\frac{1}{d^3}\right)\,,
\end{equation}
one can argue the concentration of $\mathcal{F}_A(\rho)$ around its mean value for large dimensions.
Due to the Lipschitzianity of the function, the Lévy Lemma \cite{watrous2018theory,zyczkowski2006GeometryQuantumStates,milman1986asymptotic} for the concentration of measure also applies.

It is straightforward to see that the relative growth of the average anti-flatness compared to the average linear entanglement entropy, i.e. $E_{\rm lin}[\rho] = 1 - \mathrm{Tr}[\rho^2]$,
which is closely related to their logarithmic versions, the capacity of entanglement and the von Neumann entanglement entropy, and has important consequences in quantum field theory~\cite{boes2022variance}, vanishes in the limit of large subsystem dimensions as well~\cite{wei2023average}:
\begin{equation}
    \lim_{d_A,d_B \to \infty} \frac{\mathbb{E}_U[\mathcal{F}_A(\rho)]}{\mathbb{E}_U[E_{\rm lin}(\rho)]} = 0 \,.
    \label{eq:tendtozero}
\end{equation}
In the case of the capacity of entanglement, the Haar average was computed using different techniques in~\cite{okuyama2021capacity}.

\subsubsection{Bures-Hall measure}
\begin{figure}
    \centering
    \includegraphics[width=0.45\linewidth]{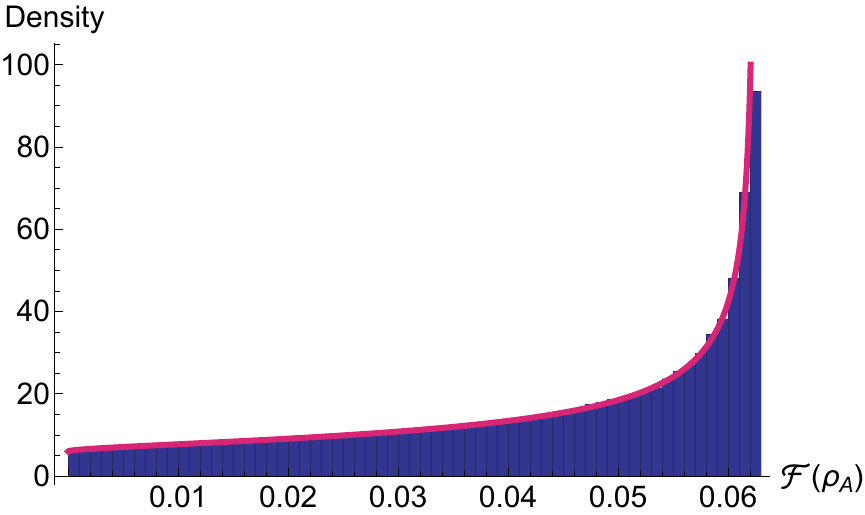}
    \includegraphics[width=0.45\linewidth]{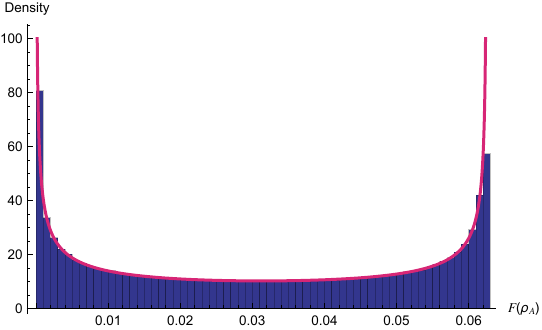}
    \caption{\emph{Left.} Probability density function of $\mathcal{F}_A(\rho)$ for 2 qubits random pure states induced by the Haar measure, with the violet line representing the analytical result and the histogram depicting the numerical computation reasalized using $N_{samples}=2 \times 10^6$. \emph{Right.} Probability density function of $\mathcal{F}_A(\rho)$ for 2 qubits random pure states induced by the Bures probability distribution, with the violet line representing the analytical result and the histogram depicting the numerical computation reasalized using $N_{samples}=2 \times 10^6$.}
    \label{fig:pdf_2_FA}
\end{figure}
In parallel with the Haar distribution over pure states, one can define the Bures--Hall ensemble of mixed states via a purification plus ``twist'' construction introduced in~\cite{hall1998random,zyczkowski2001induced,sommers2003bures} which we enforce on the reduced density state.
The Bures metric, which induces the above ensemble of states~\footnote{Similarly, the Hilbert-Schmidt metric does for Haar random pure states~\cite{zyczkowski2006GeometryQuantumStates}}, is a distinguished information-geometric metric on the space of quantum states, and it is closely tied to quantum state distinguishability. In the standard finite-dimensional setting, the Bures metric coincides with the symmetric-logarithmic-derivative quantum Fisher information metric up to the usual normalization convention~\cite{zyczkowski2006GeometryQuantumStates}. 
For this reason, we study its implications for various antiflatness measures.

A random pure state $\ket{\psi} \in \mathcal{H}_A \otimes \mathcal{H}_B$ is first formed using complex Gaussian coefficients, namely a random pure state, and then a biased superposition is made via a random unitary $U$ on subsystem $A$. Concretely, $|\phi\rangle = |\psi\rangle + (U_A\otimes \mathbb{I}_B)\,|\psi\rangle\ $,
where $U\in U(d_A)$ is drawn from a unitary measure weighted by $\det(\mathbb{I}+U)^{2\alpha+1}$ \cite{Sarkar_2019}.  
The density matrix of the composite system is then $\rho = |\phi\rangle\langle \phi|$ upon normalization, and the reduced state on subsystem $A$ is $\rho_A = \mathrm{Tr}_B\,\rho$.  
Under this construction, the eigenvalue distribution of $\rho_A$ is exactly the Bures--Hall distribution. The resulting joint eigenvalue probability density \cite{zyczkowski2006GeometryQuantumStates} is given by

\begin{equation}
\begin{split}
d\mu_{\mathrm{BH}}\left(\vec{\lambda}\right)&\propto\delta (1-\sum_{i=1}^{d_A} \lambda_i )\prod_{1 \leq i<j \leq d_A} \frac{\left(\lambda_i-\lambda_j\right)^2}{\lambda_i+\lambda_j} \prod_{i=1}^{d_A} \lambda_i^\alpha\,,
\end{split}
\label{eq:Bures_Hall_pdf}
\end{equation}
where $\alpha = d_B - d_A - \tfrac{1}{2}$.  
Equipping the space of reduced density matrices $\mathcal{D}(\mathcal{H}_A)$ with this measure defines the Bures--Hall ensemble.  

In the case of Bures-Hall metric, one can extract the average value of the antiflatness using \cite{al2010random,li2021moments}
\begin{equation}
    \mathbb{E}_{\mathrm{BH}}[\mathcal{F}_A(\rho)]= \frac{(d-1) (7 d-8)}{4 (d+2) (d+4) (d+6)}= \Theta(d^{-1})\,,
\end{equation}
in the case $d_A=d_B=\sqrt{d}$.
Also, in this case, the relative growth of
the average anti-flatness compared to the average linear
entanglement entropy \eqref{eq:tendtozero} tends to zero for large Hilbert spaces.
In the case of 2 qubits, $d_A=d_B=2$ we can also give an explicit formula for the probability density function as in the case for the Haar average
\begin{equation}
    P_{\mathcal{F}_A(\rho)}^{BH}(f)=\frac{4}{\pi  \sqrt{f(1-16 f)} }\,,
\end{equation}
see Fig. \ref{fig:pdf_2_FA}.
Instead, a full derivation of the entanglement capacity average over the Bures-Hall ensemble can be found in~\cite{wei2023average}.

\subsubsection{Clifford ensembles}
Finally, we discuss the Clifford ensembles.
The Clifford group plays a central role in the study of quantum randomness and quantum resource theories, providing the backbone of stabilizer formalism and efficient classical simulation. 
Formally, the \emph{$n$-qubit Clifford group} $\mathcal{C}_n$ is defined as the normalizer of the Pauli group $\mathcal{P}_n$ within the unitary group $U(2^n)$, namely $\mathcal{C}_n = \{\, U \in U(2^n) \;|\; U \mathcal{P}_n U^\dagger = \mathcal{P}_n \,\}$.
Elements of $\mathcal{C}_n$ form a unitary 3-design~\cite{zhu2016CliffordGroupFails}, meaning that averages of polynomial functions of degree up to three over $\mathcal{C}_n$ coincide with the corresponding Haar averages on $U(2^n)$. 
However, since the Clifford group is a finite subgroup of $U(2^n)$, it cannot reproduce higher-order moments of the Haar measure, which are crucial to capture genuine non-Clifford correlations.

To interpolate between stabilizer-preserving Clifford unitaries and fully random (Haar-distributed) unitaries, one introduces the \emph{$k$-doped Clifford ensemble}~\cite{leone2025noncliffordcostrandomunitaries}. 
This ensemble is constructed by composing a Clifford unitary $C \in \mathcal{C}_n$ with the application of a fixed non-Clifford single-qubit gate $K$ on a subset of $k$ qubits. 
Formally, the ensemble of $k$-doped Clifford circuits is defined as
\begin{equation}
\begin{split}
    C_k = \bigl\{\, U &= (\prod_{j=1}^k C_j\, K^{(S)}) \,C_0 \;\big|\; C_0,C_j \in \mathcal{C}_n,\; S \subseteq \{1,\dots,n\},\; |S| = k \,\bigr\},
\end{split}
\end{equation}
where $K^{(S)} = \bigotimes_{i=1}^n K_i$ acts as $K$ on qubits $i \in S$ and as the identity elsewhere. 
A typical choice of the non-Clifford gate is a diagonal phase gate of the form
\begin{equation}
    K = \ket{0}\!\bra{0} + e^{i\theta}\ket{1}\!\bra{1}, 
    \quad \theta \in [0,2\pi],
\end{equation}
which breaks the stabilizer structure while preserving a controllable amount of Clifford symmetry. 

The $k$-doped Clifford ensemble thus defines a tunable family of unitary distributions interpolating between the Clifford group ($k=0$) and the full Haar measure ($k=n$). 
By varying $k$, one can probe how non-Cliffordness, entanglement generation, and statistical properties of random circuits evolve from efficiently simulable regimes to fully generic quantum behavior.

This ensemble is naturally equipped with a probability measure. 
Let $d\mu_{\mathrm{Cl}}$ be the uniform measure on $\mathcal{C}_n$ and $d\mu_1$ a measure over $U(1)$ for the single-qubit gate $K$. 
The measure over $C_k$ is the convolution of Clifford and non-Clifford layers,
\begin{equation}
d\mu_{C_k}(U) = d\mu_{\mathrm{Cl}}(C_k) \star d\mu_1(K^{(S_k)}) \star \cdots \star d\mu_{\mathrm{Cl}}(C_0),
\end{equation}
where $\star$ denotes convolution. For any integrable $f(U)$, the ensemble average reads
\begin{equation}
\begin{split}
    \mathbb{E}_{C_k}[f] &= \int_{C_k} d\mu_{C_k}(U)\, f(U)= \int_{\mathcal{C}_n} dC_k \int_{U(1)} d\mu_1(K^{(S_k)}) \cdots \int_{\mathcal{C}_n} dC_0\, f\Big(\prod_{j=1}^k C_j K^{(S_j)} C_0\Big).
\end{split}
\end{equation}

As antiflatness captures some features of both entanglement and magic, we use the average over the Clifford group as an important tool to investigate this relationship through the eyes of antiflatness.

Considering the k-doped Clifford average \cite{leone2025noncliffordcostrandomunitaries,leone2021quantum} we have for $\ket{\psi}=\ket{0}^{\otimes n}$, $d_A=d_B=\sqrt{d}$ and $K=\ket{0}\bra{0}+e^{i \theta} \ket{1}\bra{1}$
\begin{equation}
\begin{aligned}
    \mathds{E}_{C_k}[\mathcal{F}_A(\rho)] &= \frac{5d+1}{(d+1)(d+2)}- \frac{2(2 d^2 +9 d+1)+(d^2-2d+1)(f_\theta^-)^k}{(d+1)(d+2)(d+3)}
\end{aligned}
\end{equation}
with $f_\theta^-= \frac{7d^2-3d+d(d+3)\cos(4 \theta) -8}{8(d^2-1)}$ for $\theta \in [0,2 \pi]  \backslash \{\pm \frac{\pi}{2}\}$ otherwise is equal to one and $\rho_A=tr_B \ket{\psi}\bra{\psi}$.
In the case the $k=0$ one can write a close formula for the average value that depends on the stabilizer purities \cite{tirrito2024quantifying}.

\section{Bound on rate of linear entanglement}
\label{sec:Rate_entanglement}
Finally, we comment on the use of antiflatness in dynamical scenarios.
The linear entanglement rate, a quantity used to characterize entanglement growth in both many-body dynamics and quantum circuits, can be upper-bounded in terms of two physically meaningful quantities: the speed of quantum evolution and the square root of the Linear Rènyi spread.
This result is the linearized version of the Capacity of Entanglement and Von Neumann entropy in \cite{Shrimali_2022}.Thus, Linear Rènyi spread provides a direct way to control how rapidly linear entanglement can be generated during dynamical evolutions.
\begin{proposition}
    Given an initial state $\ket{\psi(0)} \in \mathcal{H}_A \otimes \mathcal{H}_B$ and the dynamics generated by a non-local Hamiltonian $H_{AB}$, with the time evolved state at later time $\ket{\psi(t)}=e^{- i H_{AB} t} \ket{\psi(0)}$, then the linear entanglement rate is upper bounded by
    \begin{equation}
        \abs{\frac{d \,E_{lin}(\ket{\psi(t)})}{dt}} \leq 4 \sqrt{\mathcal{F}_A(\psi(t))} \sqrt{Var_{\psi} (H_{AB})}
        \label{eq:E_lin_bound}
    \end{equation}
    with 
    $$Var_{\psi} (H_{AB}) = \sqrt{\bra{\psi(0)}H_{AB}^2\ket{\psi(0)}-\bra{\psi(0)}H_{AB}\ket{\psi(0)}^2}$$
    and $E_{lin}(\ket{\psi(t)})=1- tr( \psi^2_A(t))$ for $\psi_A(t)\equiv tr_B(\ket{\psi(t)} \bra{\psi(t)})$.
\end{proposition}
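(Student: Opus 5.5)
The plan is to differentiate the purity directly, rewrite the derivative as the expectation value of a commutator, and close with a Robertson-type uncertainty relation. The key observation is that the variance of the observable $\psi_A(t)\otimes\id_B$ in the state $\ket{\psi(t)}$ is exactly the Linear Rényi spread. Concretely, since $E_{lin}=1-\Tr(\psi_A^2(t))$, I first write
\begin{equation*}
\frac{d E_{lin}}{dt}=-2\,\Tr\!\big(\psi_A(t)\,\dot\psi_A(t)\big).
\end{equation*}
Here $\dot\psi_A=\Tr_B(\dot\psi)=-i\,\Tr_B[H_{AB},\psi(t)]$ follows from the Schrödinger equation. Using $\Tr_A(Y_A\Tr_B Z)=\Tr((Y_A\otimes\id_B)Z)$ and cyclicity of the trace, this becomes
\begin{equation*}
\frac{d E_{lin}}{dt}=2i\,\Tr\!\big(X\,[H_{AB},\psi(t)]\big)=2i\,\bra{\psi(t)}[X,H_{AB}]\ket{\psi(t)},\qquad X:=\psi_A(t)\otimes\id_B ,
\end{equation*}
where $X$ is frozen at time $t$ and treated as a fixed Hermitian operator. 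Freezing $X$ is legitimate because the product rule already gave the factor $2$: only $\psi(t)$, and not $X$, is being differentiated.

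Next I apply the Robertson inequality $|\langle[X,H]\rangle|\le 2\,\Delta X\,\Delta H$ in the state $\ket{\psi(t)}$, with $\Delta$ denoting standard deviations. The variance of $X$ is computed via the Schmidt decomposition, or again via the partial trace:
\begin{equation*}
\bra{\psi}\psi_A^2\otimes\id_B\ket{\psi}=\Tr(\psi_A^3),\qquad \bra{\psi}\psi_A\otimes\id_B\ket{\psi}=\Tr(\psi_A^2).
\end{equation*}
Hence $(\Delta X)^2=\Tr(\psi_A^3)-\Tr^2(\psi_A^2)=\mathcal{F}_A(\psi(t))$ by the definition of the Linear Rényi spread. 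This gives
\begin{equation*}
\Big|\frac{dE_{lin}}{dt}\Big|\le 4\sqrt{\mathcal{F}_A(\psi(t))}\;\Delta_{\psi(t)}H_{AB}.
\end{equation*}

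Finally, I use that $H_{AB}$ commutes with $e^{-iH_{AB}t}$. Its moments, and hence $\Delta H_{AB}$, are therefore conserved, so the standard deviation can be evaluated on $\ket{\psi(0)}$, as in the statement.

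The main subtlety is notational rather than mathematical. The quantity the argument produces is the standard deviation $\Delta H_{AB}=\sqrt{\langle H_{AB}^2\rangle-\langle H_{AB}\rangle^2}$, computed on $\ket{\psi(0)}$. So I would read the factor $\sqrt{Var_\psi(H_{AB})}$ in \eqref{eq:E_lin_bound} as this standard deviation, i.e.\ as the square root of the ordinary variance. Taken literally, the displayed definition of $Var_\psi$ already contains a square root, and a further square root would give a different inequality. The constant $4=2\times 2$ comes from the factor $2$ in the purity derivative times the $2$ in Robertson's inequality, which confirms that reading.

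A secondary check concerns the case where the commutator expectation is complex. It is not: $[X,H]$ is anti-Hermitian, so its expectation is purely imaginary and $2i\langle[X,H]\rangle$ is real, consistent with $E_{lin}$ being real.
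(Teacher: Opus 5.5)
Your proof is correct and follows the paper's argument step for step: differentiate the purity, rewrite $dE_{lin}/dt$ as $2i\bra{\psi(t)}[\psi_A(t)\otimes\id_B,H_{AB}]\ket{\psi(t)}$, apply Robertson's inequality, identify the variance of $\psi_A(t)\otimes\id_B$ with $\mathcal{F}_A(\psi(t))$, and use conservation of the energy variance. You read $\sqrt{Var_\psi(H_{AB})}$ as the standard deviation of $H_{AB}$, and this is also what the paper's proof actually uses, so the extra square root in the displayed definition of $Var_\psi(H_{AB})$ is a typo: taken literally, the bound would fail under the rescaling $H_{AB}\to cH_{AB}$ for large $c$, since the left side scales like $c$ and the right side like $\sqrt{c}$.
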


\begin{proof}
    From the definition of linear entanglement $E_{\mathrm{lin}}(t)
=1-\operatorname{Tr}_A\!\left[\psi_A(t)^2\right]$ we have
\begin{equation}
        \begin{aligned}
            \frac{dE_{\mathrm{lin}}(t)}{dt}
&=-2\,\operatorname{Tr}_A\!\left[\psi_A(t)\dot{\psi}_A(t)\right]=2i\,\operatorname{Tr}_A\!\left[
\psi_A(t)\operatorname{Tr}_B\!\left[H_{AB},\psi_{AB}(t)\right]
\right]
\\
&=
2i\,\operatorname{Tr}_{AB}\!\left[
\left(\psi_A(t)\otimes \mathbb{I}_B\right)
\left[H_{AB},\psi_{AB}(t)\right]
\right]
=2i\,\operatorname{Tr}_{AB}\!\left[
\psi_{AB}(t)
\left[\psi_A(t)\otimes \mathbb{I}_B,H_{AB}\right]
\right]
\\
&=
2i\,\bra{\psi(t)}
\left[\psi_A(t)\otimes\mathbb{I}_B,H_{AB}\right]
\ket{\psi(t)} .
    \end{aligned}
\end{equation}
By taking the absolute value and apply the Cauchy--Schwarz (or equivalently, the Heisenberg--Robertson \cite{robertson1929uncertainty}) uncertainty relation for two noncommuting observables
    $A=\psi_A(t)\otimes\mathbb{I}_B$ and $B=H_{AB}$:
    \begin{equation}
        \tfrac{1}{2}\big|\bra{\psi(t)}[A,B]\ket{\psi(t)}\big|
        \le
        \sqrt{\operatorname{Var}_{\psi(t)}(A)}\,\sqrt{\operatorname{Var}_{\psi(t)}(B)}.
    \end{equation}
    Identifying $\operatorname{Var}_{\psi_A(t)}(\psi_A(t))=\mathcal{F}_A(\psi(t))$
    and $\operatorname{Var}_{\psi(t)}(B)=\operatorname{Var}_{\psi(0)}(H_{AB})$ (the latter being time-invariant under unitary evolution), we obtain inequality~\eqref{eq:E_lin_bound}.
\end{proof}

The interpretation of $2 \sqrt{Var_{\psi} (H_{AB})}$ as the speed limit evolution can be done using the Fubini-Study metric on the projective Hilbert space of the composite system as in \cite{Shrimali_2022}.
Since we have an upper bound on this antiflatness measure we can state the following loose upper bound on the entanglement rate
\begin{equation}
    \abs{\frac{d \,E_{lin}(\ket{\psi(t)})}{dt}} < \frac{3 \sqrt{3}}{8} \sqrt{Var_{\psi} (H_{AB})}.
\end{equation}

\section{Conclusions}
\label{Sec : Conclusions}
In this work, we established that standard majorization theory is incompatible with the concept of antiflatness, as it is blind to the geometry of spectral fluctuations. To overcome this limitation, we introduced a partial ordering framework that ranks quantum states according to their Rényi entropy spread. This construction highlights a structural distinction: while conventional entropic quantities quantify the amount of disorder, the Rényi spread captures the \textit{inhomogeneity} of that disorder. As such, it provides a natural language for describing antiflatness, shifting the physical focus from the mere concentration of a state to the variance of its ES.

By reframing these spectral fluctuations through escort distributions and Bregman divergences, we revealed that quantities like the Capacity of Entanglement possess information-geometric significance. While previous works have recognized the formal relationship between the Capacity of Entanglement and the Quantum Fisher Information~\cite{de2019aspects}, our escort-based framework unveils a geometric simplification. Because a state and its escort distributions commute, the quantum Fisher geometry becomes unique. In this setting, the Capacity of Entanglement measures the sensitivity to escort deformations, implying that antiflatness is the spectral susceptibility for probing the shape of the ES. Furthermore, we demonstrated that states maximizing these geometric distortions shatter the notion of a single universal maximum, instead forming a continuous Pareto frontier of non-dominated extremal states. From a practical perspective, our unification of all the presented antiflatness measures reveals a operational advantage: by expressing the quantifiers entirely in terms of integer moments of the density matrix, macroscopic spectral fluctuations become experimentally accessible in near-term quantum devices through efficient randomized measurement protocols, entirely bypassing the exponential overhead of full state tomography.

We define the state convertibility of antiflat$_A$ states by defining Flatness-Preserving Operations (FPOs) as the maximal set of completely positive trace-preserving (CPTP) maps that leave the Rényi spread invariant. This methodological distinction is crucial when contextualizing the open problem of operational sufficiency. As explicitly demonstrated in the resource theory of non-stabilizerness~\cite{heimendahl2021axiomatic}, the set of operations defined axiomatically can be strictly larger than the set of operations defined operationally. Mathematical permission does not automatically guarantee physical realizability with a restricted gate set. A parallel dichotomy is highly probable in the resource theory of antiflatness. While $\prec_{AF}$ successfully dictates the ultimate mathematical limits---revealing that FPOs operate beyond standard majorization---discovering the exact \textit{operational} equivalents remains an open challenge. Identifying the specific physical quantum circuits required to execute these compressions deterministically will be a compelling direction for future research, setting the stage for operational discoveries regarding the interplay of entanglement and magic (possibly through the lens of non-local magic), and the thermodynamics of spectral shape.

The statistical analysis of random ensembles further clarifies the role of antiflatness as a fluctuation diagnostic. For Haar and Bures--Hall ensembles, the average Linear Rényi Spread vanishes in the large-dimensional limit, showing that typical reduced states become not only highly entangled but also increasingly flat in their spectral structure.

Finally, the dynamical bound derived for the linear entanglement rate shows that antiflatness is not only a static property of spectra, but it controls how efficiently entanglement can be generated under non-local dynamics, together with the energy variance of the driving Hamiltonian.

Overall, antiflatness provides a complementary way of organizing quantum correlations. By focusing on the internal fluctuations of the entanglement spectrum, it exposes a layer of quantum correlations that is naturally connected to magic entanglement, random-state typicality and information geometry.

\section{Acknowledgements}
B.J. would like to thank L. Leone, S.F.E. Oliviero and P. Faist for fruitful discussions on the matter. A.H. acknowledge support from  PNRR MUR project PE0000023-NQSTI and the PNRR MUR project CN 00000013-ICSC.
The authors acknowledge the novella \textit{Flatland: A Romance of Many Dimensions} by Edwin Abbott Abbott~\cite{abbott2023flatland}, which inspired the title of this paper.

\bibliographystyle{quantum}
\bibliography{reference}

\appendix

\section{Failing of monotonicity under majorization}
\label{App : Failing}
A state $\rho \in \mathcal{D}(\mathcal{H}_A)$ is defined as flat if it is maximally uniform on its support. If we focus on the $\mathcal{H}_A$ Hilbert space, since the condition is given on states living in that space, we can deduce the geometric structure of the sets of states of interest.
The defining relation implies that, for $r := \operatorname{rank}(\rho)$,
\begin{equation}
\rho = \frac{1}{r} \Pi, \quad \Pi^2 = \Pi = \Pi^\dagger, \quad \operatorname{Tr}(\Pi) = r.
\end{equation}
Thus, flat$_A$ states of fixed rank $r$ correspond bijectively to orthogonal projectors $\Pi$ of rank $r$.

Such projectors are parameterized by the complex Grassmannian manifold $\mathrm{Gr}(r,d)$, which is the space of all $r$-dimensional subspaces of $\mathbb{C}^d$. This manifold is smooth and has the following well-known real dimension~\cite{zyczkowski2006GeometryQuantumStates}:
\begin{equation}
\dim_{\mathbb{R}} \mathrm{Gr}(r,d) = 2r(d - r).
\end{equation}
Accordingly, we denote by $\mathrm{FLAT}_r$ the subset of flat$_A$ states of rank $r$ and have the isomorphism of smooth manifolds
\begin{equation}
\mathrm{FLAT}_r \cong \mathrm{Gr}(r,d), \quad \text{with} \quad \dim_{\mathbb{R}}(\mathrm{FLAT}_r) = 2r(d - r).
\end{equation}

Since the total set of flat$_A$ states decomposes into a disjoint union over all possible ranks, $\mathrm{FLAT} = \bigsqcup_{r=1}^d \mathrm{FLAT}_r$ , 
its maximal dimension is given by $\dim_{\mathbb{R}}(\mathrm{FLAT}) = \max_{1 \leq r \leq d} 2r(d - r)$.
The maximum is attained at $r = \left\lfloor \frac{d}{2} \right\rfloor$ or $r = \left\lceil \frac{d}{2} \right\rceil$, yielding
\begin{equation}
\dim_{\mathbb{R}}(\mathrm{FLAT}) =
\begin{cases}
\displaystyle \frac{d^2}{2}, & \text{if } d \text{ is even}, \\[1.5ex]
\displaystyle \frac{d^2 - 1}{2}, & \text{if } d \text{ is odd}.
\end{cases}
\end{equation}
In particular, for pure states ($r=1$), which are a subset of flat$_A$ states, the dimension reduces to $\dim_{\mathbb{R}}(\mathrm{FLAT}_1) = 2(d - 1)$,
matching the well-known dimension of the complex projective space of pure states~\cite{zyczkowski2006GeometryQuantumStates}.
The use of the maximal dimension among the strata $\mathrm{FLAT}_r$ is motivated by the stratified nature of the set $\mathrm{FLAT}$ itself. Since $\mathrm{FLAT}$ is a disjoint union of manifolds with varying dimensions, it is not a smooth manifold globally but rather a union of smooth manifolds of different sizes. The maximal dimension corresponds to the largest, most generic component of $\mathrm{FLAT}$, capturing the effective degrees of freedom and the complexity of the set of flat$_A$ states. From both geometric and practical perspectives, the top-dimensional strata dominate. Hence, the maximal dimension provides a meaningful characterisation of the size and structure of the free states set.
Additionally, note that neither $ \mathrm{FLAT}$ nor $ \mathrm{FLAT}_r$ are convex sets, it is sometimes an odd feature in resource theory, as is the case for stabilizer states. Hence, neither $ \mathrm{FLAT}_A$ is convex.

It is worth noticing that all the above measure are neither Shur-convex nor Shur-concave. In quantum information, Schur-convex and Schur-concave functions are used to quantify resources (like entanglement or coherence), where the function's behavior with respect to majorization reflects the relative "spread" or "concentration" of the resource.
As far as the Capacity of entanglement, this was discussed in \cite{boes2022variance}, whereas for the logarithmic anti-flatness we can show a concrete example (the same holds for the Linear Rényi spread).
Suppose $\rho \prec \sigma$  and that their spectrum is $\lambda^\rho=\left(\frac{9}{10},\frac{1}{10}\right)$ and $\lambda^\sigma=\left(\frac{8}{10},\frac{2}{10}\right)$ then $\Delta^{23}(\rho)\simeq 0.0593 < \Delta^{23}(\sigma)\simeq 0.0847  $.
Now one can consider a different spectrum with the same majorization criterion  $\lambda^\rho=\left(\frac{9}{10},\frac{1}{10}\right)$ and $\lambda^\sigma=\left(\frac{6}{10},\frac{4}{10}\right)$ and it happens that $\Delta^{23}(\rho)\simeq 0.0593 > \Delta^{23}(\sigma)\simeq 0.0252  $.

In general, if a faithful function in a resource theory is neither Shur-concave nor Schur-concave, it means that the quantification of the resource does not follow conventional expectations about mixing, combining, or spreading resources. 

Finding free operation for the resource theory of flatness is in general not an easy task except for trivial example. In ~\cref{App: Hardeness of finding FPOs}, we highlight some problems.
A trivial set of free operations is given by $U_A \otimes U_B$ with $U_A,U_B$ local unitaries.

In the context of quantum resource theories, any resource measure that is compatible with the majorization ordering must mathematically be a Schur-convex function. However, the usual concept of ordering by majorization fails in our context, because it inherently orders states by purity rather than by flatness. 

\begin{theorem}[No-Go of Majorization Ordering for Antiflatness] \label{thm:No-go theorem}
None of the antiflatness measures, $\mathcal{F}_A(\rho), \; \log(\Lambda_{\rho}), \; \mathcal{V}_A(\rho)$ is neither strictly Schur-convex nor Schur-concave. Consequently, the majorization relation $\rho \prec \sigma$ is not sufficient to establish an ordering of antiflat$_A$ states, i.e., $\rho \prec \sigma \nRightarrow \mathcal{V}_A(\rho) \lesseqqgtr \mathcal{V}_A(\sigma)$.
\end{theorem}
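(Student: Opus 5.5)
The plan is to prove the statement by explicit counterexamples. I will produce, for each quantifier $f\in\{\mathcal{F}_A,\log\Lambda,\mathcal{V}_A\}$, two majorization-comparable pairs $\rho\prec\sigma$ on which $f$ moves in opposite directions. The existence of such pairs rules out both Schur-convexity and Schur-concavity at once. It also shows that $\rho\prec\sigma$ cannot imply any fixed inequality between $f(\rho)$ and $f(\sigma)$.

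The structural reason is faithfulness. Each measure vanishes on flat spectra, and these include both extremes of the majorization order: the maximally mixed spectrum $(1/d,\dots,1/d)$, which is majorized by everything, and the pure spectrum $(1,0,\dots,0)$, which majorizes everything. Each measure is also strictly positive on any non-flat spectrum. Let $\tau$ be any non-flat spectrum. Then
\[
(1/d,\dots,1/d)\prec\tau\prec(1,0,\dots,0).
\]
In the first relation $f$ goes from $0$ to a positive value, so $f$ increases up the order. In the second relation $f$ goes from a positive value to $0$, so $f$ decreases up the order. A Schur-convex $f$ would force $f(\tau)\le f(\text{pure})=0$, and a Schur-concave $f$ would force $f(\tau)\le f(\text{max.\ mixed})=0$. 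Both contradict positivity on non-flat spectra, which was established earlier for all three quantifiers. This settles the claim for every $d_A\ge2$ using only faithfulness and positivity.

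To exhibit the failure among full-rank, non-flat states, I will reuse the qubit computations already given for $\log\Lambda$, namely $(0.9,0.1)$ compared with $(0.8,0.2)$ and with $(0.6,0.4)$. I will repeat them for $\mathcal{F}_A$ and $\mathcal{V}_A$ by parametrizing the spectrum as $(\lambda,1-\lambda)$. Each of these is a continuous function vanishing at $\lambda=1/2$ and $\lambda=1$, positive in between, with an interior maximum located as in Fig.~\ref{fig:relations}. On the chain $\lambda\in[1/2,1]$, majorization is just the ordering of $\lambda$. Any function that rises and then falls on this chain is therefore non-monotone along it. I will pick one point on each side of the maximizer, for example $\lambda=0.6$ versus $0.8$ and $\lambda=0.8$ versus $0.95$, and check the signs numerically.

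The only step needing care is confirming that the maximizer of $\mathcal{V}_A$ lies strictly inside the interval. This follows from Rolle's theorem, since the function is continuous, zero at both endpoints, and positive inside, so no delicate computation is required.
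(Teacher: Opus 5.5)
Your argument is correct and is essentially the paper's. Both the maximally mixed and the pure spectrum are flat, so each measure vanishes at the bottom and at the top of the majorization order while being strictly positive on any non-flat $\tau$; this is incompatible with both Schur-convexity and Schur-concavity, and you phrase it more precisely than the paper's appeal to Rolle's theorem. One slip in the optional qubit illustration: the maximizer of $\mathcal{V}_A$ sits at $\lambda\approx 0.917$, so $\mathcal{V}_A(0.95)\approx 0.41>\mathcal{V}_A(0.8)\approx 0.31$ (in nats), and the pair $(0.8,0.95)$ shows no decrease for $\mathcal{V}_A$. Replacing $0.95$ by $0.99$ (where $\mathcal{V}_A\approx 0.21$) makes the chain $0.6<0.8<0.99$ rise and then fall for all three measures.
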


\begin{proof}
Let us consider the set of free states. Both the pure state, defined by the spectrum $\rho_{pure} = (1, 0, \dots, 0)$, and the completely mixed state, defined by $\rho_{mixed} = (\frac{1}{d}, \frac{1}{d}, \dots, \frac{1}{d})$, belong to the set of free states. Given the faithfulness of the three antiflatness measures, we know that they must evaluate to exactly $0$ for both of these free states. Because these functions are strictly positive for intermediate states but return to $0$ at both the maximum and the minimum of the majorization ordering , by Rolle's Theorem, they cannot possibly be monotone along the majorization chain. Therefore, states  $\notin \rm FLAT$ can not be ordered via majorization since it orders according to the purity of the state and not according its antiflatness. 
\end{proof}

Notice that the argument presented in the proof can be extended to reveal an even more profound structural incompatibility between majorization and antiflatness by considering intermediate flat$_A$ states. A generic flat state of rank $k$ (where $1 \le k \le d$) is proportional to a rank-$k$ projector, possessing the spectrum $\rho_k = (\frac{1}{k}, \dots, \frac{1}{k}, 0, \dots, 0)$. Under the majorization preorder, these flat$_A$ states of varying ranks form a strict, totally ordered chain defined entirely by their purity
\begin{equation}
   \rho_d \prec \rho_{d-1} \prec \dots \prec \rho_k \prec \dots \prec \rho_2 \prec \rho_1 \; \ ,
\end{equation}
where $\rho_d$ is the completely mixed state and $\rho_1$ is the pure state. Because all such states belong to the set of free states ($\rho_k \in \text{FLAT}$), our antiflatness measures evaluate to exactly zero for every single one of them $\mathcal{V}(\rho_k) = \mathcal{F}_A(\rho_k) = \log(\Lambda_3(\rho_k)) = 0 \quad \forall k \in \{1, \dots, d\}$. This implies that along the majorization hierarchy, the antiflatness functionals do not merely start at zero and end at zero; they periodically drop back to zero at multiple discrete ``nodes" (every time the state reaches a perfectly flat configuration of any rank). Between any two such nodes, $\rho_k$ and $\rho_{k-1}$, there exist non-flat intermediate states with strictly positive antiflatness. Consequently, antiflatness measures exhibit multiple local minima (all evaluating to zero) along continuous majorization paths. A functional that repeatedly oscillates between zero and positive values along a strict majorization chain completely shatters any possibility of Schur-convexity or Schur-concavity. This thoroughly solidifies the conclusion that majorization (which is essentially a hierarchy of purity) is fundamentally blind to the geometry of antiflatness.

\section{Discrimination via Antiflatness}
\label{App : Discrimination via Antiflatness}

In the main text, we introduced antiflatness as a way to probe the structure of the entanglement spectrum beyond scalar entropic quantities. The purpose of this appendix is to make this distinction explicit through two complementary observations. First, we show that the same reduced spectrum may admit purifications with different stabilizer properties. Second, we construct two states with the same von Neumann entropy but different R\'enyi-spread structure, showing why the ordering $\prec_{AF}$ contains information that is invisible to standard entanglement entropy.

Let us start from the spectra displayed in Fig.~\ref{fig: REcurves}. 
Denote by $\vec{\lambda}^{\,X} = (\lambda_0^X,\lambda_1^X,\lambda_2^X,\lambda_3^X), X\in\{F,A,B,C\}$, one of the four reduced spectra, where $F$ labels the flat spectrum. A canonical Schmidt purification realizing this spectrum is $|\Psi_X\rangle = \sum_{j=0}^{3}  \sqrt{\lambda_j^X}\, |j\rangle_A |j\rangle_B$. By construction, $\Tr_B |\Psi_X\rangle\langle\Psi_X|$ has spectrum $\vec{\lambda}^{\,X}$. Therefore these purifications provide explicit pure-state representatives of the R\'enyi curves used in the main text.

For the flat spectrum $\vec{\lambda}^{\,F} =\left(\frac14,\frac14,\frac14,\frac14\right)$, the canonical purification is $|\Psi_F\rangle = \frac{1}{2} \sum_{j=0}^{3} |j\rangle_A |j\rangle_B$. If $A=A_1A_2$ and $B=B_1B_2$, this state factorizes into two Bell pairs, $|\Psi_F\rangle = |\Phi^+\rangle_{A_1B_1} \otimes |\Phi^+\rangle_{A_2B_2}$, where $|\Phi^+\rangle = \frac{|00\rangle+|11\rangle}{\sqrt{2}}$. Thus, the flat spectrum can be realized by a maximally entangled stabilizer state. Although this state has maximal entanglement entropy across the bipartition $A|B$, its entanglement spectrum is perfectly flat. 
Consequently,
\begin{equation}
    \Delta_{\alpha\beta}(\rho_A)=0,
    \qquad
    \forall \alpha<\beta,
\end{equation}
and all antiflatness measures vanish. This illustrates the distinction between the amount of entanglement and the internal structure of the entanglement spectrum.

However, the same flat reduced spectrum can also arise from purifications with different stabilizer structure. 
For example, consider a locally dressed state $|\Psi_F^{T}\rangle = (U_T\otimes I_B)|\Psi_F\rangle$, where $U_T$ is a local non-Clifford unitary acting on subsystem $A$, such as
\begin{equation}
    U_T=T\otimes I,
    \qquad
    T=\mathrm{diag}(1,e^{i\pi/4}).
\end{equation}
The reduced state transforms as $\rho_A \longmapsto U_T\rho_A U_T^\dagger = \frac{I_4}{4}$. Hence the reduced spectrum remains exactly flat, and all antiflatness measures still vanish. 
Nevertheless, the global pure state need not remain a stabilizer state, since the purification has been locally dressed by a non-Clifford operation. This example shows that antiflatness does not quantify arbitrary local magic. Rather, it captures the component of quantum complexity encoded in the entanglement spectrum itself.

The non-flat spectra $A$, $B$, and $C$ instead define purifications whose entanglement spectra have nontrivial internal fluctuations. In particular, the pair $(A,C)$ provides a useful example of the partial nature of the antiflat ordering. 
Their R\'enyi entropy curves cross, and their R\'enyi spreads reverse ordering on different intervals. 
Therefore, although both states are non-flat, neither spectrum globally dominates the other under $\prec_{AF}$. 
This is precisely the sense in which antiflatness is not captured by a single scalar steepness criterion, but by the full family of R\'enyi-spread inequalities.

\begin{example}[Iso-entropic states and antiflatness discrimination]
\label{ex:iso_entropic}

We now give a simple example showing that two states may have the same von Neumann entropy while having different antiflatness. 
Consider two reduced density matrices on a three-dimensional subsystem $A$, with spectra
\begin{align}
    \lambda(\rho_A)
    =
    \left(
    \frac12,\frac12,0
    \right), \quad
    \lambda(\sigma_A)
    \approx
    \left(
    0.773,0.1135,0.1135
    \right).
\end{align}
Using logarithms in base two, their von Neumann entropies are
\begin{equation}
    S_1(\rho_A)
    =
    -\frac12\log_2\frac12
    -
    \frac12\log_2\frac12
    =
    1,
\end{equation}
and
\begin{equation}
    S_1(\sigma_A)
    =
    -0.773\log_2(0.773)
    -
    2(0.1135)\log_2(0.1135)
    \approx
    1 .
\end{equation}
Thus, from the perspective of the von Neumann entropy, the two states contain the same amount of bipartite entanglement.

Their R\'enyi-spread structure, however, is completely different. 
The state $\rho_A$ is flat on its support: it is proportional to a rank-two projector. 
Therefore all its nonzero eigenvalues are equal, all its R\'enyi entropies coincide, and
\begin{equation}
    \Delta_{\alpha\beta}(\rho_A)=0,
    \qquad
    \forall \alpha<\beta .
\end{equation}
In other words, $\rho\in\mathrm{FLAT}_A$ and it is a free state for the antiflatness resource theory.

By contrast, $\sigma_A$ is full-rank and strongly non-uniform. 
Although its von Neumann entropy equals that of $\rho_A$, its R\'enyi entropies depend nontrivially on $\alpha$. 
Consequently, there exist pairs $\alpha<\beta$ for which
\begin{equation}
    \Delta_{\alpha\beta}(\sigma_A)>0.
\end{equation}
Thus $\sigma_A$ is antiflat, while $\rho_A$ is flat on its support.

This example shows why the R\'enyi-spread ordering introduced in the main text is not redundant with standard entanglement entropy. The von Neumann entropy measures the mean surprisal, $S_1(\rho_A) = \mathbb{E}_{\rho_A}[-\log \rho_A]$, whereas antiflatness probes the variation of the R\'enyi curve, i.e. the fluctuations and scale-dependence of the entanglement spectrum. 
Two states may therefore have the same average entanglement while belonging to different antiflatness classes. 
This is the operational role of $\prec_{AF}$: it distinguishes spectra that standard scalar entanglement measures identify.
\end{example}

\section{Antiflatness allows operations forbidden by standard majorization}
\label{App : Maj freezed}
We present an explicit counterexample on a four-dimensional Hilbert space $\mathcal{H}_A$ ($d_A=4$) that illustrates the structure of the iso-purity manifold. We construct an initial state $\sigma$ and a target state $\rho$ with identical purity, such that $\rho$ is strictly antiflat-majorized by $\sigma$, while standard majorization forbids the transformation.
Define the ordered spectra
\begin{equation}
\mathrm{spec}(\sigma_A)^\downarrow = \left( \frac{22}{40}, \frac{9+\sqrt{77}}{40}, \frac{9-\sqrt{77}}{40}, 0 \right) \approx ( 0.5500, , 0.4444, , 0.0056, , 0 ),,
\end{equation}
\begin{equation}
\mathrm{spec}(\rho_A)^\downarrow = \left( \frac{1}{2}, \frac{1}{2}, 0, 0 \right),.
\end{equation}
Both states lie on the manifold $\mathcal{M}_{1/2}$ with purity $S_2(\rho_A)=S_2(\sigma_A)=\log(2)$.
A necessary condition for a flatness-preserving operation is that the Rényi spread does not increase, i.e. $\Delta_{\alpha\beta}(\rho_A)\le \Delta_{\alpha\beta}(\sigma_A)$ for all parameters. Since $\rho$ is flat, this inequality is strict, implying $\rho \prec_{AF} \sigma$. Consequently, an FPO can implement the transformation $\sigma \to \rho$.

\subsection{Proof of Standard Majorization Failure}
We now examine standard majorization. Nielsen’s criterion requires that the cumulative sums of the ordered eigenvalues of $\rho$ do not exceed those of $\sigma$. At $k=1$, the condition holds: $\rho_A^{(1)}= 0.5000 \le \sigma_A^{(1)} = 0.5500$. At $k=2$, however,
\begin{equation}
    \rho_A^{(1)} + \rho_A^{(2)} = 1.0000 
    \qquad \text{while} \qquad 
    \sigma_A^{(1)} + \sigma_A^{(2)} \approx 0.9944 \,,
\end{equation}
which violates the required inequality. The Lorenz curves therefore intersect, and neither $\rho \prec \sigma$ nor $\sigma \prec \rho$ holds, then the transformation is thus forbidden under LOCC or unital maps.

This example clarifies the operational distinction. The map $\sigma \to \rho$ simultaneously reduces the leading eigenvalue and removes the tail, redistributing the weight into the intermediate component to produce a perfectly flat spectrum. Standard majorization doesn't allow such a process: decreasing the largest eigenvalue necessarily spreads probability mass outward, typically increasing rank and raising the Shannon entropy. In contrast, flatness-preserving operations act as spectral compressors, concentrating weight inward while maintaining fixed purity.

On the iso-purity manifold, both states satisfy $S_2(\rho)=S_2(\sigma)$, so their Rényi entropy curves coincide at $\alpha=2$. Away from this point, their behavior is governed by the difference in spectral flatness. The state $\rho$, having a more uniform spectrum, produces a Rényi entropy curve with weaker dependence on $\alpha$, whereas the more uneven spectrum of $\sigma$ leads to a stronger variation.
This difference implies a systematic ordering on either side of $\alpha=2$. For $\alpha<2$, where Rényi entropies are more sensitive to the contribution of smaller eigenvalues, the broader support of $\sigma$ yields $S_\alpha(\sigma) > S_\alpha(\rho)$. For $\alpha>2$, where the dominant eigenvalues control the entropy, the reduced peak of $\rho$ gives $S_\alpha(\rho) > S_\alpha(\sigma)$. Hence the curves cross exactly at $\alpha=2$.
This crossing should not be interpreted as incomparability. The relevant quantities for antiflat-majorization are the Rényi spreads, not the absolute entropy values. Despite the intersection, the spreads remain ordered as $\Delta_{\alpha\beta}(\rho) \le \Delta_{\alpha\beta}(\sigma)$ for all parameters. The crossing is therefore a direct consequence of the shared purity constraint, which enforces a single intersection point while preserving a strict antiflat ordering.

\section{Necessary condition for pure-state convertibility under axiomatic FPOs}
\label{App: prop necessary condition}
\begin{proposition}[Necessary condition for pure-state convertibility under FPOs]
Let $|\psi\rangle,|\phi\rangle \in \mathcal{H}_A\otimes\mathcal{H}_B$ be bipartite pure states, and let
$\rho^\psi,\rho^\phi$ denote the associated density matrices.
Assume that FPOs are defined axiomatically as the set of CPTP maps $\Phi$ such that
\begin{equation}
\Delta_{\alpha\beta}((\Phi(\rho))_A)\le \Delta_{\alpha\beta}(\rho_A)
\qquad \forall \rho,\ \forall \alpha<\beta .
\end{equation}
Then
\begin{equation}
|\psi\rangle \xrightarrow{\mathrm{FPO}} |\phi\rangle
\quad \Longrightarrow \quad
\rho^\phi \prec_{AF} \rho^\psi .
\end{equation}
\end{proposition}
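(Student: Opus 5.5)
The proof is a direct unpacking of the definitions, and I would carry it out in three steps.

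First, I fix the meaning of the arrow. The statement $|\psi\rangle \xrightarrow{\mathrm{FPO}} |\phi\rangle$ means that there is a CPTP map $\Phi$ in the FPO class with
\begin{equation}
\Phi(\rho^\psi)=\rho^\phi, \qquad \rho^\psi = |\psi\rangle\langle\psi|, \quad \rho^\phi = |\phi\rangle\langle\phi| .
\end{equation}
By definition, the FPO inequality holds for every input state. I specialise it to the single input $\rho=\rho^\psi$, which gives
\begin{equation}
\Delta_{\alpha\beta}\big((\Phi(\rho^\psi))_A\big)\le \Delta_{\alpha\beta}(\rho^\psi_A) \qquad \forall\,\alpha<\beta .
\end{equation}

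Second, I substitute $\Phi(\rho^\psi)=\rho^\phi$. Partial trace is a well-defined function of the global state, so $(\Phi(\rho^\psi))_A=\Tr_B(\rho^\phi)=\rho^\phi_A$. The inequality therefore becomes $\Delta_{\alpha\beta}(\rho^\phi_A)\le\Delta_{\alpha\beta}(\rho^\psi_A)$ for all $\alpha<\beta$. This is exactly the defining condition of $\rho^\phi\prec_{AF}\rho^\psi$ in Definition~\ref{def:Antiflatness Ordering}.

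Third, I handle the conventions, which is the only real obstacle. The Rényi spreads must be well defined at the endpoints $\alpha\to 0$, $\alpha=1$ and $\beta\to\infty$, and for reduced states that are not full rank. I will treat these as the usual continuous limits: $S_0=\log\rank$, $S_1$ is the von Neumann entropy, and $S_\infty=-\log\lambda_{\max}$. With these conventions the FPO inequality and the ordering are both read over the same parameter range, so the implication holds pointwise in $(\alpha,\beta)$ with no further argument.

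I will also note two features of the result. It is only a necessary condition, since nothing is claimed about the existence of $\Phi$; that sufficiency question is deferred to \cref{App: Hardeness of finding FPOs}. And purity of $|\psi\rangle$ and $|\phi\rangle$ plays no role beyond fixing the reduced spectra as the Schmidt spectra, so the same argument applies verbatim to mixed states.
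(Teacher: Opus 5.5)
Your proposal is correct and matches the paper's proof: instantiate the FPO axiom at the input $|\psi\rangle\langle\psi|$, substitute $\Phi(|\psi\rangle\langle\psi|)=|\phi\rangle\langle\phi|$, and read off Definition~\ref{def:Antiflatness Ordering}. Your remarks on endpoint conventions and on the verbatim extension to mixed states are sound but not needed for the argument.
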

\begin{proof}
Assume that there exists an FPO channel $\Phi$ such that $\Phi\!\left(|\psi\rangle\langle\psi|\right)=|\phi\rangle\langle\phi| .$
By the defining axiom of FPOs, for every pair $\alpha<\beta$ one has
\begin{equation}
\Delta_{\alpha\beta}\!\left((\Phi(|\psi\rangle\langle\psi|))_A\right) = \Delta_{\alpha\beta}(\rho^\phi_A)
\le
\Delta_{\alpha\beta}\!\left((|\psi\rangle\langle\psi|)_A\right)= \Delta_{\alpha\beta}(\rho^\psi_A).
\end{equation}
By Definition~\ref{def:Antiflatness Ordering}, this is exactly $\lambda_\phi \prec_{AF} \lambda_\psi .$
\end{proof}

\section{Hardness of finding FPOs: Failure of standard operations}
\label{App: Hardeness of finding FPOs}

In this section, we explicitly prove the non-triviality of finding Flatness-Preserving Operations (FPOs) by showing that several common classes of quantum dynamics fail to be resource-free in the context of the antiflatness resource theory. Specifically, we will first show that Local Operations and Classical Communication (LOCC) can strictly generate antiflatness. Then, we will prove that local projective measurements (dephasing) and POVMs also fail to preserve the set of free states.

\subsection{Failure of LOCC}

Let us consider standard state convertibility for bipartite pure states under LOCC. By Nielsen's theorem, a global pure state $\ket{\Psi_\rho}$ can be deterministically transformed into $\ket{\Psi_\sigma}$ via LOCC if and only if the spectrum of the initial reduced state is majorized by the spectrum of the target reduced state, i.e., $\text{spec}(\rho_A) \prec \text{spec}(\sigma_A)$. 

Consider an initial state $\ket{\Psi_\rho}$ whose reduced density matrix $\rho_A$ possesses a perfectly flat spectrum, $\text{spec}(\rho_A) = (1/3, 1/3, 1/3)$, implying that its Rényi spread is exactly zero ($\Delta_{\alpha\beta}(\rho_A) = 0$). We wish to transform this into a target state $\ket{\Psi_\sigma}$ with a fluctuating reduced spectrum, for instance $\text{spec}(\sigma_A) = (2/3, 1/6, 1/6)$, which exhibits a strictly positive spread ($\Delta_{\alpha\beta}(\sigma_A) > 0$). Because the cumulative sums of the ordered eigenvalues satisfy the standard majorization condition ($1/3 \le 2/3$, $2/3 \le 5/6$, and $1=1$), we have $\rho_A \prec \sigma_A$. Consequently, LOCC protocols are perfectly capable of executing the global transformation $\ket{\Psi_\rho} \xrightarrow{\text{LOCC}} \ket{\Psi_\sigma}$. However, in the context of our resource theory, the initial reduced state is resource-free, while the final reduced state possesses macroscopic spectral fluctuations. The LOCC protocol has strictly increased the antiflatness of the reduced spectrum. Therefore, LOCC acts as a resource-generating operation for antiflatness, proving that standard LOCC protocols cannot be classified as FPOs.

\subsection{Failure of Local Projective Measurements and Dephasing}

Similarly, one might expect that local dephasing channels would smear quantum information and reduce resource content. We now show that projective local measurements (and by extension POVMs) are not resource-free operations, as they can cause a state to depart from the set of free flat$_A$ states. After a local projective measurement on subsystem $A$, the state evolves according to the dephasing superoperator
\begin{equation}
    D_{\mathcal{B}_A}(\rho) = \sum_{k=1}^{d_A} \left( \omega_k \otimes \id_B \right) \rho \left( \omega_k \otimes \id_B \right) \,,
\end{equation}
where $\omega_k = \ket{k}\bra{k}$ and $\mathcal{B}_A = \{ \ket{k} \}_{k=1}^{d_A}$ forms a complete orthonormal basis on $\mathcal{H}_A$. 

Suppose $\rho \in \mathcal{D}(\mathcal{H}_A \otimes \mathcal{H}_{B})$ is a pure state, $\rho = \ket{\rho}\bra{\rho}$, which can be written in its Schmidt decomposition as
\begin{equation}
    \ket{\rho} = \sum_{i=1}^{d_A} \sqrt{\lambda_i} \ket{\phi_i \chi_i} \in \mathcal{H}_A \otimes \mathcal{H}_{B} \,,
\end{equation}
where $\lambda_i \geq 0$, $\sum_i \lambda_i = 1$, and we assume, without loss of generality, that $d_A \leq d_B$.
After applying the dephasing superoperator, the state assumes the form:
\begin{equation}
    D_{\mathcal{B}_A}(\ket{\rho}\bra{\rho}) = \sum_k \sum_{i,j} \sqrt{\lambda_i \lambda_j} \,\omega_k \ket{\phi_i} \bra{\phi_j} \omega_k \otimes \ket{\chi_i} \bra{\chi_j} \,.
\end{equation}
After partial tracing over subsystem $B$, one gets the new reduced density matrix:
\begin{equation}
    \rho_A' \equiv \Tr_B( D_{\mathcal{B}_A}(\ket{\rho}\bra{\rho})) = \sum_{i,k} \lambda_i \,\omega_k \ket{\phi_i} \bra{\phi_i} \omega_k = \sum_k \omega_k \rho_A \omega_k = \sum_k \Tr(\rho_A \omega_k)\omega_k \,,
\end{equation}
since $\rho_A = \sum_i \lambda_i \ket{\phi_i} \bra{\phi_i}$.
To evaluate whether this resulting state remains flat, we must compute $\rho_A'^2$ and verify if it is proportional to a projector
\begin{equation}
    \rho_A'^2 = \sum_{k,l} \omega_k \rho_A \omega_k \omega_l \rho_A \omega_l = \sum_k \omega_k \rho_A \omega_k \rho_A \omega_k = \sum_k \bra{k}\rho_A \ket{k}^2 \ket{k}\bra{k} = \sum_k \Tr^2(\rho_A \omega_k) \omega_k \,.
\end{equation}
This implies that $\rho_A'$ is flat if and only if $\Tr(\rho_A \omega_k) = \frac{1}{\text{rank}(\rho_A')}$ for all $k$ in its support, which is generally not true.

Let us provide a concrete example demonstrating that a free state can be mapped to a resourceful state. Given an initially flat reduced state $\rho_A = \ket{00}\bra{00}$ (a rank-1 pure state in a $d_A=4$ dimensional space), we choose the following complete orthonormal basis:
$$ \mathcal{B}_A = \left\{ \frac{\ket{00}+\sqrt{2} \ket{01}}{\sqrt{3}}, \, \frac{(2-\sqrt{2})\ket{00}+(1-\sqrt{2}) \ket{01}}{\sqrt{9-6 \sqrt{2}}}, \, \frac{\ket{10}+ \sqrt{2}\ket{11}}{\sqrt{3}}, \, \frac{\sqrt{2}\ket{10}-\ket{11}}{\sqrt{3}} \right\} \equiv \{ \ket{k_i} \}_{i=1}^4 \,. $$
Computing the dephased reduced state yields
\begin{equation}
    \rho_A' = \frac{1}{3} \ket{k_1}\bra{k_1} + \frac{2}{3}\ket{k_2}\bra{k_2} \,,
\end{equation}
which is decidedly not flat, since $\rho_A'^2 \neq \frac{\rho_A'}{\text{rank}(\rho_A')}$. Consequently, projective local measurements are not resource-free operations. Because total projective measurements are a subset of POVMs, POVMs in general also fail to be FPOs.

It is worth noting that a local measurement on the \textit{complementary} subsystem behaves differently. If we apply $D_{\mathcal{B}_B}(\rho) = \sum_{k=1}^{d_B} ( \id_A \otimes \theta_k ) \rho ( \id_A \otimes \theta_k )$ with $\mathcal{B}_B = \{ \theta_k \}_{k=1}^{d_B}$ being an orthonormal basis on $\mathcal{H}_B$, tracing out $B$ yields
\begin{equation}
    \Tr_B( D_{\mathcal{B}_B}(\ket{\rho}\bra{\rho})) = \sum_{i,j,k} \sqrt{\lambda_i \lambda_j} \, \bra{\chi_i}\theta_k\ket{\chi_j}\,\ket{\phi_i} \bra{\phi_j} = \sum_i \lambda_i \ket{\phi_i} \bra{\phi_i} = \rho_A \,.
\end{equation}
The reduced density matrix is perfectly unaffected; if the state was initially flat, it remains trivially flat.

\subsection{Failure within Stabilizer Bases}
Finally, the problem persists even when restricting the operations to stabilizer formalisms. Suppose we have a reduced state over four qubits ($d_A=16$) defined as
\begin{equation}
    \rho_A = \frac{1}{10} \sum_{i=1}^{10} \ket{i}\bra{i} \,,
\end{equation}
which is a flat rank-10 projector. Now, define the dephasing basis as a stabilizer basis, $\mathcal{B}_A = \{ C\ket{i}\bra{i} C^\dagger \}_{i=1}^{d_A}$, choosing the local Clifford operator $C = H \otimes \id_2^{\otimes 3}$.
Computing the dephased state yields:
\begin{equation}
\rho_A' = \text{diag}\left( \frac{1}{10},\frac{1}{10},\frac{1}{20},\dots,\frac{1}{20},\frac{1}{10},\frac{1}{10},\frac{1}{20},\dots,\frac{1}{20} \right) = \frac{\Pi_1}{10} + \frac{\Pi_2}{20} \,,
\end{equation}
where $\Pi_1$ is the projector onto the subspace spanned by $\{\ket{0000},\ket{0001},\ket{1000},\ket{1001}\}$ and $\Pi_2$ acts on its complement. This resulting state is clearly not flat. 

Hence, we conclude that not even dephasing within a stabilizer basis is guaranteed to be a flatness-preserving operation.

\section{Differential Geometry of the Iso-Purity Manifold}
\label{App : Iso-Purity Manifold}
In this section, we formally prove that the iso-purity set of quantum states is not merely a set, but possesses the rigorous structure of a differentiable manifold. We employ the Regular Level Set Theorem to separate the algebraic purity constraint from the positivity constraint, ultimately characterizing the state space as a stratified manifold.

Let $\mathcal{H} \cong \mathbb{C}^d$ be a finite-dimensional Hilbert space. The iso-purity set $\mathcal{M}_P$ for a given purity $P \in [1/d, 1]$ is defined as the set of all density matrices sharing that exact purity:
\begin{equation}
    \mathcal{M}_P = \left\{ \rho \in \text{Herm}(\mathcal{H}) \mid \rho \ge 0, \; \Tr(\rho) = 1, \; \Tr(\rho^2) = P \right\} \ .
\end{equation}
To prove that $\mathcal{M}_P$ is a manifold, we must separate the equality constraints from the inequality (positivity) constraint.

Let $\text{Herm}(\mathcal{H})$ be the real vector space of Hermitian matrices acting on $\mathcal{H}$, which has a real dimension of $d^2$. We endow this space with the standard Hilbert-Schmidt inner product $\langle X, Y \rangle = \Tr(XY)$.
The set of all trace-one Hermitian matrices forms an affine subspace $\mathcal{A}$:
\begin{equation}
    \mathcal{A} = \{ X \in \text{Herm}(\mathcal{H}) \mid \Tr(X) = 1 \} \ .
\end{equation}
Because it is defined by a single linear constraint, $\mathcal{A}$ is a smooth, flat manifold of dimension $d^2 - 1$. The tangent space to $\mathcal{A}$ at any point $X$ is simply the vector space of traceless Hermitian matrices:
\begin{equation}
    T_X \mathcal{A} = \{ Y \in \text{Herm}(\mathcal{H}) \mid \Tr(Y) = 0 \} \ .
\end{equation}

We define the purity function $f: \mathcal{A} \to \mathbb{R}$ as $ f(X) = \Tr(X^2)$. In order to apply the regular level set theorem (Preimage Theorem), we must compute the Fr\'echet derivative of $f$ at a point $X \in \mathcal{A}$ in the direction of an arbitrary tangent vector $Y \in T_X \mathcal{A}$:
\begin{equation}
    df|_X(Y) = \left. \frac{d}{dt} f(X + tY) \right|_{t=0}= \left. \frac{d}{dt} \Tr((X + tY)^2) \right|_{t=0} = \left. \frac{d}{dt} \big( \Tr(X^2) + 2t\Tr(XY) + t^2\Tr(Y^2) \big) \right|_{t=0}  = 2\Tr(XY) \ .
\end{equation}

The Regular Level Set Theorem dictates that the preimage $f^{-1}(P)$ is a smooth embedded submanifold if the differential $df|_X$ is surjective (non-zero) for \textit{every} $X \in f^{-1}(P)$. 

We must identify the critical points where $df|_X$ fails to be surjective, which occurs when $df|_X(Y) = 0$ for all $Y \in T_X \mathcal{A}$:
\begin{equation}
    2\Tr(XY) = 0 \quad \forall \, Y \text{ such that } \Tr(Y) = 0 \ .
\end{equation}

By the Riesz representation theorem for the Hilbert-Schmidt inner product, if a matrix $X$ is orthogonal to all traceless matrices, $X$ must be strictly proportional to the identity matrix, i.e. $X = c \mathbb{I}$.
Because $X$ resides in the affine space $\mathcal{A}$, it must satisfy $\Tr(X) = 1$. This fixes $c = 1/d$, yielding the maximally mixed state, i.e. $X = \frac{\mathbb{I}}{d} $.

The purity evaluated at this unique critical point is $1/d$.
Therefore, for any purity strictly greater than that of the maximally mixed state, i.e., $P \in (1/d, 1]$, there are no critical points on the level set. The differential $df|_X$ is everywhere full-rank. Consequently, the set 
\begin{equation}
    \tilde{\mathcal{M}}_P = \{ X \in \mathcal{A} \mid \Tr(X^2) = P \}
\end{equation}
is a smooth, embedded submanifold of $\mathcal{A}$ of dimension $(d^2 - 1) - 1 = d^2 - 2$. Geometrically, this represents a hypersphere $S^{d^2-2}$.
For any $P \in (1/d, 1)$, $\mathcal{M}_P$ constitutes a compact, stratified submanifold of dimension $d^2 - 2$. Its top-dimensional stratum comprises a smooth, open Riemannian manifold of full-rank states, while its boundaries consist of smooth manifolds of strictly lower dimensions corresponding to states of reduced rank.

The true physical iso-purity manifold requires the density matrices to be positive semi-definite ($\rho \ge 0$):
\begin{equation}
    \mathcal{M}_P = \tilde{\mathcal{M}}_P \cap \{ X \in \text{Herm}(\mathcal{H}) \mid X \ge 0 \} \ .
\end{equation}
The set of positive semi-definite matrices forms a closed convex cone. To classify the geometric structure of $\mathcal{M}_P$ rigorously, we consider its interior and boundary.

The interior of the density matrix space consists of strictly positive, full-rank states ($\rho > 0$). The intersection of a smooth manifold $\tilde{\mathcal{M}}_P$ with the open set $\{\rho > 0\}$ is, by definition, an open submanifold. Thus, the full-rank portion of $\mathcal{M}_P$ forms a perfectly smooth Riemannian manifold of dimension $d^2 - 2$.

When a state has one or more eigenvalues exactly equal zero, it lies on the boundary of the positive cone. The intersection of a smooth manifold (the hypersphere $\tilde{\mathcal{M}}_P$) with the faces of a convex cone produces a stratified manifold (or a manifold with corners). Each stratum of this manifold corresponds to states of a specific rank $r < d$. For example, the extreme boundary occurring at $P=1$ (pure states) is the stratum of rank-1 states, which natively forms the well-known smooth manifold of the Complex Projective Space, $\mathbb{CP}^{d-1}$.

\section{One-parameter reformulation of antiflatness on iso-purity manifolds}
\label{app:one_parameter_af_proofs}

In this Appendix we collect the proofs of the structural results used in Sec.~\ref{Sec : Ordering} to analyze antiflat convertibility on iso-purity manifolds. The purpose of these results is twofold. First, they show that the definition of antiflat majorization, although formulated as an infinite two-parameter family of Rényi-spread inequalities, can be reformulated exactly as a one-parameter monotonicity condition. Second, once this reformulation is restricted to the iso-purity manifold, it yields a sharp sign structure around the anchoring point $\alpha=2$ and leads to a rigidity statement: any non-trivial antiflatly ordered pair on a fixed iso-purity manifold must differ in rank. These results clarify why antiflat dynamics on $\mathcal{M}_P$ are fundamentally distinct from standard majorization flows.

\subsection{Reduction of the two-parameter constraints to a one-parameter monotonicity test}

\begin{proof}[Proof of Proposition~\ref{prop:one_parameter_AF}]
By definition of antiflat majorization, $\rho \prec_{AF} \sigma \iff \Delta_{\alpha\beta}(\rho_A)\le \Delta_{\alpha\beta}(\sigma_A) \qquad \forall \alpha<\beta$. Using $\Delta_{\alpha\beta}(\tau_A)=S_\alpha(\tau_A)-S_\beta(\tau_A)$,
this is equivalent to
\begin{equation}
S_\alpha(\rho_A)-S_\beta(\rho_A)
\le
S_\alpha(\sigma_A)-S_\beta(\sigma_A)
\qquad \forall \alpha<\beta .
\end{equation}
Defining $G_{\sigma|\rho}(\alpha):=S_\alpha(\sigma_A)-S_\alpha(\rho_A)$,
we obtain
\begin{equation}
G_{\sigma|\rho}(\alpha)\ge G_{\sigma|\rho}(\beta)
\qquad \forall \alpha<\beta .
\end{equation}
This is precisely the statement that the function $\alpha\mapsto G_{\sigma|\rho}(\alpha)$
is non-increasing on $(0,\infty)$. Reversing the same algebraic steps proves the converse implication.
\end{proof}

Now we present an explicit example that clearly shows the decreasing in the computational cost by the reformulation of the ordering condition using the one-parameter family of functions. In numerical applications, the Rényi parameter is typically sampled on a finite ordered grid $\alpha_1<\alpha_2<\cdots<\alpha_N$.
A direct test of the original two-parameter condition requires checking all pairs
\begin{equation}
    \Delta_{\alpha_i\alpha_j}(\rho_A)
    \leq
    \Delta_{\alpha_i\alpha_j}(\sigma_A),
    \qquad
    i<j.
    \label{eq:pairwise}
\end{equation}
The number of inequalities is therefore $\frac{N(N-1)}{2}$.
By contrast, the one-parameter formulation requires checking the monotonicity of $G_i:=G_{\sigma|\rho}(\alpha_i)$.
On the grid, this is equivalent to the nearest-neighbor conditions
\begin{equation}
    G_i\geq G_{i+1},
    \qquad
    i=1,\dots,N-1.
\end{equation}
Indeed, if these $N-1$ inequalities hold, then for every $i<j$ one has
\begin{equation}
    G_i\geq G_{i+1}\geq\cdots\geq G_j,
\end{equation}
and hence $G_i\geq G_j$.
Thus all pairwise spread inequalities follow.
Therefore, after the Rényi entropies have been evaluated, the number of ordering inequalities to be checked is reduced from $O(N^2)$ to $O(N)$. This reduction concerns the comparison step. 
It does not remove the spectral cost associated with diagonalizing the reduced density matrices or evaluating their Rényi entropies.

As a simple illustration, consider the two spectra
\begin{equation}
    \lambda(\rho_A)=(0.51,0.49),
    \qquad
    \lambda(\sigma_A)=(0.71,0.29).
\end{equation}
Using natural logarithms and sampling $\alpha\in\{0.5,1,2,4,8,\infty\}$
one obtains
\begin{equation}
\begin{array}{c|cccccc}
\alpha 
& 0.5 & 1 & 2 & 4 & 8 & \infty \\
\hline
G_{\sigma|\rho}(\alpha)
& -0.0472 & -0.0908 & -0.1621 & -0.2448 & -0.3002 & -0.3309
\end{array}
\end{equation}
which is monotonically non-increasing. 
Thus, on this grid, all pairwise inequalities \eqref{eq:pairwise} follow from the five nearest-neighbour checks on $G_{\sigma|\rho}$, instead of the fifteen direct checks on the Rényi spreads.

Finally, we stress that a grid-based test verifies the condition only on the chosen discrete set of Rényi indices. 
A proof of continuous antiflat comparability requires either an analytic proof of monotonicity,
\begin{equation}
    \partial_\alpha G_{\sigma|\rho}(\alpha)\leq0,
    \qquad
    \forall \alpha>0,
\end{equation}
or an argument controlling the behavior of $G_{\sigma|\rho}$ between grid points.

\subsection{The iso-purity sign structure}

The previous proposition becomes particularly transparent on the iso-purity manifold, where the second Rényi entropy is fixed and therefore acts as a distinguished anchor.

\begin{proof}
Let $\rho,\sigma\in\mathcal{M}_P$. By definition of the iso-purity manifold, $\Tr(\rho_A^2)=\Tr(\sigma_A^2)=P$,
hence $S_2(\rho_A)=S_2(\sigma_A)$.
Therefore $G_{\sigma|\rho}(2)=S_2(\sigma_A)-S_2(\rho_A)=0$. If moreover $\rho\prec_{AF}\sigma$, then by Proposition~\ref{prop:one_parameter_AF} the function $G_{\sigma|\rho}$ is non-increasing. Hence, for every $\alpha<2$ one has $G_{\sigma|\rho}(\alpha)\ge G_{\sigma|\rho}(2)=0$, while for every $\alpha>2$ one has $G_{\sigma|\rho}(\alpha)\le G_{\sigma|\rho}(2)=0$. Rewriting these inequalities in terms of Rényi entropies yields $S_\alpha(\sigma_A)\ge S_\alpha(\rho_A)
\qquad \forall \alpha<2$, and $S_\alpha(\sigma_A)\le S_\alpha(\rho_A)
\qquad \forall \alpha>2$.
\end{proof}

\subsection{Rank obstruction on a fixed iso-purity manifold}

Here we show that, once the purity is fixed, same-rank comparability becomes trivial.

\begin{proof}[Proof of Proposition~\ref{thm:freezing}]
Let $G_{\sigma|\rho}(\alpha):=S_\alpha(\sigma_A)-S_\alpha(\rho_A)$. Assume first that $\rho\prec_{AF}\sigma$. By Proposition~\ref{prop:one_parameter_AF}, the function $G_{\sigma|\rho}$ is non-increasing on $(0,\infty)$. Assume now that $\rank(\rho_A)=\rank(\sigma_A)$. Using the standard limit of the Rényi entropy at $\alpha\to 0^+$, one has
\begin{equation}
\lim_{\alpha\to 0^+} S_\alpha(\tau_A)=S_0(\tau_A)=\log\rank(\tau_A)
\end{equation}
for any finite-dimensional state $\tau_A$. Therefore
\begin{equation}
\lim_{\alpha\to 0^+}G_{\sigma|\rho}(\alpha) = \log\rank(\sigma_A)-\log\rank(\rho_A)=0.
\end{equation} 
Since $\rho,\sigma\in\mathcal{M}_P$, then $G_{\sigma|\rho}(2)=0.$

We now show that this forces
\begin{equation}
G_{\sigma|\rho}(\alpha)=0
\qquad \forall \alpha\in(0,2].
\end{equation}
Suppose by contradiction that there exists $\alpha_0\in(0,2)$ such that $G_{\sigma|\rho}(\alpha_0)>0$.
Since $G_{\sigma|\rho}$ is non-increasing, for every $0<\alpha<\alpha_0$ one would have
\begin{equation}
G_{\sigma|\rho}(\alpha)\ge G_{\sigma|\rho}(\alpha_0)>0,
\end{equation}
which contradicts the limit
\begin{equation}
\lim_{\alpha\to 0^+}G_{\sigma|\rho}(\alpha)=0.
\end{equation}
Conversely, if there existed $\alpha_0\in(0,2)$ such that $G_{\sigma|\rho}(\alpha_0)<0$,
then monotonicity would imply
\begin{equation}
G_{\sigma|\rho}(2)\le G_{\sigma|\rho}(\alpha_0)<0,
\end{equation}
contradicting $G_{\sigma|\rho}(2)=0$. Hence $G_{\sigma|\rho}(\alpha)=0$, $\forall \alpha\in(0,2]$, then $S_\alpha(\sigma_A)=S_\alpha(\rho_A)
\qquad \forall \alpha\in(0,2]$.
For a finite spectrum, the function $\alpha\mapsto \Tr(\tau_A^\alpha)=e^{(1-\alpha)S_\alpha(\tau_A)}$
is real-analytic on $(0,\infty)$. Therefore equality on the open interval $(0,2)$ implies $\Tr(\sigma_A^\alpha)=\Tr(\rho_A^\alpha)$, $ \forall \alpha>0$.
Thus all power sums of the non-zero eigenvalues coincide. Since a finite spectrum is uniquely determined by its full set of power sums, the ordered spectra must coincide:
\begin{equation}
\lambda(\rho_A)^\downarrow=\lambda(\sigma_A)^\downarrow .
\end{equation}
\end{proof}

\subsection{Interpretation and relation with the freezing of standard majorization}

The previous proposition should be interpreted as a rigidity statement intrinsic to the antiflat order on $\mathcal{M}_P$. In the main text, Proposition~\ref{thm:freezing} shows that standard majorization is frozen on the iso-purity manifold: if $\rho\prec \sigma$ and $\rho,\sigma\in\mathcal{M}_P$, then equality of purity forces equality of the ordered spectra, so no non-trivial majorization flow can occur.

The result proven here is conceptually parallel, but stronger in a different sense. Proposition~\ref{thm:freezing} shows that even for the antiflat ordering, comparability on $\mathcal{M}_P$ becomes trivial as soon as the support size is fixed. Therefore, any non-trivial FPO flow on $\mathcal{M}_P$ must necessarily involve a change in rank. In this way, the endpoint inequalities
\begin{equation}
S_1(\rho_A)\le S_1(\sigma_A),
\qquad
S_\infty(\rho_A)\ge S_\infty(\sigma_A),
\end{equation}
which follow immediately from evaluating the spread constraints at $(\alpha,\beta)=(1,2)$ and $(2,\infty)$, are seen to be the first shadows of a much sharper structural fact: non-trivial antiflat dynamics on an iso-purity manifold cannot occur without a genuine reshaping of the support.

Thus, the freezing of standard majorization and the rank obstruction for antiflat majorization fit together into a single geometric picture. Standard majorization is spectrally rigid on $\mathcal{M}_P$ because it is governed by Schur-concavity at fixed purity, whereas antiflat convertibility probes a different geometry, controlled by the monotonic decrease of Rényi spread. This is precisely why non-trivial FPO flows, when they exist, must lie outside the usual majorization-driven sector.

\section{Probability of Antiflatness Majorization}
\label{App: Probability of Antiflatness}

In this section, we rigorously characterize the geometry of the antiflatness majorization set and explicitly compute the restricted integral Eq.~\eqref{eq: iso_purity_integral} for low-dimensional systems.

\subsection{Proof of Non-Linearity: Why $\mathcal{A}_{AF}$ is not a Polytope}
In standard majorization, the set $M_a(\rho)$ is a polytope because it is defined by a finite number of linear inequalities, namely the $d-1$ constraints on cumulative sums of ordered eigenvalues, $\sum_{i=1}^k x_i \ge \sum_{i=1}^k \rho_i$. A polytope is, by definition, a bounded region obtained as the intersection of finitely many half-spaces, and is therefore enclosed by flat hyperplanes.
The accessible antiflatness set $\mathcal{A}_{AF}(\rho)$ does not satisfy these structural properties. Its boundary is determined by conditions of the form $\Delta_{\alpha\beta}(\vec{x}) = C$, where the Rényi spread depends on the Rényi entropies $S_\alpha(\vec{x}) = \frac{1}{1-\alpha}\log\!\left(\sum_i x_i^\alpha\right)$. For $\alpha \neq 1$, these expressions involve logarithms of polynomial functions, which generate intrinsically curved hypersurfaces in the probability simplex $\mathbf{\Delta_{d-1}}$ rather than affine hyperplanes.
In addition, the defining condition $\Delta_{\alpha\beta}(\vec{x}) \ge \Delta_{\alpha\beta}(\rho)$ must hold for all $\alpha < \beta$. Since the parameter space $(0,\infty)$ is continuous, this amounts to imposing an uncountable family of constraints. Consequently, $\mathcal{A}_{AF}(\rho)$ arises as the intersection of infinitely many non-linear regions.
It follows that $\mathcal{A}_{AF}(\rho)$ is a non-linear geometric object with curved boundaries, and therefore cannot be a polytope.

\subsection{Analytical Calculation of the Accessible Antiflatness Volume for low dimensions}
\label{app:AF_probability_dA2}

In this appendix we compute explicitly the accessible antiflatness probability for $d_A=2$.

\subsection{Determination of the accessible antiflatness set}

Let the target spectrum be $\vec r=(r,1-r), r\in[1/2,1]$, and let the random spectrum be $\vec x=(x,1-x), x\in[1/2,1]$. The accessible antiflatness set is $I_{AF}(r) = \left\{ x\in[1/2,1] \ \middle|\ \Delta_{\alpha\beta}(x) \geq \Delta_{\alpha\beta}(r), \quad \forall\alpha<\beta \right\}$. Equivalently, defining
\begin{equation}
    G_{x|r}(\alpha)
    :=
    S_\alpha(x,1-x)-S_\alpha(r,1-r),
\end{equation}
one has
\begin{equation}
    x\in I_{AF}(r)
    \qquad
    \Longleftrightarrow
    \qquad
    G_{x|r}(\alpha)
    \text{ is non-increasing in } \alpha .
\end{equation}

If the target is flat on its support, then either $r=\frac12$ or r=1. In both cases the Rényi entropy is independent of $\alpha$, and therefore $\Delta_{\alpha\beta}(r)=0, \forall \alpha<\beta$. Since Rényi entropies are non-increasing functions of the Rényi index,
\begin{equation}
    S_\alpha(x,1-x)\geq S_\beta(x,1-x),
    \qquad
    \alpha<\beta,
\end{equation}
we have $\Delta_{\alpha\beta}(x)\geq0 = \Delta_{\alpha\beta}(r)$ for every $x\in[1/2,1]$. Hence $I_{AF}(r)=[1/2,1], r\in\left\{\frac12,1\right\}$.
Now consider a non-flat target, $r\in(1/2,1)$. Suppose that $x\in I_{AF}(r)$. Since both $(x,1-x)$ and $(r,1-r)$ are full-rank binary spectra, one has
\begin{equation}
    S_0(x,1-x)=S_0(r,1-r)=\log 2.
\end{equation}
Thus $ G_{x|r}(0)=0$,
where the value at $\alpha=0$ is understood as the limiting Rényi entropy. If $G_{x|r}$ is non-increasing, then for every $\alpha>0$, $G_{x|r}(\alpha)\leq0$, or equivalently $S_\alpha(x,1-x) \leq S_\alpha(r,1-r)$. For binary ordered spectra, $S_\alpha(x,1-x)$ is strictly decreasing in $x$ on $[1/2,1]$ for every $\alpha>0$. Therefore $x\geq r$.
It remains to show that $x>r$ is impossible. For large $\alpha$, the binary Rényi entropy has the asymptotic expansion
\begin{equation}
    S_\alpha(x,1-x) = -\log x - \frac{\log x}{\alpha-1} + o\!\left(\frac{1}{\alpha}\right).
\end{equation}
Hence
\begin{equation}
    G_{x|r}(\alpha) = \log\frac{r}{x} + \frac{\log(r/x)}{\alpha-1} + o\!\left(\frac{1}{\alpha}\right).
\end{equation}
If $x>r$, then $\log\frac{r}{x}<0$. Consequently, for sufficiently large $\alpha$,
\begin{equation}
    \frac{d}{d\alpha}G_{x|r}(\alpha)>0,
\end{equation}
which contradicts the requirement that $G_{x|r}$ be non-increasing. 
Thus $x>r$ is not allowed, and the only possibility is $x=r$. Therefore, for every non-flat target,
\begin{equation}
    I_{AF}(r)=\{r\},
    \qquad
    r\in(1/2,1).
\end{equation}

\subsection{Evaluation of the probability}

The accessible antiflatness probability is
\begin{equation}
    \mathbb{P}(\sigma_A\succ_{AF}\rho_A)
    =
    \int_{I_{AF}(r)}p_K(x)\,dx,
\end{equation}
where
\begin{equation}
    p_K(x) = \frac{2(2K-1)}{B(K-1,K-1)} (2x-1)^2 [x(1-x)]^{K-2}.
\end{equation}
For a flat target, $r=1/2$ or $r=1$, the accessible set is the whole ordered simplex:
\begin{equation}
    I_{AF}(r)=[1/2,1].
\end{equation}
Therefore
\begin{equation}
    \mathbb{P}(\sigma_A\succ_{AF}\rho_A)
    =
    \int_{1/2}^{1}p_K(x)\,dx
    =
    1.
\end{equation}
For a non-flat target, $r\in(1/2,1)$, the accessible set is a single point:
\begin{equation}
    I_{AF}(r)=\{r\}.
\end{equation}
Since $p_K(x)$ is a continuous density on $[1/2,1]$, the measure of a single point vanishes:
\begin{equation}
    \mathbb{P}(\sigma_A\succ_{AF}\rho_A)
    =
    \int_{\{r\}}p_K(x)\,dx
    =
    0.
\end{equation}
Thus, for $d_A=2$ and arbitrary $d_B=K\geq2$,
\begin{equation}
\boxed{
\mathbb{P}(\sigma_A\succ_{AF}\rho_A)
=
\begin{cases}
1,
&
\rho_A \text{ is flat on its support},
\\[1mm]
0,
&
\rho_A \text{ is full-rank and non-flat}.
\end{cases}
}
\end{equation}
Although the full antiflat order gives only the trivial probabilities above for $d_A=2$, it is useful to record the probability mass of a general interval. 
This becomes relevant if one studies a scalar antiflatness threshold or a finite-grid approximation to the full order. Let $I=[x_-,x_+]\subset[1/2,1]$. Then
\begin{equation}
    \mathbb{P}(x\in I)
    =
    \int_{x_-}^{x_+}p_K(x)\,dx.
\end{equation}
Using the substitution $t=4x(1-x)$, one obtains
\begin{equation}
    \int_{x_-}^{x_+}p_K(x)\,dx
    =
    I_{4x_-(1-x_-)}
    \left(K-1,\frac32\right)
    -
    I_{4x_+(1-x_+)}
    \left(K-1,\frac32\right),
\end{equation}
where $I_z(a,b)$ is the regularized incomplete beta function.
In particular, for an interval of the form $[\lambda,1]$,
\begin{equation}
    \int_{\lambda}^{1}p_K(x)\,dx
    =
    I_{4\lambda(1-\lambda)}
    \left(K-1,\frac32\right).
\end{equation}
For $K=2$, this reduces to
\begin{equation}
    \int_{\lambda}^{1}6(2x-1)^2\,dx
    =
    1-(2\lambda-1)^3.
\end{equation}

\section{Properties of Linear Renyi spread and logarithmic antiflatness}
\label{App: LinearRenyiSpread_properties}
\label{App: LogLambda_properties}
In this section, we explicitly prove the subadditivity of the linear Rényi spread and some properties of the logarithmic antiflatness from those of Rènyi entropies.
\begin{proposition}
Let $\ket{\rho} \in \mathcal{H}_A \otimes \mathcal{H}_B$ and $\ket{\psi} \in \mathcal{H}_{A'} \otimes \mathcal{H}_{B'}$ be quantum states then the following holds
\begin{equation}
    \mathcal{F}_A(\rho) \mathcal{F}_A(\psi)\leq \mathcal{F}_A(\rho \otimes \psi) \leq \mathcal{F}_A(\rho) + \mathcal{F}_A(\psi)\,.
\end{equation}
\end{proposition}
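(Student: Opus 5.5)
The plan is to reduce everything to the power sums of the two marginal spectra and then do elementary algebra. Since $\rho\otimes\psi$ is a product across the bipartition $AA'|BB'$, its reduced state on $AA'$ is $\rho_A\otimes\psi_{A'}$. All moments then factorize:
\begin{equation}
\Tr\big[(\rho_A\otimes\psi_{A'})^k\big]=\Tr(\rho_A^k)\,\Tr(\psi_{A'}^k).
\end{equation}
I will write $a_k:=\Tr(\rho_A^k)$ and $b_k:=\Tr(\psi_{A'}^k)$. I will also set $x:=\mathcal{F}_A(\rho)=a_3-a_2^2$ and $y:=\mathcal{F}_A(\psi)=b_3-b_2^2$. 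By the Positivity property of the Linear Rényi spread stated in the main text, $x,y\ge 0$.

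The key step is to write $a_3=a_2^2+x$ and $b_3=b_2^2+y$ and expand:
\begin{equation}
\mathcal{F}_A(\rho\otimes\psi)=a_3b_3-a_2^2b_2^2=(a_2^2+x)(b_2^2+y)-a_2^2b_2^2=a_2^2\,y+b_2^2\,x+xy .
\end{equation}
This identity is the core of the argument; both inequalities follow from it.

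For the lower bound, every term on the right-hand side is nonnegative, because $x,y\ge0$ and $a_2,b_2>0$. Dropping the first two terms gives $\mathcal{F}_A(\rho\otimes\psi)\ge xy=\mathcal{F}_A(\rho)\mathcal{F}_A(\psi)$.

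For the upper bound, the cross term $xy$ must be absorbed, and this is the only point that needs care. I will regroup the expression as $b_2^2\,x+(a_2^2+x)\,y=b_2^2\,x+a_3\,y$. Eigenvalues of a density matrix lie in $[0,1]$, so $b_2\le 1$ and $a_3\le a_2\le 1$. Hence $\mathcal{F}_A(\rho\otimes\psi)\le x+y$.

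I do not expect a genuine obstacle. The only subtle point is to avoid the naive bound $a_2^2y+b_2^2x+xy\le x+y+xy$, which is too weak; the regrouping into $a_3\,y$ is what absorbs $xy$.
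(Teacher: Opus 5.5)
Your proof is correct and is essentially the paper's argument. Both proofs factorize the moments of $\rho_A\otimes\psi_{A'}$ and reduce to an exact algebraic identity in the power sums.

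For the upper bound, the paper telescopes to $\mathcal{F}_A(\rho)\Tr[\psi_{A'}^3]+\Tr^2[\rho_A^2]\,\mathcal{F}_A(\psi)$, which is the mirror image of your regrouping $b_2^2x+a_3y$. For the lower bound, it uses exactly your identity $\mathcal{F}_A(\rho\otimes\psi)-xy=b_2^2x+a_2^2y\ge 0$.
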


\begin{proof}
    Since $\Tr_{BB'} \big[ \rho \otimes \psi \big] = \rho_A \otimes \psi_{A'}$, we have that
\begin{equation}
\begin{split}
    \mathcal{F}_A(\rho \otimes \psi)
= \Tr[\rho_A^3]\Tr[\psi_{A'}^3]
- \Tr^2[\rho_A^2] \Tr^2[\psi_{A'}^2]
= \mathcal{F}_A(\rho)\Tr[\psi_{A'}^3]
+ \Tr^2[\rho_A^2] \mathcal{F}_A(\psi)
\leq \mathcal{F}_A(\rho) + \mathcal{F}_A(\psi)\,,
\end{split}
\end{equation}
since $\Tr[\psi_{A'}^3]$ and $\Tr^2[\rho_A^2]$ are both between zero and one.
For the lower bound we have that 
\begin{equation}
\mathcal{F}_A(\rho \otimes \psi) - \mathcal{F}_A(\rho)\mathcal{F}_A(\psi)
= \mathcal{F}_A(\rho)\Tr^2[\psi_{A'}^2]
+ \Tr^2[\rho_A^2] \mathcal{F}_A(\psi) \geq 0.
\end{equation}
\end{proof}


\begin{proposition}
    \label{Prop:faith_pos}
    Given $\rho, \rho' \in \mathcal{D}(\mathcal{H})$ on a $d$-dimensional Hilbert space, then the following holds
    \begin{itemize}
    \item (Faithful)  $\log(\Lambda_\rho)=0 $ if and only if $\rho$ is proportional to a projector ;
    \item (Positive) $\log(\Lambda_\rho)\geq 0$;
    \item (Additive) $\log(\Lambda_{\rho \otimes \rho'})=\log(\Lambda_{\rho})+\log(\Lambda_{\rho'})$; 
    \item (Uniform continuity) $|\log(\Lambda_\rho) - \log(\Lambda_{\rho'})| \leq 2 d (2 + 3 d) \epsilon $, with $\epsilon:= \frac{1}{2} \norm{\rho - \rho'}_1$.
\end{itemize}
\end{proposition}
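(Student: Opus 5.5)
Throughout, write $\lambda_i$ for the eigenvalues of $\rho$ and $\mu_i$ for those of $\rho'$, and use $\log\Lambda_\rho=\log\Tr(\rho^3)-2\log\Tr(\rho^2)=2(S_2(\rho)-S_3(\rho))$. The first three items follow from elementary facts about power sums; the continuity bound needs a separate, more computational argument.

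\emph{Positivity.} Cauchy--Schwarz with weights $\lambda_i$ gives
\begin{equation}
\Big(\sum_i \lambda_i\cdot\lambda_i\Big)^2\le \Big(\sum_i\lambda_i\Big)\Big(\sum_i\lambda_i\lambda_i^2\Big),
\end{equation}
that is, $\Tr^2(\rho^2)\le\Tr(\rho^3)$, so $\log\Lambda_\rho\ge0$. Equivalently, $\Tr(\rho^3)/\Tr^2(\rho^2)=1+\mathrm{Var}_\lambda(\lambda)/\mathbb{E}_\lambda[\lambda]^2$, which is the coefficient-of-variation identity in Eq.~\eqref{unification}.

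\emph{Faithfulness.} Equality in Cauchy--Schwarz holds iff $\lambda_i$ is constant on the support of the weights, i.e.\ all nonzero eigenvalues are equal. This is exactly the statement that $\rho=\Pi/\mathrm{rank}(\Pi)$, equivalently that the variance above vanishes.

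\emph{Additivity.} The spectrum of $\rho\otimes\rho'$ is $\{\lambda_i\mu_j\}$, so $\Tr[(\rho\otimes\rho')^k]=\Tr(\rho^k)\Tr(\rho'^k)$. Taking logarithms, the ratio $\Tr(\cdot^3)/\Tr^2(\cdot^2)$ factorizes and $\log\Lambda$ is additive.

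\emph{Uniform continuity: plan.} I will bound the two logarithmic terms separately and use the triangle inequality.

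\begin{enumerate}
\item \textbf{Power sums are Lipschitz.} Since the map $x\mapsto x^k$ has derivative at most $k$ on $[0,1]$, and by Mirsky/Lidskii the ordered eigenvalue vectors satisfy $\sum_i|\lambda_i^\downarrow-\mu_i^\downarrow|\le\|\rho-\rho'\|_1=2\epsilon$, we get
\begin{equation}
|\Tr\rho^k-\Tr\rho'^k|\le k\sum_i|\lambda_i^\downarrow-\mu_i^\downarrow|\le 2k\epsilon .
\end{equation}
\item \textbf{Control the logarithms.} Use $|\log a-\log b|\le |a-b|/\min(a,b)$ together with the lower bounds $\Tr\rho^2\ge 1/d$ and $\Tr\rho^3\ge 1/d^2$, both by power-mean inequalities.
\item \textbf{Combine.} This gives
\begin{equation}
|\log\Tr\rho^3-\log\Tr\rho'^3|\le 6d^2\epsilon,\qquad 2\,|\log\Tr\rho^2-\log\Tr\rho'^2|\le 2\cdot 4d\,\epsilon=8d\,\epsilon .
\end{equation}
Summing yields $(8d+6d^2)\epsilon=2d(4+3d)\epsilon$.
\end{enumerate}

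\emph{Main obstacle.} The stated constant is $2d(2+3d)\epsilon=(4d+6d^2)\epsilon$, smaller than what the plan above gives. The excess sits in the purity term: I need $2\,|\log\Tr\rho^2-\log\Tr\rho'^2|\le 4d\epsilon$ rather than $8d\epsilon$. The first route I would try is a sharper Lipschitz constant for the purity: the bound $|\Tr\rho^2-\Tr\rho'^2|\le 2\epsilon$, obtained from the sharper estimate $\sum_i|\lambda_i^2-\mu_i^2|\le \max_i(\lambda_i+\mu_i)\sum_i|\lambda_i-\mu_i|$ combined with a careful treatment of $\epsilon$. This step is the one I would double-check, and if it fails I would fall back on the slightly weaker constant.
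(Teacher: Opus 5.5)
Your arguments for positivity, faithfulness and additivity are correct and in places more complete than the paper's. The paper derives positivity from $S_2\ge S_3$ and additivity from additivity of R\'enyi entropies, which is equivalent to what you do. For faithfulness, however, it only asserts that $\Tr\rho^3=\Tr^2\rho^2$ forces flatness, and your Cauchy--Schwarz equality case supplies exactly the missing justification. For uniform continuity the paper does no computation; it only cites continuity bounds for R\'enyi entropies. Your direct argument (power sums, then logarithms, then triangle inequality) is a different, self-contained route, and it is the natural one for recovering the stated constant.

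The gap is in the last step, which you flagged yourself. As written, your plan only gives $2d(4+3d)\epsilon$. The remedy you sketch does not close it, because $\max_i(\lambda_i+\mu_i)$ can be arbitrarily close to $2$: for $\lambda=(1,0)$ and $\mu=(1-\epsilon,\epsilon)$ it equals $2-\epsilon$, so that estimate still yields about $4\epsilon$.

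The missing idea is the cancellation coming from $\Tr(\rho-\rho')=0$. Write $\rho-\rho'=P-Q$ with $P,Q\ge 0$ and $\Tr P=\Tr Q=\epsilon$. Then $\Tr\rho^2-\Tr\rho'^2=\Tr[(P-Q)(\rho+\rho')]$ is a difference of two numbers in $[0,2\epsilon]$, since $0\le\rho+\rho'\le 2\mathbb{I}$. Hence $|\Tr\rho^2-\Tr\rho'^2|\le 2\epsilon$. Combined with your unchanged cubic estimate, this gives exactly $6d^2\epsilon+4d\epsilon=2d(2+3d)\epsilon$.

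The same cancellation also improves your cubic term. On ordered spectra, $x_i=\lambda_i^\downarrow-\mu_i^\downarrow$ satisfies $\sum_i x_i=0$, and by Mirsky its positive and negative parts each have mass at most $\epsilon$. Hence $|\sum_i x_i c_i|\le k\epsilon$ for any weights $c_i\in[0,k]$. Taking $c_i=\lambda_i^2+\lambda_i\mu_i+\mu_i^2\in[0,3]$ halves your cubic estimate to $3\epsilon$, so you in fact obtain the stronger bound $d(3d+4)\epsilon$.
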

\begin{proof}
    If $\rho \in \text{FLAT}$, by definition we have that $\log(\Lambda_\rho)=0 $.
    Furthermore, If $\log(\Lambda_\rho)=0 $ this means that $\log \Tr(\rho^3)- \log \Tr^2(\rho^2)=0$, since the logarithm is a monotonic function $\Tr(\rho^3)= \Tr^2(\rho^2)$, hence $\rho$ is flat.
    Positivity follows from the properties of the Renyi entropies $S_\alpha \geq S_\beta$ with $\alpha \leq \beta$.
    Whereas additivity follows from the additivity of the Renyi entropies.
    Finally, uniform continuity follows from the uniform continuity of the Renyi entropies \cite{Chen2017}.
\end{proof}
The above result naturally holds also for $\rho \to \rho_A$.

\section{Proof of Theorem \ref{thm:Maximal_Antiflatness} on Universal Maximal Antiflatness}
\label{App : Maximal_Antiflatness}

In this section, we are going to prove the theorem on Universal maximal antiflatness.
The proof for the varentropy, i.e. capacity of entanglement, was done in~\cite{reeb2015tight}.
Here we show, using the same argument of Lagrangian multipliers, that a similar result, states with a jump spectrum, maximizes also the Linear Renyi spread and the Logarithmic antiflatness. 

We start with the Linear Rényi spread.
Given a fixed $d \geq 2$, one maximizes the expression of $\mathcal{F}_A(\rho)$ over all probability distributions $\{ p_i \}$, i.e. spectra of $\rho$, which leads to the Lagrange function
\begin{equation}
    L( \{p_i \}, v ):=\sum_{i=1}^d p_i^3-\left(\sum_{i=1}^d p_i^2 \right)^2+\nu \left(\sum_{i=1}^d p_i -1 \right) \; ,
\end{equation}
with Lagrange multipliers $\nu$ to ensure the normalization to one.
Suppose $\{ \hat{p}_i \}$ attains the maximum of $\mathcal{F}_A(\rho)$ over all probability distributions $\{ p_i \}$. Indeed, since we are considering a compact set onto which a continuous function is defined and for the extreme value theorem we are guaranteed that the maximum is attained.
We can consider $L$ as a function of those variables $p_i$ for which $\hat{p}_i >0$ and the rest are all zeros.
The Lagrangian multipliers method guarantees the existence of $\hat{\nu} \in \mathds{R} : \frac{d L}{d p_j} \Big |_{\{ \hat{p}_i \}}$ for every $j$  by the extremality of $\{ \hat{p}_i \}$and that all the components are in the interior of the Lagrangian domain.
As a result, we obtain that
\begin{equation}
    \hat{p}_j= \frac{1}{3} \Big ( 2 Pur(\hat{\rho}) \pm \sqrt{4 Pur(\hat{\rho})^2 - 3 \hat{\nu}} \Big ) \; ,
\end{equation}
where $Pur(\hat{\rho})=tr(\hat{\rho}^2)$ denotes the purity of the states that attain the maximum.
We need to find out how many $\hat{p}_j$ have one value and how many the other.
Hence, leaving off hats again, an optimal $\hat{\rho}= \rho$ has the form
\begin{equation}
    \rho = \text{diag} \Big ( \frac{1-r}{m}, \ldots , \frac{1-r}{m},\frac{r}{n}, \ldots, \frac{r}{n}\Big) \; ,
\end{equation}
with $r \in [0,1]$ and $m,n \geq 1$ such that $m+n \leq d$.
Since we can assume that $r\leq \frac{1}{2}$ without loss of generality due to the permutation invariance of the entries of $\rho$, we have that
\begin{equation}
    \mathcal{F}_A(\rho)= \frac{r(1-r)(n(r-1)+mr)^2}{m^2 n^2} \geq 0 \; ,
\end{equation}
for every choice of $m,n,r$ as stated above.
By maximizing over $m,n$ for fixed $r$ one arrives at the conclusion that $m=1$ and $n=d-1$.
In order to find the absolute maximum, we extremize over $r$ and find that 
\begin{equation}
    r_{max}^{\mathcal{F}_A}= \frac{5d-2- \sqrt{4-4d+9d^2}}{8d} \; .
\end{equation}
Thus the maximum value of the antiflatness, defined as $N(d)$, is

\begin{equation}
   N_{\mathcal{F}_A}(d)= \frac{\left(5 d-\sqrt{d (9 d-4)+4}-2\right) \left(3 d+\sqrt{d (9 d-4)+4}-6\right)^2 \left(3 d+\sqrt{d (9 d-4)+4}+2\right)}{4096 (d-1)^2 d^2}\; .
\end{equation}

Additionally, a quite tight upper bound to $N_{\mathcal{F}_A}(d)$ can be found considering that the function is monotonic increasing in $d \geq 2$ and taking the limit for $d \to \infty$
. Namely
\begin{equation}
    \lim_{d \to \infty} N_{\mathcal{F}_A}(d)= \frac{27}{256} \;.
\end{equation}

The same procedure can be followed for the Logarithmic antiflatness.
Given the Lagrange function
\begin{equation}
\begin{split}
    L( \{p_i \}, \nu ):=\log\left(\sum_{i=1}^d p_i^3\right)-\log\left(\left(\sum_{i=1}^d p_i^2 \right)^2\right)+\nu \left(\sum_{i=1}^d p_i -1 \right) \; .
    \end{split}
\end{equation}.
Suppose $\{ \hat{p}_i \}$ attains the maximum of $\log(\Lambda_\rho)$ over all probability distributions $\{ p_i \}$. Indeed, since we are considering a compact set onto which a continuous function is defined and for the extreme value theorem we are guaranteed that the maximum is attained, invoking the same reasoning as for the linear version and using that the logarithm is a monotonic function.
We can consider L as a function of those variables $p_i$ for which $\hat{p}_i >0$ and the rest are all zeros.
The Lagrangian multipliers method guarantees the existence of $\hat{\nu} \in \mathds{R} : \frac{d L}{d p_j} \Big |_{\{ \hat{p}_i \}}$ for every $j$  by the extremality of $\{ \hat{p}_i \}$and that all the components are in the interior of the Lagrangian domain.
As a result, we obtain that
\begin{equation}
\begin{split}
    \hat{p}_j= \frac{1}{3 (\mathcal{F}_A(\hat{\rho})-Pur(\hat{\rho})^2)} \left ( \frac{2}{ Pur(\hat{\rho})} \pm \sqrt{4 Pur(\hat{\rho})^2 - \frac{3 \hat{\nu}}{\mathcal{F}_A(\hat{\rho})-Pur(\hat{\rho})^2}} \right ) \; ,
    \end{split}
\end{equation}
where $Pur(\hat{\rho})$ denotes the purity and $\mathcal{F}_A(\hat{\rho})$ the Linear Rènyi spread of the states that attain the maximum.
We need to find out how many $\hat{p}_j$ have one value and how many the other.
Hence, leaving off hats again, an optimal $\hat{\rho}= \rho$ has the form
\begin{equation}
    \rho = \text{diag} \Big ( \frac{1-r}{m}, \ldots , \frac{1-r}{m},\frac{r}{n}, \ldots, \frac{r}{n}\Big) \; ,
\end{equation}
with $r \in [0,1]$ and $m,n \geq 1$ such that $m+n \leq d$.
Since we can assume that $r\leq \frac{1}{2}$ without loss of generality due to the permutation invariance of the entries of $\rho$, we have that
\begin{equation}
\begin{split}
    \log(\Lambda_\rho)= \log\left( \frac{(1-r)^3}{m^2} + \frac{r^3}{n^2}\right)- 2 \log\left( \frac{(1-r)^2}{m} + \frac{r^2}{n}\right),
    \end{split}
\end{equation}
for every choice of $m,n,r$ as stated above.
By maximizing over $m,n$ for fixed $r$ one arrives at the conclusion that $m=1$ and $n=d-1$.
We are left with determing
\begin{equation}
    N^{\Lambda}(d)= \max_{r \in [0, \frac{1}{2}]} \log(\frac{-d^2 (r-1)^3+2 d (r-1)^3+3 r (r-1)+1}{\left(d (r-1)^2+2 r-1\right)^2})
\end{equation}
which leads to
\begin{equation}
r_{\max}^{\Lambda}=
    \begin{cases}
        \frac{1}{6} \left(3-\sqrt{3}\right) &\quad d=2\\
        2-\sqrt{6} \sin \left(\frac{1}{3} \tan ^{-1}\left(\frac{1}{2 \sqrt{2}}\right)\right)+\sqrt{2} \left(-\cos \left(\frac{1}{3} \tan ^{-1}\left(\frac{1}{2 \sqrt{2}}\right)\right)\right) &\quad d=3\\
       \frac{3}{2}-\frac{1}{2} 3 \sin \left(\frac{\pi }{18}\right)-\frac{1}{2} \sqrt{3} \cos \left(\frac{\pi }{18}\right)& \quad d=4\\
        \frac{4}{3}-\frac{2 \sin \left(\frac{1}{3} \tan ^{-1}\left(\frac{3}{4}\right)\right)}{\sqrt{3}}-\frac{1}{3} 2 \cos \left(\frac{1}{3} \tan ^{-1}\left(\frac{3}{4}\right)\right)&\quad d=5\\
        \frac{5}{4}-\frac{1}{4} \sqrt{15} \sin \left(\frac{1}{3} \tan ^{-1}\left(\frac{2}{\sqrt{5}}\right)\right)-\frac{1}{4} \sqrt{5} \cos \left(\frac{1}{3} \tan ^{-1}\left(\frac{2}{\sqrt{5}}\right)\right)& \quad d=6\\
        \frac{1}{2} &\quad d \geq 7
    \end{cases}\,.
\end{equation}
We can derive an upper bound on $N^{\Lambda}(d)$, namely
\begin{equation}
    \lim_{d \to \infty} N^{\Lambda}(d)= \log 2\, .
\end{equation}

\section{Compatibility of the partial order with antiflatness quantifiers}
\label{app:compatibility_proofs}
In this section we verify the compatibility of the proposed antiflatness ordering described in the main text with the established quantifiers.
\begin{itemize}
    \item \textbf{Compatibility with $\log(\Lambda_{\rho})$:}
    Since $\log(\Lambda_{\rho}) = 2(S_2 - S_3) = 2\Delta_{2,3}(\rho)$, by the definition of the antiflatness ordering, if a state $\rho$ is antiflat majorized by $\sigma$, it strictly follows that $\Delta_{2,3}(\rho) \le \Delta_{2,3}(\sigma)$. Therefore, the ordering is naturally preserved.
    \item \textbf{Compatibility with $\mathcal{V}_A(\rho)$:} 
    The Capacity of Entanglement usually is defined as the second derivative of Renyi entropy in $\alpha$, see \cite{cao2024gravitational}. Here we first show that it can be derived directly from the slope of the Rényi entropy evaluated at $\alpha = 1$,
    \begin{equation}
        \begin{split}
            \left. \frac{\partial S_\alpha(\rho)}{\partial \alpha} \right|_{\alpha=1} &= -\frac{1}{2} \left. \frac{d^2}{d\alpha^2} \ln \text{Tr}(\rho^\alpha) \right|_{\alpha=1} = -\frac{1}{2} \Big( \text{Tr}(\rho \ln^2 \rho) - [\text{Tr}(\rho \ln \rho)]^2 \Big) = -\frac{1}{2} \text{Var}_\rho(-\log \rho) = -\frac{1}{2} \mathcal{V}_A(\rho) \ .
        \end{split}
    \end{equation}
    Because the Rényi entropy is monotonically decreasing, a larger spread $\Delta_{\alpha\beta}$ implies a steeper (more negative) slope at $\alpha=1$. Consequently, an ordering in the spread dictates an ordering in the absolute value of the derivative. Thus, if $\rho \prec_{AF} \sigma$, the absolute value of the derivative for $\rho$ is bounded by that of $\sigma$
    \begin{equation}
        \left| \left. \frac{\partial S_\alpha(\rho)}{\partial \alpha} \right|_{\alpha=1} \right| \le \left| \left. \frac{\partial S_\alpha(\sigma)}{\partial \alpha} \right|_{\alpha=1} \right| \implies \frac{1}{2}\mathcal{V}_A(\rho) \le \frac{1}{2}\mathcal{V}_A(\sigma) \; \ .
    \end{equation}
    Therefore, also the Capacity of Entanglement perfectly respects the antiflatness majorization.

    \item \textbf{Conditional Monotonicity of $\mathcal{F}_A(\rho)$:} 
    The linear Rényi spread $\mathcal{F}_A(\rho)$ requires careful treatment. It measures the antiflatness weighted by the purity of the state. We can rewrite this purely in terms of Rényi entropies
    \begin{equation}
        \mathcal{F}_A(\rho) = e^{-2S_2(\rho)}(e^{2\Delta_{2,3}(\rho)} - 1) \; \ .
    \end{equation}
    In this form, the term $(e^{2\Delta_{2,3}(\rho)} - 1)$ represents the \textit{shape factor} , while the term $e^{-2S_2(\rho)}$ acts as a \textit{scale factor} that depends strictly on the purity. Because of this purity factor, the ordering works only if we compare states possessing the same purity, thereby restricting the valid comparisons to iso-purity manifolds. We formalize this via the following Lemma. 

\end{itemize}

\begin{lemma}[Conditional Monotonicity of $\mathcal{F}_A(\rho)$] \label{lemma: cond monotonicity}
    Let $\rho, \sigma \in \mathcal{D}(\mathcal{H})$. If $\rho \prec_{AF} \sigma$ and $S_2(\rho) \ge S_2(\sigma)$ (which is equivalent to $\text{Tr}(\rho^2) \le \text{Tr}(\sigma^2)$ ), then $\mathcal{F}_A(\rho) \le \mathcal{F}_A(\sigma)$.
    \end{lemma}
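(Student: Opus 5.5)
The argument uses the factorized form derived just above,
\begin{equation}
    \mathcal{F}_A(\tau) = e^{-2S_2(\tau_A)}\left(e^{2\Delta_{2,3}(\tau_A)}-1\right),
\end{equation}
valid for any state $\tau$. This follows from $\Tr(\tau_A^3)=e^{-2S_3(\tau_A)}$ and $\Tr^2(\tau_A^2)=e^{-2S_2(\tau_A)}$. Indeed, $\mathcal{F}_A(\tau)=e^{-2S_3}-e^{-2S_2}=e^{-2S_2}(e^{2(S_2-S_3)}-1)$, so I will first record this identity as a one-line check. The plan is then to treat $\mathcal{F}_A$ as the product of a \emph{scale factor} $a(\tau):=e^{-2S_2(\tau_A)}=\Tr(\tau_A^2)$ and a \emph{shape factor} $b(\tau):=e^{2\Delta_{2,3}(\tau_A)}-1$. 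I will show that each factor is separately ordered between $\rho$ and $\sigma$ and that both are non-negative.

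For the shape factor, I invoke Definition~\ref{def:Antiflatness Ordering} at the single pair $(\alpha,\beta)=(2,3)$. The hypothesis $\rho\prec_{AF}\sigma$ gives $\Delta_{2,3}(\rho_A)\le\Delta_{2,3}(\sigma_A)$, and since $x\mapsto e^{2x}-1$ is increasing, $b(\rho)\le b(\sigma)$. Monotonicity of Rényi entropies in the index gives $\Delta_{2,3}\ge 0$, hence $b(\rho)\ge 0$; this is also the positivity of $\log\Lambda$ in Proposition~\ref{Prop:faith_pos}. For the scale factor, the hypothesis $S_2(\rho_A)\ge S_2(\sigma_A)$ together with the monotonicity of $x\mapsto e^{-2x}$ gives $0<a(\rho)\le a(\sigma)$.

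The conclusion then follows from the elementary inequality for non-negative reals: if $0\le a_1\le a_2$ and $0\le b_1\le b_2$, then $a_1b_1\le a_2b_1\le a_2b_2$. Applying it with $a_1=a(\rho)$, $a_2=a(\sigma)$, $b_1=b(\rho)$, $b_2=b(\sigma)$ yields $\mathcal{F}_A(\rho)\le\mathcal{F}_A(\sigma)$. Note that only the pair $(2,3)$ of the full antiflat order is actually used, so the lemma holds under the weaker hypothesis $\Delta_{2,3}(\rho_A)\le\Delta_{2,3}(\sigma_A)$.

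There is no real obstacle here. The only points needing care are the sign checks that make the product inequality legitimate: $b\ge0$ comes from Rényi monotonicity, and $a>0$ holds trivially. The reduced states may be rank-deficient, but $S_2$ and $S_3$ remain finite because at least one eigenvalue is positive, so the identity requires no full-rank assumption. I will also point out that without the purity hypothesis the argument fails only because the scale factors can be reversed, which is exactly why $\mathcal{F}_A$ is only a conditional monotone.
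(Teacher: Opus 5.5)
Your proposal is correct and matches the paper's proof: the same scale/shape factorization $\mathcal{F}_A=e^{-2S_2}\bigl(e^{2\Delta_{2,3}}-1\bigr)$, the $(2,3)$ instance of $\prec_{AF}$ for the shape factor, the purity hypothesis for the scale factor, and multiplication of the ordered non-negative factors. One harmless slip: $a(\tau)=e^{-2S_2(\tau_A)}$ equals $\Tr^2(\tau_A^2)$, not $\Tr(\tau_A^2)$, as you correctly wrote a line earlier; the monotonicity argument is unaffected.
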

    \begin{proof} 
    We analyze each factor of our function separately.
    Because $\rho \prec_{AF} \sigma$, we know that $\Delta_{2,3}(\rho) \le \Delta_{2,3}(\sigma)$. Since the exponential is a strictly increasing function, it follows that $\left(e^{2\Delta_{2,3}(\rho)} - 1\right) \le \left(e^{2\Delta_{2,3}(\sigma)} - 1\right)$. Imposing the condition of the lemma, $S_2(\rho) \ge S_2(\sigma)$, mathematically implies that $e^{-2S_2(\rho)} \le e^{-2S_2(\sigma)}$.
    By multiplying these non-negative factors, we yield the final inequality $\rho \prec_{AF} \sigma \land \text{Tr}(\rho^2) \le \text{Tr}(\sigma^2) \implies \mathcal{F}_A(\rho) \le \mathcal{F}_A(\sigma) $.
    \end{proof}

\subsubsection{Geometric Interpretation of Iso-Purity Manifolds}\label{sec:iso_purity}
To visualize the state space geometrically can be useful to understand the role of iso-purity manifolds. Fixing the purity to a constant value $P$, such that $\text{Tr}(\rho_A^2) = \sum_{i=1}^{d_A} \lambda_i^2 = P$, is geometrically equivalent to constraining the eigenvalue vector $\vec{\lambda}(\rho_A)^{\downarrow}$ to lie on the surface of a hypersphere of radius $R = \sqrt{P}$, centered at the origin $(0, \dots, 0)$. Naturally, the global bounds on purity restrict this radius to $\frac{1}{\sqrt{d_A}} \le R \le 1$. For $\rho_A$ to represent a valid physical state, it must also belong to the probability simplex, satisfying the normalization constraint $\sum_{i=1}^{d_A} \lambda_i = 1$. In $\mathbb{R}^{d_A}$, this equation defines a $(d_A-1)$-dimensional hyperplane. Therefore, the set of states possessing the exact same purity is given by the intersection between the probability hyperplane and the hypersphere. This geometric intersection generates a $(d_A-2)$-dimensional sphere, which defines the \textit{iso-purity manifolds}. When we evaluate the shape factor $(e^{2\Delta_{2,3}(\rho_A)} - 1)$ from the Rényi spread, we are essentially measuring the antiflatness strictly within one of these $(d_A-2)$-dimensional spherical slices. The function $\mathcal{F}_A(\rho)$ scales this internal geometric property by the squared radius of the hypersphere ($P^2$). Comparing the antiflatness $\mathcal{F}_A$ of two states located on spheres of wildly different radii leads to mathematically ambiguous results.

With this in mind, we can show a counterexample to the inverse logical implication ($\impliedby$) of Lemma~\ref{lemma: cond monotonicity}. We aim to prove that the conditions $\mathcal{F}_A(\rho) \le \mathcal{F}_A(\sigma)$ and $\text{Tr}(\rho_A^2) \le \text{Tr}(\sigma_A^2)$ are strictly not sufficient to conclude that $\rho \prec_{AF} \sigma$. We can prove this by constructing a $3$-dimensional counterexample where a state has a lower $\mathcal{F}_A$ simply because its purity is severely suppressed, despite having a more antiflat spectral shape. Consider two density matrices, $\rho_A$ and $\sigma_A$, $\mathcal{H_A}$ be a 3-dimensional Hilbert space, with the following sorted eigenvalue spectra $\text{spec}(\rho_A) = (0.6, 0.3, 0.1) \ , \; \text{spec}(\sigma_A) = (0.8, 0.2, 0.0)$.
We can immediately see that $\text{Tr}(\rho_A^2) \le \text{Tr}(\sigma_A^2)$ (since $0.46 \le 0.68$), and $\mathcal{F}_A(\rho) \le \mathcal{F}_A(\sigma)$ (since $0.0324 \le 0.0576$). The Rényi spread factor dictates the true antiflatness ordering. We extract it using the identity $(e^{2\Delta_{2,3}} - 1) = \mathcal{F}_A / \text{Tr}(\rho_A^2)^2$. So, $(e^{2\Delta_{2,3}(\rho_A)} - 1) = \frac{0.0324}{0.2116} \approx 0.1531 \ge (e^{2\Delta_{2,3}(\sigma_A)} - 1) = \frac{0.0576}{0.4624} \approx 0.1245$, which implies that $\Delta_{2,3}(\rho_A) > \Delta_{2,3}(\sigma_A)$. However, because its purity ($0.46$) is significantly lower than $\sigma$'s purity ($0.68$), the global measure $\mathcal{F}_A(\rho)$ is artificially suppressed below $\mathcal{F}_A(\sigma)$. Knowing $\mathcal{F}_A(\rho) \le \mathcal{F}_A(\sigma)$ and $\text{Tr}(\rho_A^2) \le \text{Tr}(\sigma_A^2)$ is not sufficient to claim $\rho \prec_{AF} \sigma$. Because of the definition of antiflat ordering ($\rho \prec_{AF} \sigma$) strictly requires $\Delta_{\alpha\beta}(\rho_A) \le \Delta_{\alpha\beta}(\sigma_A)$ to hold for all $\alpha < \beta$, finding a single violation at $(\alpha=2, \beta=3)$ is sufficient to prove that $\rho \not\prec_{AF} \sigma$.

\section{Relation between different quantifiers}
\label{App : relation quantifiers}
Since we have been studying each antiflatness quantifier independently, it is also useful to show what the relations among them are.
\begin{proposition}[Local expansion of the entanglement entropy spread]
Let $\rho$ be a full-rank density operator on a finite-dimensional Hilbert space $\mathcal{H}$, that is, 
$\rho > 0$ and $\mathrm{tr}\,\rho = 1$. 
Consequently, for any $0 \leq \alpha < \beta \leq \infty$ such that $\alpha, \beta \to 1$,
\begin{equation}
    \Delta^{\alpha \beta}(\rho)= \frac{\beta - \alpha}{2}\,
    \mathcal{V}_A(\rho)
    + o(|\beta - \alpha|).
\end{equation}
\end{proposition}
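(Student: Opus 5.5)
The plan is to treat $\alpha\mapsto S_\alpha(\rho)$ as a smooth function near $\alpha=1$ and apply the mean value theorem or a first-order Taylor expansion, using the derivative identity already established in Appendix~\ref{app:compatibility_proofs}, namely $\partial_\alpha S_\alpha(\rho)|_{\alpha=1}=-\tfrac12\mathcal{V}_A(\rho)$.

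\textbf{Step 1 (regularity).} Since $\rho>0$ on a finite-dimensional space, $Z(\alpha)=\Tr(\rho^\alpha)=\sum_i\lambda_i^\alpha$ is real-analytic and strictly positive on a neighbourhood of $\alpha=1$. Hence $f(\alpha):=\log Z(\alpha)$ is analytic there, with $f(1)=0$. The Rényi entropy is $S_\alpha=f(\alpha)/(1-\alpha)$, which has a removable singularity at $\alpha=1$. Writing $f(\alpha)=\int_0^1 f'(1+t(\alpha-1))\,dt\,(\alpha-1)$ gives
\begin{equation}
S_\alpha=-\int_0^1 f'\bigl(1+t(\alpha-1)\bigr)\,dt ,
\end{equation}
so $S_\alpha$ is analytic, and in particular $C^1$, on a neighbourhood $U$ of $1$. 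Differentiating under the integral yields $\partial_\alpha S_\alpha|_{\alpha=1}=-\tfrac12 f''(1)=-\tfrac12\mathcal{V}_A(\rho)$, recovering the earlier identity; here $f''(1)=\Tr(\rho\log^2\rho)-\Tr^2(\rho\log\rho)$.

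\textbf{Step 2 (expansion).} For $\alpha<\beta$ in $U$, the fundamental theorem of calculus gives
\begin{equation}
\Delta_{\alpha\beta}=S_\alpha-S_\beta=-\int_\alpha^\beta\partial_\gamma S_\gamma\,d\gamma .
\end{equation}
Set $g(\gamma):=-\partial_\gamma S_\gamma$, which is continuous on $U$ with $g(1)=\tfrac12\mathcal{V}_A(\rho)$. Then
\begin{equation}
\Delta_{\alpha\beta}=(\beta-\alpha)g(1)+\int_\alpha^\beta\bigl(g(\gamma)-g(1)\bigr)\,d\gamma .
\end{equation}
The remainder is bounded by $(\beta-\alpha)\sup_{[\alpha,\beta]}|g-g(1)|$. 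This is $o(\beta-\alpha)$ as $\alpha,\beta\to1$, by continuity of $g$ at $1$. This is exactly the claim.

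\textbf{Main obstacle.} The delicate point is the interpretation of the hypothesis ``$0\le\alpha<\beta\le\infty$ with $\alpha,\beta\to1$''. The expansion is only meaningful locally, so I will state that both indices lie in a neighbourhood of $1$; endpoints like $0$ or $\infty$ are irrelevant in that limit.

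The other point to handle with care is uniformity: the $o(\cdot)$ must hold jointly in $(\alpha,\beta)$, not just along $\beta=\alpha+h$ with $\alpha$ fixed. The integral form of the remainder handles this uniformly, since it only needs continuity of $g$ at $1$.

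If one wants a quantitative error, analyticity gives $g(\gamma)-g(1)=O(|\gamma-1|)$. The error then becomes $O\bigl((\beta-\alpha)\max(|\alpha-1|,|\beta-1|)\bigr)$, with constant controlled by $f'''$ on $U$, i.e.\ by third cumulants of $-\log\rho$. I would note this as a remark but not require it.
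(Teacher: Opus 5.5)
Your proof is correct. It rests on the same two facts as the paper's: $f(\alpha)=\log\Tr\rho^\alpha$ is analytic near $\alpha=1$ with $f(1)=0$, $f'(1)=-S(\rho)$, $f''(1)=\mathcal{V}_A(\rho)$, and consequently $\partial_\alpha S_\alpha|_{\alpha=1}=-\tfrac12\mathcal{V}_A(\rho)$.

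Where you differ is the final step. The paper Taylor-expands $f$ to second order and divides by $1-\alpha$, which gives the single-index expansion $S_\alpha=S(\rho)-\frac{\alpha-1}{2}\mathcal{V}_A(\rho)+o(\alpha-1)$. It then subtracts this expansion at $\beta$ from the one at $\alpha$. Read literally, that subtraction only bounds the error by $o(|\alpha-1|)+o(|\beta-1|)$. This is not $o(|\beta-\alpha|)$ when $\beta-\alpha$ shrinks faster than the distance of the indices from $1$; for example, $\alpha=1+h$, $\beta=1+h+h^3$ needs $o(h^3)$, but subtraction yields only $o(h)$.

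Your representation $\Delta_{\alpha\beta}=\int_\alpha^\beta g(\gamma)\,d\gamma$, with $g=-\partial_\gamma S_\gamma$ continuous at $1$, closes exactly this gap. The remainder $(\beta-\alpha)\sup_{[\alpha,\beta]}|g-g(1)|$ is $o(\beta-\alpha)$ jointly in $(\alpha,\beta)$, and your remark that it is $O\bigl((\beta-\alpha)\max(|\alpha-1|,|\beta-1|)\bigr)$ makes this quantitative. In addition, your formula $S_\alpha=-\int_0^1 f'\bigl(1+t(\alpha-1)\bigr)\,dt$ shows that $S_\alpha$ is smooth through the removable singularity at $\alpha=1$, which the paper uses only implicitly.

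In short, the paper's route gives the pointwise expansion of $S_\alpha$ itself in one line. Yours delivers the uniform two-index statement that the proposition actually asserts.
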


\begin{proof}
    Then the Rényi-$\alpha$ entropy is real-analytic in $\alpha$ around $\alpha = 1$, and the following expansion holds:
\begin{equation}
    S_{\alpha}(\rho)
    = S(\rho)
    - \frac{\alpha - 1}{2}\,\mathcal{V}_A(\rho)
    + o(\alpha - 1),
    \qquad (\alpha \to 1),
\end{equation}
where $S(\rho) = -\mathrm{tr}(\rho \log \rho)$.
\end{proof}

\begin{proof}
Let $\rho$ be a full-rank density operator on a finite-dimensional Hilbert space $\mathcal{H}$. 
Write the spectral decomposition $\rho = \sum_i p_i \, |i\rangle\langle i|$, with $p_i > 0$ and $\sum_i p_i = 1$.
Define
\begin{equation}
    \mathbb{Z}(\alpha) := \mathrm{tr}\,\rho^\alpha = \sum_i p_i^\alpha,
    \qquad
    f(\alpha) := \log \mathbb{Z}_\alpha.
\end{equation}
Then the Rényi-$\alpha$ entropy can be written as
\begin{equation}
    S_\alpha(\rho) = \frac{f(\alpha)}{1-\alpha}.
\end{equation}
Since all $p_i > 0$, $f(\alpha)$ is real-analytic around $\alpha = 1$, and admits a Taylor expansion
\begin{equation}
    f(\alpha) = f(1) + f'(1)(\alpha-1) + \frac{1}{2} f''(1) (\alpha-1)^2 + o((\alpha-1)^2),
\end{equation}
with
\begin{equation}
\begin{split}
    f(1) &= \log \sum_i p_i = 0, 
    \quad
    f'(1) = \sum_i p_i \log p_i = -S(\rho), \\
    f''(1) &= \sum_i p_i (\log p_i)^2 - (\sum_i p_i \log p_i)^2 = \mathcal{V}_A(\rho).
    \end{split}
\end{equation}
Substituting into $S_\alpha(\rho)$, we get
\begin{equation}
\begin{split}
    S_\alpha(\rho) 
    &= \frac{f(\alpha)}{1-\alpha} = \frac{-S(\rho)(\alpha-1) + \frac{1}{2}\mathcal{V}_A(\rho)(\alpha-1)^2 + o((\alpha-1)^2)}{-(\alpha-1)} \\
    &= S(\rho) - \frac{\alpha-1}{2}\,\mathcal{V}_A(\rho) + o(\alpha-1),
\end{split}
\end{equation}
which is the desired expansion.
Finally, for any $\alpha, \beta \to 1$, subtracting the expansions yields
\begin{equation}
    \Delta^{\alpha \beta}(\rho) := S_\alpha(\rho) - S_\beta(\rho) 
    = \frac{\beta - \alpha}{2}\,\mathcal{V}_A(\rho) + o(|\beta - \alpha|).
\end{equation}
\end{proof}

\begin{proposition}[Relation between different measures]
Let $\rho$ be a state on
a $d$-dimensional system. Then
\begin{equation}
\begin{split}
     \log(\Lambda_\rho)
    &\;\le\;  d^2 \mathcal{F}_A(\rho) \;\le\; \left(\frac{e^{\Delta^{0,\infty}(\rho)}}{2}\right)^2.
    \end{split}
\end{equation}
\end{proposition}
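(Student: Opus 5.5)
The plan is to prove the two inequalities separately. Both follow from writing $\mathcal{F}_A(\rho)$ as the variance $\operatorname{Var}_p(p)$ of the eigenvalue distribution. Throughout, $p=(p_1,\dots,p_d)$ is the spectrum, $\mathbb{Z}_2=\sum_i p_i^2$, and $\Delta^{0,\infty}(\rho)=S_0-S_\infty=\log(\operatorname{rank}\rho\,\lambda_{\max})$.

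\emph{Left inequality.} I would start from the unification identity \eqref{unification}, $\log(\Lambda_\rho)=\log\bigl(1+\mathcal{F}_A(\rho)/\mathbb{Z}_2^2\bigr)$. The elementary bound $\log(1+x)\le x$ for $x\ge 0$ gives $\log(\Lambda_\rho)\le \mathcal{F}_A(\rho)/\mathbb{Z}_2^2$. By Cauchy--Schwarz, $\mathbb{Z}_2\ge 1/\operatorname{rank}\rho\ge 1/d$, so $1/\mathbb{Z}_2^2\le d^2$. This yields $\log(\Lambda_\rho)\le d^2\mathcal{F}_A(\rho)$ with no further assumption.

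\emph{Right inequality.} I would bound the variance directly. Since $p_i\le\lambda_{\max}$,
\begin{equation}
\mathcal{F}_A(\rho)=\mathbb{Z}_3-\mathbb{Z}_2^2\le \lambda_{\max}\mathbb{Z}_2-\mathbb{Z}_2^2=\mathbb{Z}_2(\lambda_{\max}-\mathbb{Z}_2)\le \frac{\lambda_{\max}^2}{4},
\end{equation}
where the last step maximizes the quadratic $\mathbb{Z}_2(\lambda_{\max}-\mathbb{Z}_2)$ over $\mathbb{Z}_2$. (Equivalently, apply Popoviciu's inequality to a random variable supported in $[0,\lambda_{\max}]$.) Multiplying by $d^2$ gives $d^2\mathcal{F}_A(\rho)\le (d\lambda_{\max}/2)^2$. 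This equals $\bigl(e^{\Delta^{0,\infty}(\rho)}/2\bigr)^2$ exactly when $\operatorname{rank}\rho=d$.

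\emph{Main obstacle: the rank.} The step I expect to be the real issue is this identification of $d$ with the rank. For rank-deficient states the right inequality as written can fail. For example, the spectrum $(0.8,0.2,0,\dots,0)$ with large $d$ gives $d^2\mathcal{F}_A=0.0576\,d^2$, which grows with $d$, whereas $e^{2\Delta^{0,\infty}}/4=0.64$ stays fixed. I would therefore prove the statement under the full-rank hypothesis already used in the preceding proposition, $\rho>0$. The alternative is to read $d$ as the dimension of the support, i.e.\ replace $\mathcal{H}$ by $\operatorname{supp}\rho$. The left inequality also holds with $d$ replaced by $\operatorname{rank}\rho$. So with that reading the chain $\log(\Lambda_\rho)\le r^2\mathcal{F}_A(\rho)\le (r\lambda_{\max}/2)^2=\bigl(e^{\Delta^{0,\infty}(\rho)}/2\bigr)^2$, with $r=\operatorname{rank}\rho$, holds for every state. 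I would state this rank-$r$ version as the form the proof actually establishes.
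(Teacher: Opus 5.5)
Your argument is the paper's own: the left inequality comes from $\log(1+x)\le x$ together with $\mathbb{Z}_2\ge 1/d$, and the right one from $\mathbb{Z}_3\le\lambda_{\max}\mathbb{Z}_2$ followed by maximizing $\mathbb{Z}_2(\lambda_{\max}-\mathbb{Z}_2)$, so it is correct. Your rank objection also holds: the paper's step $p_{\max}=e^{-S_0+\Delta^{0,\infty}}\le e^{\Delta^{0,\infty}}/d$ runs the wrong way for rank-deficient states, since $e^{-S_0}=1/\operatorname{rank}\rho\ge 1/d$, so the right inequality genuinely needs $\rho>0$ (or $d$ replaced by $\operatorname{rank}\rho$), exactly as your $(0.8,0.2,0,\dots,0)$ counterexample with $d\ge 4$ shows.
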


\begin{proof}
By definition, the logarithmic anti-flatness is
\begin{equation}
    \log(\Lambda_\rho) = \log\Big( 1 + \frac{\mathcal{F}_A(\rho)}{\mathrm{tr}^2(\rho^2)} \Big),
\end{equation}
where $\mathcal{F}_A(\rho) \ge 0$.
Since $\log(1+x) \le x$ for all $x \ge 0$, we have
\begin{equation}
    \log(\Lambda_\rho) \le  \frac{\mathcal{F}_A(\rho)}{\mathrm{tr}^2(\rho^2)}.
\end{equation}
Finally, using the general lower bound on the purity,
\begin{equation}
    \mathrm{tr}(\rho^2) \ge \frac{1}{d} \quad \implies \quad \frac{1}{\mathrm{tr}^2(\rho^2)} \le d^2,
\end{equation}
we obtain the fully universal upper bound
\begin{equation}
    \Delta^{2,3}(\rho) \le d^2 \, \mathcal{F}_A(\rho).
\end{equation}
To bound $\mathcal{F}_A(\rho)$ in terms of the entanglement spread $\Delta^{0,\infty}(\rho)$, note that the largest eigenvalue $p_{\max}$ of $\rho$ satisfies
\begin{equation}
    p_{\max} = e^{-S_\infty(\rho)} = e^{-S_0(\rho) +\Delta^{0,\infty}(\rho)} \le \frac{e^{\Delta^{0,\infty}(\rho)}}{d}.
\end{equation}
Then, since 
\begin{equation}
    \begin{split}
    \mathcal{F}_A(\rho) &= \sum_i p_i^3 - (\sum_i p_i^2)^2\leq \sum_i p_i^2 (p_{max} - \sum_i p_i^2) \leq p_{\max}^2 / 4 
    \end{split}
\end{equation}
maximum of $x(p_{max}-x)$ occurs at $x=p_{max}/2$, we get
\begin{equation}
    \mathcal{F}_A(\rho) \le \frac{p_{\max}^2}{4} \le \left(\frac{e^{\Delta^{0,\infty}(\rho)}}{2 d}\right)^2.
\end{equation}
\end{proof}

\begin{proposition}[Dipendence on escort distribution]
    Let $p = \{p_i\}_{i=1}^n$ be a probability distribution, i.e. $p_i \ge 0 \; \; \; \forall i$ and $\sum_{i=1}^n p_i = 1$. So, for a real parameter $q$ the escort distribution is defined as 
\begin{equation}
   P^{(q)} = \frac{p_i^q}{\mathbb{Z}_q} \; \; \; \ , \; \; \; \mathbb{Z}_q = \sum_{j=1}^n p_j^q \; \; . 
\end{equation}
The escort distribution has the same formal structure as the generalized canonical distribution of thermodynamics.
For $q \ne 1$ we can rewrite both the Rényi entropies and the Tsallis entropies as functions of the escort normalization constant $\mathbb{Z}_q$
\begin{equation}\begin{split}
    S_{\alpha}(p) &= \frac{1}{1-\alpha} \log(\sum_{i=1}^n p_i^{\alpha}) =  \frac{1}{1-\alpha} \log(\mathbb{Z}_{\alpha}) \; \; ,\\
    T_{\alpha}(p) &= \frac{1}{1-\alpha}(1 - \sum_{i=1}^n p_i^{\alpha}) = \frac{1 - \mathbb{Z}_{\alpha}}{1-\alpha} \; \; .
\end{split}\end{equation}
From which we can rewrite the escort normalization constant $\mathbb{Z}_q$ as
\begin{equation}
\mathbb{Z}_{\alpha}
:= \sum_{i} p_i^{\alpha}
= \exp\!\big( (1-\alpha)\, S_{\alpha}(\rho) \big) \, ,
\end{equation}
and so we have
\begin{equation}
T_{\alpha}(\rho)
= \frac{1-\mathbb{Z}_{\alpha}}{1-\alpha}
= \frac{1-\exp\!\big((1-\alpha)\, S_{\alpha}(\rho)\big)}{1-\alpha} \, .
\end{equation}
In particular, for $\alpha = 3$, we have
\begin{equation}
T_3(\rho)
= \frac{1-\exp\!\big(-2\, S_3(\rho)\big)}{-2}
= \frac{\exp\!\big(-2\, S_3(\rho)\big)-1}{2} \, ,
\end{equation}
and for $\alpha = 2$, the Tsallis entropy reads
\begin{equation}
T_2(\rho)
= 1 - \exp\!\big(- S_2(\rho)\big) \, .
\end{equation}
So, the antiflatness of $\rho$ can be rewritten as
\begin{equation}
\mathcal{F}_A(\rho) = 1 - \mathbb{Z}_3 - \left(2 - \mathbb{Z}_2\right)^2
= \exp\!\big(-2\, S_3(\rho)\big)
- \exp\!\big(-2\, S_2(\rho)\big) \; \ .
\end{equation}
While the capacity of entanglement reads
\begin{equation}
\begin{aligned}
\mathcal{V}_A(\rho)
&:= \mathrm{Var}_{\rho}(-\log \rho)= \mathrm{tr}\!\left( \rho \, \log^{2}\rho \right)
   - \mathrm{tr}^{2}\!\left( \rho \log \rho \right)
= \left. \frac{d^{2}}{dq^{2}} \log \mathbb{Z}_{q} \right|_{q=1}= \left. \frac{d^{2}}{dq^{2}}
\Big[ (1-q)\, S_{q}(\rho) \Big] \right|_{q=1} \, .
\end{aligned}
\end{equation}
Clearly, the logarithmic antiflatness of $\rho$ remains unchanged, since it is expressed directly in terms of Rényi entropies
\begin{equation}
\begin{aligned}
\log \Lambda_{\rho}
&:= \log\!\left(
\frac{\mathrm{tr}(\rho^{3})}{\mathrm{tr}^{2}(\rho^{2})}
\right)
= \log \mathbb{Z}_{3} - 2 \log \mathbb{Z}_{2}= (1-3)\, S_{3}(\rho) - 2(1-2)\, S_{2}(\rho)
= 2\big( S_{2}(\rho) - S_{3}(\rho) \big) \, .
\end{aligned}
\end{equation}
\end{proposition}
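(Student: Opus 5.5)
The plan is to verify the four identities of the proposition one at a time. All of them follow from the single relation $\mathbb{Z}_\alpha=\sum_i p_i^\alpha=\exp((1-\alpha)S_\alpha)$, which holds for $\alpha\neq1$. That relation itself just inverts the definition $S_\alpha=\frac{1}{1-\alpha}\log\mathbb{Z}_\alpha$. The Tsallis expression then comes from substituting it into $T_\alpha=\frac{1-\mathbb{Z}_\alpha}{1-\alpha}$. Setting $\alpha=3$ gives $T_3=\frac{e^{-2S_3}-1}{2}$, and setting $\alpha=2$ gives $T_2=1-e^{-S_2}$. These steps are purely algebraic.

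For $\mathcal{F}_A$, I start from the definition $\mathcal{F}_A=\mathbb{Z}_3-\mathbb{Z}_2^2$. The two escort constants are $\mathbb{Z}_3=e^{-2S_3}$ and $\mathbb{Z}_2=e^{-S_2}$, so $\mathbb{Z}_2^2=e^{-2S_2}$. This immediately gives $\mathcal{F}_A=e^{-2S_3}-e^{-2S_2}$.

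The intermediate form written in the statement needs a check at this step. Substituting $T_3$ and $T_2$ into $1-2T_3-(1-T_2)^2$ gives $1-(\mathbb{Z}_3-1)\cdot(-1)\ldots$. More carefully, $-2T_3=1-\mathbb{Z}_3$ and $1-T_2=\mathbb{Z}_2$, so the expression reduces to $2-\mathbb{Z}_3-\mathbb{Z}_2^2$. That does not match $\mathbb{Z}_3-\mathbb{Z}_2^2$.

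I therefore expect the displayed $\mathbb{Z}$-form in the statement to contain a sign or convention slip. The plan is to establish the exponential form, which agrees with the definition $\operatorname{tr}\rho^3-\operatorname{tr}^2\rho^2$. I will note that the Tsallis-based rewriting must be read with the convention $T_3=\frac{\mathbb{Z}_3-1}{2}$, i.e.\ with the sign adjusted. Reconciling these conventions is the only real obstacle.

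For the capacity, I set $f(q)=\log\mathbb{Z}_q$ and differentiate directly. The first derivative is $f'=\frac{\sum p_i^q\log p_i}{\mathbb{Z}_q}$. The second derivative is $f''=\frac{\sum p_i^q\log^2p_i}{\mathbb{Z}_q}-(f')^2$. At $q=1$, where $\mathbb{Z}_1=1$, this gives $\operatorname{tr}(\rho\log^2\rho)-\operatorname{tr}^2(\rho\log\rho)=\mathcal{V}_A$. This is the same computation as in the local-expansion proposition above, where $f''(1)=\mathcal{V}_A$ was already obtained.

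Since $\log\mathbb{Z}_q=(1-q)S_q$ holds for $q\neq1$ and both sides extend analytically to $q=1$ (the full-rank assumption makes $f$ real-analytic), the two expressions agree as functions. Their second derivatives at $q=1$ therefore coincide.

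Finally, for $\log\Lambda_\rho$, I write $\log\mathbb{Z}_3-2\log\mathbb{Z}_2=-2S_3+2S_2$ using $\log\mathbb{Z}_\alpha=(1-\alpha)S_\alpha$. This yields $2(S_2-S_3)$ directly.
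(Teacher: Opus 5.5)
Your route is the same as the paper's: every identity comes from substituting $\mathbb{Z}_\alpha=e^{(1-\alpha)S_\alpha}$. Your derivations of $\mathcal{F}_A=\mathbb{Z}_3-\mathbb{Z}_2^2=e^{-2S_3}-e^{-2S_2}$, of both forms of $\mathcal{V}_A$, and of $\log\Lambda_\rho=2(S_2-S_3)$ are correct. For $\mathcal{V}_A$ you also supply the differentiation and the continuation argument that the paper leaves implicit. You are right that the $\mathbb{Z}$-form of $\mathcal{F}_A$ cannot be established. In fact $1-\mathbb{Z}_3-(2-\mathbb{Z}_2)^2$ is false regardless of any Tsallis convention: for a pure state it equals $-1$, while $\mathcal{F}_A=0$. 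So only the outer equality of that line is provable.

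Where your write-up goes wrong is in locating and repairing the error.

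From the displayed formula $T_\alpha=\frac{1-\mathbb{Z}_\alpha}{1-\alpha}$, the choice $\alpha=3$ does give $T_3=\frac{e^{-2S_3}-1}{2}$. But $\alpha=2$ gives $T_2=\frac{1-\mathbb{Z}_2}{-1}=e^{-S_2}-1$, not $1-e^{-S_2}$. So your claim that both specializations are purely algebraic fails for $T_2$.

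The statement's $T_3$ line uses that sign, while its $T_2$ line uses the main-text convention $T_\alpha=\frac{\mathrm{tr}\rho^\alpha-1}{1-\alpha}$. This mixture is exactly what produces your $2-\mathbb{Z}_3-\mathbb{Z}_2^2$.

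Consequently your remedy, as written (read the rewriting with $T_3=\frac{\mathbb{Z}_3-1}{2}$), is backwards. That is the convention already in use, and it is what gives $1-2T_3=2-\mathbb{Z}_3$. The consistent fix is the main-text convention for both indices, $T_3=\frac{1-\mathbb{Z}_3}{2}$ and $T_2=1-\mathbb{Z}_2$. Under it, $1-2T_3-(1-T_2)^2=\mathbb{Z}_3-\mathbb{Z}_2^2$ holds identically.

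A minor point: this proposition only assumes $p_i\ge 0$, so you cannot invoke full rank for the analyticity of $f(q)=\log\mathbb{Z}_q$. You do not need it: for $q>0$ zero eigenvalues drop out, and $f(q)=\log\sum_{p_i>0}p_i^q$ is real-analytic on $(0,\infty)$.
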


\section{Derivation of the Capacity of Entanglement as KL Curvature}\label{App: DKL Derivation}
In this section, we explicitly derive the connection between the Kullback-Leibler divergence, the escort distributions, and the Capacity of Entanglement shown in Eq.~\eqref{eq:capDKL}. Let $p$ be the eigenvalue spectrum of the state $\rho$. We evaluate the divergence between $p$ and its generalized escort distribution of order $\alpha = 1+\epsilon$, defined as 
\begin{equation}
    P^{(1+\epsilon)}_i = p_i^{1+\epsilon} / \sum_j p_j^{1+\epsilon} \; \;  .
\end{equation}
To first order in $\epsilon$, the perturbation $\delta p_i = P^{(1+\epsilon)}_i - p_i$ is given by

\begin{equation}
\delta p_i \simeq \epsilon p_i (\log p_i - \langle \log p \rangle_p) = \epsilon p_i (\log p_i + S_1(p)) \ .
\end{equation}
We expand the KL divergence $D_{KL}(p || p + \delta p)$ using the Taylor series $\log(1+x) \simeq x - \frac{x^2}{2}$
\begin{equation}
D_{KL}(p || p+\delta p) \simeq \sum_{i=1}^d p_i \left( -\log\left(1 + \frac{\delta p_i}{p_i}\right) \right) \simeq \sum_{i=1}^d p_i \left( -\frac{\delta p_i}{p_i} + \frac{1}{2}\left(\frac{\delta p_i}{p_i}\right)^2 \right) \ .
\end{equation}
Because total probability is conserved, the linear perturbation terms sum to zero ($\sum_i \delta p_i = 0$). The linear term thus vanishes, leaving a purely quadratic form defined by the Hessian
\begin{equation}
D_{KL}(p || p+\delta p) \simeq \frac{1}{2} \sum_{i=1}^d \frac{(\delta p_i)^2}{p_i} \ .\end{equation}
Substituting our specific $\delta p_i$ driven by the escort perturbation, we obtain
\begin{equation}
D_{KL}(p || P^{(1+\epsilon)}) \simeq \frac{1}{2} \sum_{i=1}^d \frac{1}{p_i} \left[ \epsilon p_i (\log p_i + S_1(p)) \right]^2 = \frac{\epsilon^2}{2} \sum_{i=1}^d p_i (\log p_i + S_1(p))^2 \ .
\end{equation}
We immediately recognize the remaining sum as the exact definition of the variance of the surprisal operator, $\text{Var}_p(-\log p)$, which is the Capacity of Entanglement $\mathcal{V}_A(\rho)$. Rearranging the terms directly yields the main text result
\begin{equation}
\mathcal{V}_A(\rho) = \lim_{\epsilon \to 0} \frac{2}{\epsilon^2} D_{KL}(\rho || \rho_{\text{escort}}^{(1+\epsilon)}) \ .
\end{equation}

\section{Haar random states ensemble and antiflatness}\label{app:Haar average}
In this section, we are going to compute the average and variance of Linear Renyi spread according to the Haar average
\begin{equation}
    \mathds{E}_{U}[\mathcal{F}_A(\rho_A)]=\mathds{E}_{U}[\Tr(\rho_A^3)]-\mathds{E}_{U}[\Tr^2(\rho_A^2)]\, .
    \label{eq:Haar_Flat}
\end{equation}
A well-known result \cite{zyczkowski2006GeometryQuantumStates}
\begin{equation}
\label{eq:tr_3}
    \mathds{E}_{U}[\Tr(\rho_A^3)]=\frac{d_A^2+d_B^2+3d_A d_B+1}{(d_A d_B +1)(d_A d_B +2)}\, , \quad \mathds{E}_{U}[\Tr(\rho_A^2)]=\frac{d_A+d_B}{d_A d_B +1}\,.
\end{equation}
In order to compute the second element we can use another statement in \cite{liu2018entanglement}, namely
\begin{equation}
    \int \mathrm{d} U\left(\Tr[\rho_{A C}^\alpha]\right)^s =\frac{1}{d^{s \alpha}} \sum_{\sigma, \gamma \in S_{s \alpha}} d_A^{\xi\left(\sigma \tau_{\alpha, s}\right)} d_B^{\xi(\sigma)} d_C^{\xi\left(\gamma \tau_{\alpha, s}\right)} d_D^{\xi(\gamma)} \mathrm{Wg}\left(d, \sigma \gamma^{-1}\right)
\end{equation}
where $\tau_{\alpha, s}:=\prod_{r=0}^{s-1}(\alpha r+1 \; \alpha r+2 \cdots \alpha(r+1))$ is the product of canonical full cycles on each of the $s$ blocks with $\alpha$ symbols. 
Here $\mathrm{Wg}$ are the Weingarten functions of $U(d)$.
In our case, this reads as

\begin{equation}
    \mathds{E}_{U}[\Tr^2(\rho_A^2)]=\frac{d_A^3 d_B+2 d_A^2 \left(d_B^2+2\right)+d_A d_B \left(d_B^2+10\right)+4 d_B^2+2}{(d_A d_B+1) (d_A d_B+2) (d_A d_B+3)}\, ,
\end{equation}

so that the final expression for \eqref{eq:Haar_Flat} is
\begin{equation}
    \mathds{E}_{U}[\mathcal{F}_A(\rho_A)]= \frac{\left(d_A^2-1\right) \left(d_B^2-1\right)}{(d_A d_B+1) (d_A d_B+2) (d_A d_B+3)} \, .
\end{equation}

The variance of this quantity, namely $\Delta^2 \mathcal{F}_A(\rho_A)=\mathds{E}_{U}[\mathcal{F}_A^2(\rho_A)]-\mathds{E}_{U}[\mathcal{F}_A(\rho_A)]^2$, can be computed in a similar manner
\begin{align}
   \Delta^2 \mathcal{F}_A(\rho_A) &=\frac{1}{(d+1)^2 (d+2)^2 (d+3)^2 (d+4) (d+5) (d+6) (d+7)}(\left(d_A^2-1\right) 
   \left(d_B^2-1\right) \\ 
   & \times  ( 2 d_A^6 d_B^4+7 d_A^5 d_B^3 \left(d_B^2-2\right)+d_A^4 d_B^2 \left(2 d_B^4-5 d_B^2-62\right)-7 d_A^3 d_B \left(2 d_B^4+11 d_B^2-26\right)\\ &+d_A^2 \left(-62 d_B^4+413 d_B^2+588\right)+2 d_A d_B \left(91 d_B^2+683\right)+84 \left(7 d_B^2+2\right) ) ) \,,
\end{align}
with $d=d_A d_B$.

Finally, the quantity that summarizes all this computation in the simple case of $d_A=2\leq d_B$ is the probability density function of the Linear Rényi spread.
It is known that for a Haar random state with $d_A=2$ and a generic $d_B$
\begin{equation}
    \ket{\rho}=\sqrt{\lambda} \ket{\phi_A \phi_B}+ \sqrt{1- \lambda} \ket{\chi_A \chi_B}\, ,
\end{equation}
the associated probability distribution is given by $P(\lambda)=3(1-2\lambda)^2$ \cite{zyczkowski2001induced}.
Since the antiflatness for this system is given by
\begin{equation}
    \mathcal{F}_A(\rho_A)=-(1-2 \lambda)^2 (\lambda-1) \lambda\,.
\end{equation}
Our goal is to determine the probability distribution of $f \equiv \mathcal{F}_A(\lambda)$, where $\lambda$ is a random variable following the distribution $P^{U}(\lambda)$. The probability distribution of $\lambda$ is then given by:
\begin{equation}
    P^{U}(f)=\sum_{\lambda: \mathcal{F}_A(\lambda)=f} \frac{P^{U}(\lambda)}{\abs{\mathcal{F}_A'(\lambda)}}\,,
\end{equation}
where $\mathcal{F}_A'(\lambda)$ is the derivative of $\mathcal{F}_A(\lambda)$ with respect to $\lambda$, which can be explicitly calculated as:
\begin{equation}
    \mathcal{F}_A'(\lambda)=(1-2 \lambda) (8 (\lambda-1) \lambda+1)\,.
\end{equation}
Furthermore, the solution to the equation $f = \mathcal{F}_A(\lambda)$ is:
\begin{equation}
    \lambda=\frac{1}{2} \left(1\pm\frac{\sqrt{1\pm\sqrt{1-16 f}}}{\sqrt{2}}\right)\,.
\end{equation}
Finally the probability distribution reads:

\begin{equation}
    P^{U}_{\mathcal{F}_A(\rho_A)}(f)=
    3 \sqrt{2} \left(\frac{\sqrt{\left(\sqrt{1-16 f}-1\right) (16 f-1)}}{1-16 f}+\sqrt{\frac{1}{\sqrt{1-16 f}}+\frac{1}{1-16 f}}\right)\, .
\end{equation}

Analogously to the anti-flatness, one can derive the probability distribution according to the haar average of the logarithmic anti-flatness for 2 qubits, which reads

\begin{equation}
\begin{split}
    P^{U}_{ \log(\Lambda_{\rho_A})}(x)&=  \Bigg(2 \left(\sqrt{9-2^{x+3}}-2^{x+1}+3\right) \left| \frac{\sqrt{\sqrt{9-2^{x+3}}-2^{x+1}+3} \left(2^{x+2} \sqrt{9-2^{x+3}}-3 \sqrt{9-2^{x+3}}+2^{x+3}-9\right)}{18-17\ 2^{x+1}+4^{x+2}}\right|\\ &-2 \left(\sqrt{9-2^{x+3}}+2^{x+1}-3\right) \left| \frac{\left(\sqrt{9-2^{x+3}}-3\right) \left(3 \sqrt{9-2^{x+3}}+2^{x+2}-9\right)}{\sqrt{\sqrt{9-2^{x+3}}+2^{x+1}-3} \left(3 \sqrt{9-2^{x+3}}+2^{x+3}-9\right)}\right| \Bigg) 2^{\frac{1}{2} (-3 x-7)}\log (8)\, .
\end{split}
\end{equation}

\section{Bures-Hall ensemble and antiflatness}

We compute the probability density function of the Linear Rènyi spread according to the Bures-Hall ensemble for the simple case of $d_A=2\leq d_B$.
It is known that for a Bures-Hall random state with $d_A=2$ and a generic $d_B$
\begin{equation}
    \ket{\rho}=\sqrt{\lambda} \ket{\phi_A \phi_B}+ \sqrt{1- \lambda} \ket{\chi_A \chi_B}\, ,
\end{equation}
the associated probability distribution is given by $P^{BH}(\lambda)=\frac{2 (1-2 \lambda)^2}{\pi  \sqrt{(1-\lambda) \lambda}}$ \cite{zyczkowski2006GeometryQuantumStates}.
Since the antiflatness for this system is given by
\begin{equation}
    \mathcal{F}_A(\rho_A)=-(1-2 \lambda)^2 (\lambda-1) \lambda\,.
\end{equation}
Our goal is to determine the probability distribution of $f \equiv \mathcal{F}_A(\lambda)$, where $\lambda$ is a random variable following the distribution $P^{BH}(\lambda)$. The probability distribution of $\lambda$ is then given by:
\begin{equation}
    P^{BH}(f)=\sum_{\lambda: \mathcal{F}_A(\lambda)=f} \frac{P^{BH}(\lambda)}{\abs{\mathcal{F}_A'(\lambda)}}\,,
\end{equation}
where $\mathcal{F}_A'(\lambda)$ is the derivative of $\mathcal{F}_A(\lambda)$ with respect to $\lambda$, which can be explicitly calculated as:
\begin{equation}
    \mathcal{F}_A'(\lambda)=(1-2 \lambda) (8 (\lambda-1) \lambda+1)\,.
\end{equation}
Furthermore, the solution to the equation $f = \mathcal{F}_A(\lambda)$ is:
\begin{equation}
    \lambda=\frac{1}{2} \left(1\pm\frac{\sqrt{1\pm\sqrt{1-16 f}}}{\sqrt{2}}\right)\,.
\end{equation}
Finally the probability distribution reads:

\begin{equation}
    P^{BH}_{\mathcal{F}_A(\rho_A)}(f)= \frac{4}{\pi  \sqrt{f(1-16 f)} }.
\end{equation}

\section{Properties of the antiflatness of the gaps}
\label{App : antiflatness of the gaps}
\begin{itemize}
    \item (Faithfulness)  If the probability vector $\vec{\lambda}$ is uniform, then $\Gamma(\vec{\lambda}) = 0$. In other words,  $\Gamma(\vec{\lambda})=0 $ if and only if $\vec{\lambda} \in \text{FLAT}$;
    \item (Positivity) $\Gamma(\vec{\lambda})\geq 0$;
    \item (Subadditivity) $\Gamma(\vec{\lambda}\otimes\vec{\lambda'}) \le \Gamma(\vec{\lambda}) + \Gamma(\vec{\lambda}') \; ; $
    \item (Convexity)
    Let $p = t p^{(1)} + (1-t)p^{(2)}$, with $t\in[0,1].$
    $\Gamma(tp^{(1)} + (1-t)p^{(2)}) \le t \Gamma(p^{(1)}) + (1-t)\Gamma(p^{(2)}).$
    Thus, \(\Gamma\) is convex.
\end{itemize}

\begin{proof}
    \textbf{Faithfulness}
    \newline
    If the probability vector is uniform, \(\lambda_i = 1/d_A\) for all \(i\), then $\Gamma = \frac{1}{N-1}\sum_{i=1}^{d_A}(0)^2 = 0$.
    Conversely, if \(\Gamma=0\), then all differences vanish,$ \lambda_1 = \lambda_2 = \cdots = \lambda_{d_A},$ so the distribution is uniform.  Thus, \(\Gamma=0\) iff the state is free.  
    Hence, \(\Gamma\) is faithful.
    \newline
    \textbf{Monotonicity under free (doubly-stochastic) operations}
    \newline
    In many resource theories (purity, thermodynamics, majorization-based theories), free operations are exactly the noisy, mixing or ``forgetful'' maps, described by doubly stochastic matrices \(D\), $p' = D p.$ A key result (Hardy–Littlewood–Pólya) states: $Dp \prec p$, i.e.\ \(p'\) is more uniform than \(p\). Since the squared finite-difference operator $\Delta_i(p) = \lambda_i - \lambda_{i+1}$ is Schur-convex, we have $\sum_{i} (\Delta_i(p'))^2 \le \sum_{i} (\Delta_i(p))^2.$ Hence $\Gamma(p') \le \Gamma(p)$. Thus, \(\Gamma\) is monotonic under all free (stochastic or doubly-stochastic) operations.
    \newline
    \textbf{Subadditivity}
    \newline
    For tensor-product spectra, the discrete difference is not additive but becomes smaller due to increasing uniformity; specifically, $p(\lambda\otimes\mu) \prec p(\lambda)\otimes p(\mu).$ So we have, $\Gamma(p_{\lambda\otimes\mu}) \le \Gamma(p_\lambda) + \Gamma(p_\mu).$
    \newline
    \textbf{Convexity}
    \newline
    Let $p = t p^{(1)} + (1-t)p^{(2)}$, with $t\in[0,1].$ Finite differences are linear; $\Delta_i(p) = t\,\Delta_i(p^{(1)}) + (1-t)\,\Delta_i(p^{(2)}).$ Since \(x\mapsto x^2\) is convex, $(\Delta_i(p))^2 \le t(\Delta_i(p^{(1)}))^2 + (1-t)(\Delta_i(p^{(2)}))^2.$ 
    \newline
    Summing and multiplying by \(\frac{1}{N-1}\), $\Gamma(tp^{(1)} + (1-t)p^{(2)}) \le t \Gamma(p^{(1)}) + (1-t)\Gamma(p^{(2)}).$
    Thus, \(\Gamma\) is convex.

    Another way to see it is by rewriting $\Gamma$ as a quadratic form. We consider the forward difference matrix $B \in \mathbb{R}^{(d_A-1) \times d_A}$, defined as $(B \lambda)_{i} = \lambda_i - \lambda_{i+1}$ for all $i = 1, \dots, d_A -1$ . To get a quadratic form, notice that $L = B^TB$ is nothing but the discrete laplacian. In particular, we have 
    \begin{equation}
        \sum_{i=1}^{d_A - 1}(\lambda_i - \lambda_{i+1})^2 = \lambda_1^2 + 2 \sum_{i=2}^{d_A - 1} \lambda_i^2 + \lambda_{d_A}^2 - 2 \sum_{i=1}^{d_A - 1} \lambda_i \lambda_{i+1} \; \ ,
    \end{equation}
    so al we need to fin is the matrix L s.t. $\Gamma = \frac{1}{d_A - 1} \lambda^T L \lambda = \frac{1}{d_A - 1} \sum_{ij} L_{ij} \lambda_i \lambda_j$ . The matrix we are looking for is build in such a way, $H_{11} =1 \; ; \; H_{ii} = 2 \; \; \rm{for} \; 2 \le i \le d_A-1 \; ; \; H_{d_A, d_A} = 1$ while the off-diagonal elements are $H_{i, i+1} = H_{i+1, i} = -1$ for $i = 1, \dots , d_A-1$. So the Hessian of $\Gamma$ you can immediately see is 2L which is positive semidefinite, and this concludes the proof.
\end{proof}

\end{document}